\documentclass[conference]{IEEEtran}
\IEEEoverridecommandlockouts
\usepackage{boldline}
\usepackage{booktabs}
\usepackage{cite}
\usepackage{amsmath,amssymb,amsfonts}
\usepackage{amsthm}
\usepackage{tcolorbox}
\usepackage{tabularx}
\usepackage{tipa}
\usepackage{makecell}
\usepackage{eurosym}
\usepackage{textcomp}
\usepackage{amsmath}
\usepackage{tikz}
\usepackage{caption}
\usepackage{eurosym}
\usepackage{subfig}
\tikzset{edge/.style={-stealth, thick}}
\usetikzlibrary{positioning, calc}

\usepackage[vlined,ruled,linesnumbered]{algorithm2e}
\usepackage{hyperref}
\hypersetup{
    colorlinks=true,
    linkcolor=blue,
    filecolor=magenta,      
    urlcolor=red,
    }
\usepackage{graphicx}
\newtheorem{theorem}{Theorem}
\newtheorem{lemma}{Lemma}
\newtheorem{definition}{Definition}
\newtheorem{example}{Example}

\usepackage{textcomp}
\usepackage{xcolor}
\def\BibTeX{{\rm B\kern-.05em{\sc i\kern-.025em b}\kern-.08em
    T\kern-.1667em\lower.7ex\hbox{E}\kern-.125emX}}
\begin{document}

\title{Tag-specific Regret Minimization Problem in Outdoor Advertising}

% \author{\IEEEauthorblockN{Dildar Ali}
% \IEEEauthorblockA{\textit{Department of Computer Science and Engineering} \\
% \textit{Indian Institute of Technology Jammu}\\
% NH-44, Jagti, Jammu and Kashmir, 181221, India \\
% 2021rcs2009@iitjammu.ac.in}
% \and
% \IEEEauthorblockN{Abishek Salaria}
% \IEEEauthorblockA{\textit{Department of Information Technology} \\
% \textit{National Institute of Technology Srinagar}\\
% Hazratbal, Srinagar, Jammu and Kashmir, 190006, India\\
% 2023nitsgr185@nitsri.ac.in}
% \and
% \IEEEauthorblockN{Ansh Jasrotia}
% \IEEEauthorblockA{\textit{Department of Information Technology} \\
% \textit{National Institute of Technology Srinagar}\\
% Hazratbal, Srinagar, Jammu and Kashmir, 190006, India\\
% 2023nitsgr161@nitsri.ac.in}
% \and
% \IEEEauthorblockN{ Suman Banerjee}
% \IEEEauthorblockA{\textit{Department of Computer Science and Engineering} \\
% \textit{Indian Institute of Technology Jammu}\\
% NH-44, Jagti, Jammu and Kashmir, 181221, India \\
% suman.banerjee@iitjammu.ac.in }
% }
\author{
\IEEEauthorblockN{
Dildar Ali\IEEEauthorrefmark{1},
Abishek Salaria\IEEEauthorrefmark{2},
Ansh Jasrotia\IEEEauthorrefmark{2},
Suman Banerjee\IEEEauthorrefmark{1}
}
\IEEEauthorblockA{\IEEEauthorrefmark{1}
Indian Institute of Technology Jammu, India, \IEEEauthorrefmark{2}National Institute of Technology Srinagar, India\\
\{2021rcs2009,suman.banerjee\}@iitjammu.ac.in, \{2023nitsgr185,2023nitsgr161\}@nitsri.ac.in}
}

\maketitle
\begin{abstract}
Recently, out-of-home advertising has become popular as one of the most effective marketing techniques, due to its higher return on investment. E-commerce houses approach the influence provider to achieve effective advertising through their tags (advertising content), influence demand, and budgets. The influence provider's goal will be to make proper tag allocations and meet the required influence demand within the budget constraint. If the influence provider provides the required influence, advertisers will make full payment; otherwise, a partial payment based on the fraction of influence (s)he provides. However, if the influence provider provides more influence than is demanded from advertisers, they will not receive an additional incentive. So, in both cases, the loss will occur on the influence provider, and this loss is formalized as \texttt{Regret}. The influence provider's goal will be to minimize the total regret. In this paper, we formalize this problem as a combinatorial optimization problem and refer to it as the \textsc{Tag-specific Regret Minimization in Outdoor Advertising (TRMOA)}. We show that TRMOA is NP-hard and inapproximable within a constant factor. The regret model we consider is non-monotone and non-submodular, and for this simple greedy approach is ineffective. We introduce a fairness-aware greedy round-robin approach that reduces regret and provides balanced allocation across advertisers. To improve scalability and solution quality, we further introduce randomized greedy and randomized local search algorithms. We have experimented with all proposed and baseline methodologies using real-world trajectory and billboard datasets to demonstrate the effectiveness and efficiency of the solution methodologies.
\end{abstract}
% \begin{abstract}
% Recently, out-of-home advertising has become popular marketing techniques, due to its higher return on investment. E-commerce houses approach the influence provider to achieve effective advertising through their tags (advertising content), influence demand, and budgets. The influence provider's goal will be to make proper tag allocations, meet the required influence demand within the budget constraint, and minimize total regret. We formalize this as a combinatorial optimization problem and refer to it as \textsc{Tag-specific Regret Minimization in Outdoor Advertising (TRMOA)}. We show that TRMOA is NP-hard and inapproximable within a constant factor. The regret model we consider is non-monotone and non-submodular, and the simple greedy approach is ineffective. We introduce a fairness-aware greedy round-robin approach that reduces regret with balanced allocation across advertisers. To improve, we also introduce randomized greedy and local search algorithms. We have experimented with all the methodologies using real-world trajectory and billboard datasets to show the effectiveness and efficiency of the solution methodologies.
% \end{abstract}
\begin{IEEEkeywords}
Regret Minimization, Billboard Slot, Tags, Advertiser, Influence Provider, Out-of-Home Advertisement.
\end{IEEEkeywords}

\section{Introduction}
\IEEEPARstart{I}{nfluence} maximization problems are studied extensively in the area of advertisement. There are several advertising methods, such as social media, billboards, television, and newspapers, to maximize the influence. Among them, billboard advertising is one of the most effective out-of-home (OOH) advertising techniques, offering higher returns on investment. In real life, if advertisers need influence, they buy it from an influence provider. The influence provider provides influence to the customer and earns profit. Most internet giants like Google and Facebook use their platforms to advertise products as influence providers and gain huge profits. Most of the literature on advertising focuses on maximizing influence through social networks and billboard advertising from the customer's perspective. There are a limited number of studies from the influence providers' end. Deviating from the existing literature, we study this problem from the perspective of the influence provider, aiming to maximize its profit. In the rest of the paper, we interchangeably use the following terms: influence provider and host, ads and tags. Since host-based influence maximization is harder to optimize than customer-based optimization, as we will demonstrate shortly. For this work, we chose a tag-specific influence model and describe our problem in OOH advertising, which has been extensively studied in the existing literature \cite{zhang2018trajectory, ali2022influential, zhang2019optimizing}. The general applicability of this paper is described in a later section.
% \begin{figure}[t]
%     \centering
%     \includegraphics[width=0.85\linewidth]{Tags.drawio.pdf}
%     \caption{TRMOA framework}
%     \label{fig:framework}
% \end{figure}
% \vspace{-0.1in}
\paragraph{Our Observations} 
In the existing literature on the OOH scenario, all share two types of objectives: (a) helping the single or multiple advertisers maximize their total influence \cite{ali2022influential,ali2023influential,aslay2017revenue,lai2017improved, zhang2018trajectory,zhang2019optimizing,zhang2020towards}. (b) minimize the total regret and maximize profit of an influence provider in a multi-advertising setting \cite{zhang2021minimizing,sharma2024minimizing,ali2023efficient,ali2024regret,ali2024toward,ali2024minimizing}. All existing literature uses a user-based influence model. However, a more challenging aspect for real-world influence providers is managing multiple advertisers on a daily basis. Advertisers come with a budget, demand, and a set of tags. The influence provider's task is to select a subset of tags and allocate tag-specific influence to the advertiser within a given budget; otherwise, the advertiser will not pay the full payment. 

\paragraph{Our Problem}
 Motivated by this observation, we study the slot allocation problem from the perspective of the influence provider. The influence provider owns many billboard slots and assigns them to advertisers. Each advertiser aims to obtain a subset of slots whose total influence meets her demand. The influence provider earns the highest payment only when all advertisers’ demands are fully satisfied. Based on this setting, we adopt a \textit{Regret Model} \cite{zhang2019optimizing} to guide the influence provider's allocation decisions.
 We identify two sources of regret that directly impact profit: unsatisfied regret and excessive regret. Unsatisfied regret occurs when the influence provider fails to meet an advertiser's demand, while excessive regret occurs when an advertiser is assigned more influence than required. These two types of regret are independent and may occur simultaneously, as illustrated in Example \ref{Exalmpe:1}. 

\begin{example}\label{Exalmpe:1}
Suppose an influence provider owns five billboard slots $\mathcal{BS} = \{s_1, s_2, s_3, s_4, s_5\}$, where the influence of each slots is given in Table~\ref{tab:billboard-influence-x}. Three advertisers $\mathcal{A} = \{a_1, a_2, a_3\}$ approach the influence provider, each specifying a demanded influence $\mathcal{I}_i$ and a payment $\mathcal{L}_i$, as shown in Table~\ref{tab:advertiser-contract-x}. To serve these advertisers, the influence provider assigns each advertiser a disjoint set of billboards $\mathcal{S}_i$.

\begin{table}[!h]
\centering
\begin{minipage}{0.48\columnwidth}
\centering
\caption{Billboard Slot Info.}
\label{tab:billboard-influence-x}
\begin{tabular}{|c|c|c|c|c|c|}
\hline
$\mathcal{BS}$ & \(s_1\) & \(s_2\) & \(s_3\) & \(s_4\) & \(s_5\) \\
\hline
\(\mathcal{I}(s_i)\) & 4 & 5 & 3 & 6 & 2 \\
\hline
\end{tabular}
\end{minipage}\hfill
\begin{minipage}{0.48\columnwidth}
\centering
\caption{Advertiser Info.}
\label{tab:advertiser-contract-x}
\begin{tabular}{|c|c|c|c|}
\hline
$\mathcal{A}$ & \(a_1\) & \(a_2\) & \(a_3\) \\
\hline
\(\mathcal{I}_i\) & 6 & 7 & 8 \\
\hline
\(\mathcal{L}_i\) & \$9 & \$12 & \$18 \\
\hline
\end{tabular}
\end{minipage}
\end{table}

The first deployment Strategy-I is shown in Table~\ref{tab:strategy1-x}. The difference between the supply and demand of influence is calculated as $\mathcal{X} = \mathcal{I}(\mathcal{S}_i)-\mathcal{I}_i$. Advertiser $a_1$ is over-satisfied, leading to excessive regret, while advertisers $a_2$ and $a_3$ are under-satisfied, causing unsatisfied regret. An alternative deployment Strategy-II is shown in Table~\ref{tab:strategy2-x}. Although minor deviations exist, this strategy results in a lower total regret and is therefore preferable.

%%%%%%%%%%%%%%%%%%%%%%%%%%%%%%%%%
\begin{table}[!h]
\centering
\begin{minipage}{0.46\columnwidth}
\centering
\caption{Strategy I}
\label{tab:strategy1-x}
\scriptsize
\begin{tabular}{|c|c|c|c|}
\hline
$\mathcal{A}$ & \(a_1\) & \(a_2\) & \(a_3\) \\
\hline
$S_i$ & $s_2, s_5$ & $s_4$ & $s_1, s_3$ \\
\hline
Satisfy & Y & N & N \\
\hline
$\mathcal{X}$ & 1 & -1 & -1 \\
\hline
\end{tabular}
\end{minipage}\hfill
\begin{minipage}{0.46\columnwidth}
\centering
\caption{Strategy II}
\label{tab:strategy2-x}
\scriptsize
\begin{tabular}{|c|c|c|c|}
\hline
$\mathcal{A}$ & \(a_1\) & \(a_2\) & \(a_3\) \\
\hline
$\mathcal{S}_i$ & $s_1, s_5$ & $s_2, s_3$ & $s_4$ \\
\hline
Satisfy & Y & Y & N \\
\hline
$\mathcal{X}$ & 0 & 1 & -2 \\
\hline
\end{tabular}
\end{minipage}
\end{table}
\end{example}

Note that the influence provider may suffer only unsatisfied regret ($a_3$ in Table \ref{tab:strategy2-x}), only excessive regret ($a_2$ in Table \ref{tab:strategy2-x}), no excessive or unsatisfied regret ($a_1$ in Table \ref{tab:strategy2-x}). To simplify our illustration here, we calculate influence $\mathcal{I}(\mathcal{S})$ simply by aggregating $\mathcal{I}(s_{i})$, however, in the real case influence $\mathcal{I}(\mathcal{S})$ is calculated by the influence model described in Definition \ref{Def:Inf_slot}. We formulate the problem as the \underline{T}ag-specific \underline{R}egret \underline{M}inimization in \underline{O}utdoor \underline{A}dvertising (TRMOA) problem.

% \paragraph{Hardness}

% \paragraph{Our Solutions}
% To address this problem, we first reduce the search space using Adaptive Influential Tag Selection (AITS). We introduce a fairness-aware greedy round-robin algorithm that prioritizes regret reduction while balancing allocations across advertisers. To improve scalability and solution quality, we further propose randomized greedy and randomized local search algorithms.

\paragraph{General Applicability}
Although motivated by outdoor advertising, our framework applies to a wide range of influence-driven allocation problems, including online advertising, recommendation systems, and resource provisioning platforms, where both under- and over-allocation lead to loss.

\paragraph{Our Contributions}
In summary, this work makes the following key contributions:
\begin{itemize}
    \item We formulate the Tag-Specific Regret Minimization in Outdoor Advertising (TRMOA) problem from the perspective of the influence provider, explicitly modeling both unsatisfied and excessive influence.
    \item We establish the computational hardness of the problem, highlighting the challenges of finding optimal solutions.
    \item We introduce an adaptive method for selecting influential tags to reduce computational overhead.
    \item We propose three efficient algorithms: two greedy-based methods and a randomized local search approach that balances regret minimization, fairness, and scalability.
    \item We validate the proposed solutions through extensive experiments on real-world billboard and trajectory datasets, showing their effectiveness and practical relevance.
\end{itemize}

\paragraph{Organization of the Paper.}
The rest of this paper is organized as follows. Section~\ref{Sec:RW} reviews the related literature. Section~\ref{Sec:problem-definition} presents the necessary background and formally defines the problem. The proposed methodologies are described in Section~\ref{Sec:methodologies}. Section~\ref{Sec:experimental-evaluation} reports the experimental setup and results. Finally, Section~\ref{Sec:Conclusion} concludes the paper and outlines directions for future research.

\section{Related Work}\label{Sec:RW}
\subsection{Influence Maximization}
With the increasing availability of handheld electronic devices, large-scale trajectory data has become easier to obtain. For this reason, the influence maximization problem in outdoor advertising has attracted the attention of researchers \cite{wang2021survey,zheng2015trajectory,zhang2020towards,tardos2003maximizing,sharma2024minimizing}. Most influence maximization work is studied from the perspective of social network advertising \cite{tardos2003maximizing,sharma2024minimizing,banerjee2019maximizing,banerjee2020survey}. In billboard advertising \cite{zhang2019optimizing, zhang2020towards}, the authors adopt slightly different influence measures to address diverse business needs from the advertiser's perspective. In these models, a billboard influences an audience only when it lies sufficiently close to the audience's travel trajectory. More specifically, studies \cite{zhang2020towards,zhang2018trajectory,ali2022influential} measure influence using traffic volume while accounting for overlap, whereas \cite{zhang2019optimizing} avoids double-counting users who encounter the same advertisement on multiple billboards by using impression counts to trigger an influence indicator. The key distinction between these works and our TRMOA lies in the objective. We focus on the host, which serves multiple advertisers simultaneously, aiming to minimize the host's overall loss while providing just enough influence to satisfy each advertiser's demand. In contrast, prior studies focus solely on maximizing an individual advertiser's influence. Some loosely related work also exists. Studies \cite{guo2018efficient,liu2016smartadp} develop visualization tools to assist advertisers in billboard selection, while \cite{lai2017improved} leverages social media data to enhance the effectiveness of targeted OOH advertising.

\subsection{Regret Minimization}
In the Social Viral Marketing (SVM) setting, influence-provider platforms such as Twitter and Facebook promote advertisements on social networks and receive payments from advertisers, typically under a cost-per-engagement (CPE) model. In this model, advertisers pay the platform for each click or engagement generated by their advertisements \cite{facebook_business_cpe,haven2007marketing,wang2020social}. Recent studies on SVM, with a prominent line of research focusing on minimizing the regret of the influence provider \cite{aslay2015viral,aslay2017revenue,banerjee2019maximizing}. Here, each advertiser specifies a required level of influence. If the achieved influence falls short of this demand, the advertiser pays only for the actual influence obtained. Conversely, if the achieved influence exceeds the demand, the advertiser does not pay for the surplus. Consequently, both under-delivery and over-delivery of influence result in regret for the influence provider \cite{aslay2014viral,ali2024regret,ali2024toward,ali2024minimizing}.   Other studies under the CPE model focus on revenue maximization, where total revenue is defined as the aggregate payment collected from all advertisers. In this setting, each advertiser’s payment equals the sum of the CPE values associated with the activated users \cite{aslay2015viral,banerjee2019maximizing}. In the advertising literature, regret minimization has primarily been studied in two contexts: SVM and the Minimizing Regret for the OOH Advertising Market (TRMOA) problem. These two settings differ fundamentally in both business models and influence mechanisms. First, the payment structures differ. In SVM, payments are engagement-based, and in some cases, the influence provider may receive no payment if the advertiser’s demand is not satisfied \cite{zhang2021minimizing,ali2024minimizing}. Second, the influence models are distinct. In TRMOA, influence is driven by geographical proximity; users are influenced only if they physically encounter a billboard, and influence does not propagate between users \cite{ali2022influential,zhang2021minimizing,ali2024minimizing,ali2024toward}. In contrast, SVM relies on probabilistic diffusion models such as Independent Cascade and Linear Threshold, where influence spreads through social connections in the network \cite{barbieri2013topic,bian2020efficient,guo2020influence}. As a result, SVM research primarily focuses on estimating influence diffusion in virtual social networks, whereas TRMOA centers on spatial interactions between users and billboards. These differences make the underlying optimization problems in the two settings fundamentally distinct.

\section{Preliminaries and Problem Formulation}\label{Sec:problem-definition}
% In this section, we discuss the background of the problem and formally define it. For any positive integer $x$, $[x]$ denotes the set $\{ 1,2, \ldots, k\}$ and for any two positive integers $x$, $y$ with $x<y$ denotes the set $\{x, x+1, \ldots, y\}$. Initially, we start by describing the notion of a set function and its properties.

\subsection{Trajectory, Billboard and Tag Database}
A trajectory database records the locations of moving individuals over time. In our setting, the trajectory database $\mathcal{D}$ consists of tuples of the form $(\mathcal{U}^{'}, \texttt{loc}, [t_1, t_2])$, indicating that the group of individuals $\mathcal{U}^{'}$ was present at location $\texttt{loc}$ during the time interval $[t_1, t_2]$. The database $\mathcal{D}$ contains $m$ such tuples. For any tuple $d \in \mathcal{D}$, let $t_u$ denote the set of individuals associated with that tuple. Let $\mathcal{U} = {u_1, u_2, \ldots, u_r}$ represent the set of all individuals covered by the trajectory database. Formally, $\mathcal{U}$ is defined as the union of all individuals appearing in the tuples of $\mathcal{D}$, that is, $\mathcal{U} = \bigcup_{t \in \mathcal{D}} t_u$, meaning that an individual belongs to $\mathcal{U}$ if they appear in at least one tuple of the database. Similarly, let $\mathcal{L}$ denote the set of locations covered by the trajectory database. A location belongs to $\mathcal{L}$ if it appears in at least one tuple of $\mathcal{D}$.
Finally, let $[T_1, T_2]$ denote the overall time span of the trajectory database. For every tuple $d \in \mathcal{D}$, its associated time interval $[t_1, t_2]$ lies within this global duration, i.e., $[t_1, t_2] \subseteq [T_1, T_2]$.

\par A billboard database $\mathcal{B}$ contains information about billboards deployed across a city. Each record in $\mathcal{B}$ is represented as a tuple $(b_{id}, \texttt{loc}, \texttt{slot\_duration}, \texttt{cost})$, where $b_{id}$ denotes unique id of the billboard, $\texttt{loc}$ indicates its location, and $\texttt{slot\_duration}$ specifies the duration and cost of each advertising slot. Consider there are $\ell$ billboards $\bar{\mathcal{B}}=\{b_1, b_2, \ldots, b_{\ell}\}$ and each one of them is running for the interval $[T_{1}, T_{2}]$ and assume that $T=T_{2}-T_{1}$. Also, assume that all billboards are allocated slot-wise for display advertising, and that the duration of each slot is fixed and denoted by $\Delta$. Let, $\mathcal{BS}$ denotes the set of all billboard slots; i.e.; $\mathcal{BS}=\{(b_i,[t_j,t_j+\Delta]): i \in [k] \text{ and } j \in \{T_{1}, 1 + \Delta, 1+ 2 \Delta, \ldots, T_{2}- \Delta +1\}, cost \}$. Now, if $|\mathcal{BS}|= h $ then $h =k \cdot \frac{T}{\Delta}$. Next, the tag database contains information about the contents of advertisements provided by commercial clients. Specifically, the tag database $\mathcal{T}$ consists of tags, each identified by a unique ID. In our setting, each billboard in $\mathcal{B}$ is associated with a trajectory database, and we restrict attention to billboards whose locations coincide with at least one location appearing in the trajectory data. We now proceed to describe the billboard advertisements.

\subsection{Billboard Advertisements}
Given a set of billboard slots set $\mathcal{S}$, the influence of $\mathcal{S}$ can be denoted as $\mathcal{I}(\mathcal{S})$. As reported in the previous work, there are several ways to calculate the influence on influence maximization \cite{zhang2018trajectory,zhang2020towards, ali2022influential,ali2023influential} and regret minimization \cite{zhang2021minimizing,ali2023efficient,ali2024regret}, in billboard advertisements. We adopt the same measurements as those used in previous studies. Next, the definition of billboard slot influence is stated in Definition \ref{Def:Inf_slot}.

\begin{definition}[Billboard Slot Influence]\label{Def:Inf_slot}
Given a subset of billboard slots $\mathcal{S} \in \mathcal{BS}$, the influence of $\mathcal{S}$ is denoted by $\mathcal{I}(\mathcal{S})$ and defined as the aggregate influence probability over all users in the trajectory database in Equation \ref{Eq:inf}.
\begin{equation} \label{Eq:inf}
\mathcal{I}(\mathcal{S})= \underset{u \in \mathcal{D}}{\sum} [1-  \underset{b \in \mathcal{S}}{\prod}(1-Pr(u,b))]
\end{equation}  
\end{definition}
Here, the influence function $\mathcal{I}(\cdot)$ maps each subset of billboard slots to its corresponding influence value, i.e., $\mathcal{I} : 2^{\mathcal{BS}} \rightarrow \mathbb{R}^{+}_{0}$, with $\mathcal{I}(\emptyset) = 0$. 
\begin{lemma}\label{lem:inf}
The influence function $\mathcal{I}(\cdot)$ is non-negative, monotone, and submodular.
\end{lemma}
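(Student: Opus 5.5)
We reduce everything to a single user, since $\mathcal{I}(\mathcal{S})$ is a sum over users $u\in\mathcal{D}$ of the per-user function
\[
f_u(\mathcal{S}) = 1-\prod_{b\in\mathcal{S}}\bigl(1-Pr(u,b)\bigr),
\]
and each of the three properties is preserved under non-negative sums. Throughout we use only that each $Pr(u,b)\in[0,1]$, so that every factor $1-Pr(u,b)$ also lies in $[0,1]$. We adopt the convention that the empty product equals $1$, which gives $f_u(\emptyset)=0$ and hence $\mathcal{I}(\emptyset)=0$, consistent with the statement after Definition~\ref{Def:Inf_slot}.

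\textbf{Non-negativity.} The product $\prod_{b\in\mathcal{S}}(1-Pr(u,b))$ is a product of numbers in $[0,1]$, so it lies in $[0,1]$. Therefore $f_u(\mathcal{S})\in[0,1]$, and summing over users gives $\mathcal{I}(\mathcal{S})\ge 0$.

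\textbf{Monotonicity.} For $\mathcal{S}\subseteq\mathcal{BS}$ and $s\notin\mathcal{S}$, write $P_u(\mathcal{S})=\prod_{b\in\mathcal{S}}(1-Pr(u,b))$. The marginal gain is
\[
f_u(\mathcal{S}\cup\{s\})-f_u(\mathcal{S}) = P_u(\mathcal{S}) - P_u(\mathcal{S})\bigl(1-Pr(u,s)\bigr) = Pr(u,s)\,P_u(\mathcal{S}).
\]
This is non-negative. Since the gain from adding any single slot is non-negative, induction over the elements of $\mathcal{T}\setminus\mathcal{S}$ gives $f_u(\mathcal{S})\le f_u(\mathcal{T})$ whenever $\mathcal{S}\subseteq\mathcal{T}$. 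Summing over $u$ yields monotonicity of $\mathcal{I}$.

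\textbf{Submodularity.} We use the diminishing-returns characterization. Take $\mathcal{S}\subseteq\mathcal{T}\subseteq\mathcal{BS}$ and $s\notin\mathcal{T}$. Because $\mathcal{T}\setminus\mathcal{S}$ is disjoint from $\mathcal{S}$,
\[
P_u(\mathcal{T}) = P_u(\mathcal{S})\prod_{b\in\mathcal{T}\setminus\mathcal{S}}\bigl(1-Pr(u,b)\bigr) \le P_u(\mathcal{S}),
\]
where the inequality holds because each extra factor is at most $1$. Multiplying by $Pr(u,s)\ge 0$ and using the marginal-gain formula above gives
\[
f_u(\mathcal{T}\cup\{s\})-f_u(\mathcal{T}) \le f_u(\mathcal{S}\cup\{s\})-f_u(\mathcal{S}).
\]
Summing over $u$ proves that $\mathcal{I}$ is submodular.

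\textbf{Main obstacle.} The argument is elementary once the marginal gain is written in the closed form $Pr(u,s)\,P_u(\mathcal{S})$. The only real care needed is in two places. First, the factorization of $P_u(\mathcal{T})$ must use that $\mathcal{T}\setminus\mathcal{S}$ is disjoint from $\mathcal{S}$, so that no factor is counted twice. Second, all bounds rely on the standing assumption $Pr(u,b)\in[0,1]$; if some $Pr(u,b)$ could exceed $1$, a factor $1-Pr(u,b)$ would be negative and both monotonicity and submodularity could fail.
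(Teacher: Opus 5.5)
The paper gives no proof of this lemma. It is stated immediately after Definition~\ref{Def:Inf_slot} and left to the prior billboard-influence literature whose model the paper adopts, so there is no argument of the authors' to compare against. Judged on its own, your proof is correct and complete, and it is the standard argument used in that literature.

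The closed form $Pr(u,s)\,P_u(\mathcal{S})$ for the per-user marginal gain, valid because $s\notin\mathcal{S}$, does the work. It gives monotonicity immediately and reduces diminishing returns to $P_u(\mathcal{T})\le P_u(\mathcal{S})$. Summing over users then preserves all three properties. You were also right to state the assumption $Pr(u,b)\in[0,1]$ explicitly; the paper never defines $Pr(u,b)$, so this assumption is only implicit there.

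A cosmetic point: avoid reusing $\mathcal{T}$ for the larger slot set, since the paper reserves $\mathcal{T}$ for the tag set.

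A possibly useful remark: the identical computation, with $Pr(u,b)$ replaced by $Pr(u,b\mid\mathcal{T}')\in[0,1]$, shows that the tag-specific function $\mathcal{I}(\cdot\mid\mathcal{T}')$ of Definition~\ref{Def:TSIBS} is non-negative, monotone and submodular in its slot argument for every fixed tag set $\mathcal{T}'$. It says nothing about behaviour in the tag argument, where the paper never specifies $Pr(u,b\mid\mathcal{T}')$.
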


As discussed earlier, when a person sees an advertisement on a billboard slot, they become influenced with some probability. Most existing studies \cite{zhang2019optimizing,zhang2020towards,zhang2021minimizing,ali2023influential} assume that this probability is independent of the advertisement content, and therefore ignore the effect of the specific advertisement (e.g., its tag). However, in reality, influence depends strongly on the ad's content. For instance, a person who is not interested in alcohol or soft drinks is unlikely to be influenced by such advertisements. On the other hand, if the same person is interested in movies, then movie-related advertisements are much more meaningful, as they may motivate him to watch a film after being influenced by the promotion.

\par Therefore, in this paper, we explicitly treat tags as a key factor influencing user behavior. Let $\mathcal{T}_{u_i}$ denote the set of tags associated with user $u_i$. Considering all users in $\mathcal{U}$, the overall set of relevant tags is given by $\mathcal{T} = \underset{u_i \in \mathcal{U}}{\bigcup} \mathcal{T}_{u_i}$ and each tag $x \in \mathcal{T}$, we denote the influence probability by $Pr(u \mid x)$, which is formally defined in Definition~\ref{Def:Tagprobability}.

\begin{definition}[Tag-Specific Influence]\label{Def:Tagprobability}
For a user $u \in \mathcal{U}$ and a subset of tags $\mathcal{T}^{'} \subseteq \mathcal{T}$, the tag-specific influence probability is
\begin{equation}\label{Eq:Tagspecificprobability}
Pr(u \mid \mathcal{T}^{'}) = 1 - \prod_{x \in \mathcal{T}^{'}} \big(1 - Pr(u \mid x)\big).
\end{equation}
\end{definition}

Next, we define the tag-specific influence of a set of billboard slots in Definition~\ref{Def:TSIBS}.

\begin{definition}[Tag-Specific Influence of Billboard Slots]\label{Def:TSIBS}
For a set of billboard slots $\mathcal{S} \subseteq \mathcal{BS}$ and tags $\mathcal{T}^{'} \subseteq \mathcal{T}$, the tag-specific influence is
\begin{equation}\label{Eq:TSIBS}
\mathcal{I}(\mathcal{S} \mid \mathcal{T}^{'})
= \sum_{u \in \mathcal{U}}( 1 - \prod_{b \in \mathcal{S}} \big(1 - Pr(u,b \mid \mathcal{T}^{'})\big)).
\end{equation}
\end{definition}

The function $\mathcal{I}(\cdot)$ jointly captures the impact of selected tags and billboard slots, mapping each pair $(\mathcal{T}^{'}, \mathcal{S})$ to a non-negative value, i.e., $\mathcal{I} : 2^{\mathcal{T}} \times 2^{\mathcal{BS}} \rightarrow \mathbb{R}^{+}_{0}$.

\subsection{Problem Definition}
Assume there are $n$ advertisers $\mathcal{A}={a_1,a_2,\ldots,a_n}$ and one influence provider $\mathcal{P}$. Each advertiser $a_i$ submits a campaign request specifying influence demand $\sigma_i$ over tags $x_1,x_2,\ldots,x_k$ with payment $u_i$. The influence provider maintains an advertiser database $\mathbb{A}=(a_i,\sigma_i,\langle x_{1},x_{2}, \ldots, x_{k}\rangle,u_i)$ for all $i\in[n]$.
Let $\mathcal{W}_{a_i}$ denote the set of slots allocated to advertiser $a_i$. As per the payment rule, if $\mathcal{I}(\mathcal{W}_{a_i}) \ge \sigma_i$, then $\mathcal{P}$ receives the full payment $u_i$; otherwise, only a partial (pro-rata) payment is received. We highlight two important cases:
\begin{itemize}
\item \textbf{Unsatisfied Regret:} If the provided influence is less than the required demand $\mathcal{I}(\mathcal{W}_{a_i}) < \sigma_i$, the provider receives a partial payment, resulting in a loss. We call this \textit{Unsatisfied Regret}.
\item \textbf{Excessive Regret:} If more influence is provided than required $\mathcal{I}(\mathcal{W}_{a_i}) > \sigma_i$, the extra influence does not generate additional revenue. Moreover, that excess influence could have been reassigned to another advertiser whose demand is unsatisfied. We call this \textit{Excessive Regret}.
\end{itemize}

Now, combining the two types of loss, we formally define the regret model in Definition \ref{Def:Regret-model}.

\begin{definition}[The Regret Model] \label{Def:Regret-model}
For advertiser $a_i \in \mathcal{A}$ with allocated billboard slots $\mathcal{W}_{a_i}$, the regret is denoted by $\mathcal{R}_{a_i}$ and defined as
 \[
    \mathcal{R}_{a_i} = 
\begin{cases}
    u_{i} \cdot (1- \delta \cdot \frac{\mathcal{I}(\mathcal{W}_{a_i})}{\sigma_{i}}),& \text{if }  \sigma_{i} > \mathcal{I}(\mathcal{W}_{a_i}) \\
    u_{i} \cdot \frac{\mathcal{I}(\mathcal{W}_{a_i}) - \sigma_{i}}{\sigma_{i}},              & \text{otherwise}
\end{cases}
\]
Here, $\frac{\mathcal{I}(\mathcal{W}_{a_i})}{\sigma_i}$ represents the satisfied fraction of demand, and $\delta$ is the penalty ratio for unsatisfied demand.
\end{definition}
% The proposed solution methods are independent of the specific value of $\delta$; this is further discussed in Section~\ref{subsec:otherparameter}.

\paragraph{\textbf{Allocation of Slots}}
The objective of this work is to allocate slots so that the tag-specific influence demand of advertisers is satisfied and the total regret of an influence provider is minimized. We consider the following two constraints:

\textit{Tag-specific Influence Demand:}
This constraint says that for any advertiser $a_{i} \in \mathcal{A}$, the demand for tag-specific influence should be satisfied. 

\textit{Disjointness Constraints:} For any two advertisers $a_{x}$ and $a_{y}$, let $\mathcal{W}_{a_x}$ and $\mathcal{W}_{a_y}$ be the allocations. The disjoint constraint says $\mathcal{W}_{a_x} \cap \mathcal{W}_{a_y} = \emptyset$. 

Now, let $\mathcal{G}$ be the set of all possible allocations for $n$ advertisers. Let a possible allocation be denoted as $\{\mathcal{W}_{a_1},\mathcal{W}_{a_2}, \ldots, \mathcal{W}_{a_n}\}$. Now, we define it in Definition \ref{Def:FA}.

\begin{definition}[Feasible Allocation of Slots]\label{Def:FA}
An allocation of slots to advertisers is feasible if it satisfies the Tag-specific influence demand and disjointness constraints.
\end{definition}

As discussed earlier, our main objective is to minimize the total regret of an influence provider. So, we define the notion of total regret in Definition \ref{Def:TR}. 

\begin{definition}[Total Regret]\label{Def:TR}
  The total regret associated with an allocation $\mathcal{Z}= \{\mathcal{W}_{a_1},\mathcal{W}_{a_2}, \ldots, \mathcal{W}_{a_n}\}$ can be denoted as $\mathcal{R}(\mathcal{Z})$ and defined as the aggregated regret associated with all the advertisers in Equation \ref{Eq:Total_Regret}.
\begin{equation}\label{Eq:Total_Regret}
\mathcal{R}(\mathcal{Z})= \underset{a_i \in \mathcal{A}}{\sum}  \mathcal{R}_{a_i}
\end{equation} 
\end{definition}

Now, we define our TRMOA problem in Definition \ref{Def:Problem}.

\begin{definition}[Tag-specific Regret Minimization Problem]\label{Def:Problem}
Given billboard slots $\mathcal{BS}$, Trajectory information $\mathcal{D}$, and advertisers $\mathcal{A}$, the goal of this problem is to create an allocation such that the total regret of an influence provider is minimized. Mathematically, it can be written as,

\begin{equation}
\mathcal{Z}^{OPT} = \underset{\mathcal{Z}_{i} \in \mathcal{G}(\mathcal{Z})}{argmin} \ \mathcal{R}(\mathcal{Z}_{i})
\end{equation}
\end{definition}
 Now, from a computational view, the problem can be written as follows:

\begin{center}
\begin{tcolorbox}[title=\textsc{TRMOA Problem}, width=8.5cm]
\textbf{Input:} Billboard Slots $\mathcal{BS}$, Influence Function $\mathcal{I}()$, Trajectory Database $\mathcal{D}$, Advertiser Database $\mathbb{A}$.

\textbf{Problem:} Find out an optimal allocation $\mathcal{Z}^{OPT} = \{\mathcal{W}_{a_1}, \mathcal{W}_{a_2}, \ldots, \mathcal{W}_{a_n}\}$ of slots that minimizes the overall regret.
\end{tcolorbox}
\end{center}

\subsection{Problem Hardness.}

\begin{theorem}
The Tag-Specific Regret Minimization in Outdoor Advertising (TRMOA) problem is NP-hard and NP-hard to approximate within any constant factor.
\end{theorem}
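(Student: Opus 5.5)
We reduce from \textsc{3-Partition}, which is strongly NP-hard. An instance consists of $3m$ positive integers $w_1,\ldots,w_{3m}$ with $\sum_j w_j = mB$ and $B/4 < w_j < B/2$. The question is whether they can be partitioned into $m$ triples each summing to exactly $B$.

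\textbf{Construction.} We create one billboard slot $s_j$ per integer and a single tag $x$ shared by all advertisers. The trajectory database is built so that each slot $s_j$ is seen by a private set of $w_j$ users, with $Pr(u,s_j \mid x)=1$ for those users and $0$ otherwise. Slots then reach disjoint audiences, so Equation~\ref{Eq:TSIBS} becomes additive: $\mathcal{I}(\mathcal{S}\mid\{x\})=\sum_{s_j\in\mathcal{S}} w_j$. Because 3-Partition is strongly NP-hard, the integers can be taken polynomially bounded in unary, so this database has polynomial size. We create $m$ advertisers, each with demand $\sigma_i=B$, the tag $x$, and payment $u_i=1$, and we fix any $\delta\in(0,1]$.

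\textbf{Correctness.} We show the instance is a yes-instance iff the optimal total regret is $0$.
\begin{itemize}
\item If the triples exist, allocating each triple to a distinct advertiser gives every advertiser exactly $B$, so every $\mathcal{R}_{a_i}=0$.
\item Conversely, suppose total regret is $0$. The excess branch of Definition~\ref{Def:Regret-model} vanishes only when $\mathcal{I}(\mathcal{W}_{a_i})=\sigma_i$. The unsatisfied branch equals $u_i(1-\delta\,\mathcal{I}/\sigma_i)$ with $\mathcal{I}<\sigma_i$ and $\delta\le 1$, so it is strictly positive. Hence every advertiser receives exactly $B$.
\item Since the total influence available is $mB$, the allocations are disjoint, and all weights are positive, every slot is used. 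The bounds $B/4<w_j<B/2$ then force each bundle to be a triple.
\end{itemize}
Thus deciding whether $\mathcal{R}(\mathcal{Z}^{OPT})=0$ is NP-hard, and TRMOA is NP-hard.

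\textbf{Inapproximability.} This follows from the same gap. Any algorithm achieving a multiplicative factor $\alpha$ (constant, or indeed any finite function) must return a regret-$0$ solution whenever OPT $=0$. Such an algorithm would decide 3-Partition in polynomial time, so no constant-factor approximation exists unless P $=$ NP.

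\textbf{Main obstacle.} The delicate step is realizing arbitrary additive integer weights with the probabilistic, tag-specific influence model while keeping the instance polynomial and the regret exactly $0$ only at perfect partitions. We handle this by using disjoint user sets, $0/1$ probabilities, and a single tag, so the product formulas collapse to counts. Strong NP-hardness is what justifies the unary blow-up in users.

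A secondary point is the feasibility notion in Definition~\ref{Def:FA}, which nominally requires demands to be satisfied. If that constraint is read strictly, the same reduction still works. Deciding whether any feasible allocation exists then already encodes the partition, since total supply equals total demand forces exact satisfaction, and the zero-regret gap argument carries over unchanged.
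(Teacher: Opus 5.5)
Your proof is correct, but it takes a different and somewhat cleaner route than the paper.

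The paper reduces from Numerical 3-Dimensional Matching. It splits the $3n$ slots into classes $D_1,D_2,D_3$, shifts their weights by $c$, $3c$, $9c$, and gives every advertiser demand $b+13c$. It then relies on ``coefficient dominance'' of $c$ to force one slot per class into each bundle, which in effect rebuilds the classical N3DM-to-3-Partition gadget inside the TRMOA instance. Because you start directly from 3-Partition, the window $B/4<w_j<B/2$ already forces every exact-$B$ bundle to be a triple, so no offset gadget is needed.

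This also avoids a soft spot in the paper's argument. Its claim that any deviation from one-slot-per-class changes the influence is false bundle by bundle: four $D_1$ slots plus one $D_3$ slot also contribute $13c$. The claim has to be repaired by a global count. Supply equals total demand, so every slot is used; each bundle holds at most one $D_3$ slot, hence exactly one, and then likewise for $D_2$. That is exactly the kind of step your ``every slot is used'' argument makes explicit.

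You are also more careful on points the paper leaves implicit:
\begin{itemize}
\item Realizing integer weights via private user sets requires strong NP-hardness. The paper's instance needs the same, with $c$ polynomially bounded, e.g.\ $c=nb+1$.
\item The unsatisfied branch is strictly positive for $\delta\le 1$.
\item You address how the feasibility requirement of Definition~\ref{Def:FA} interacts with the reduction.
\end{itemize}

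The inapproximability step is the same zero-gap argument in both proofs. As you observe, it rules out any multiplicative guarantee, not only constant ones.
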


\begin{proof}
We prove the hardness of TRMOA via a reduction from  \emph{Numerical 3-Dimensional Matching (N3DM)} problem, which is known to be NP-complete. Let $X, Y, Z$ be three multisets of integers, each containing $n$ elements. Let $b$ be a bound such that $b = \frac{\sum X + \sum Y + \sum Z}{n}$. The N3DM decision problem asks whether there exists a perfect matching $M \subseteq X \times Y \times Z$ such that every element appears exactly once and for each triple $(x_i, y_i, z_i) \in M$, $x_i + y_i + z_i = b$. Given an instance of N3DM, we construct a TRMOA instance as follows. (1) Create $3n$ billboard slots and partition them into three disjoint sets: $D_1, D_2, D_3$, each of size $n$. (2) Map each element in $X \rightarrow D_1, Y \rightarrow D_2, Z \rightarrow D_3$. (3) For each slot $s_i$, construct a disjoint user set so that under Definition~\ref{Def:TSIBS}, $\mathcal{I}(\{s_i\} \mid \mathcal{T}) = w_i,$ where $w_i$ is the mapped integer. Since users are disjoint across slots, the full tag-specific influence function remains: $\mathcal{I}(\mathcal{S} \mid \mathcal{T}) = \sum_{s_i \in \mathcal{S}} w_i$, while preserving non-negativity, monotonicity, and submodularity. (4) Introduce a sufficiently large constant $c$. Modify slot influences as: $\forall~ s_i \in D_1: \mathcal{I}(s_i) \leftarrow c + w_i$, $\forall~ s_j \in D_2: \mathcal{I}(s_j) \leftarrow 3c + w_j$, $\forall~ s_k \in D_3: \mathcal{I}(s_k) \leftarrow 9c + w_k$. (5) Create $n$ advertisers with identical, $\sigma = b + 13c$.

\par We observe that $(c + w_i) + (3c + w_j) + (9c + w_k) = 13c + (w_i + w_j + w_k)$. When $c$ is sufficiently large, any deviation from selecting exactly one slot from each of $D_1, D_2, D_3$ yields influence strictly different from $\sigma$. Thus, regret of zero is achieved \emph{if and only if} each advertiser receives exactly one slot from each set.

\textbf{Correctness.} If TRMOA achieves total regret zero, then for every advertiser $a_i$: $\mathcal{I}(\mathcal{S}_i) = \sigma$. By the coefficient dominance of $c$, each $\mathcal{S}_i$ must contain exactly one element from each of $D_1, D_2, D_3$. Removing the $13c$ offset gives $w_i + w_j + w_k = b$. Thus, a valid N3DM matching exists. If a valid N3DM matching exists, assign each triple to a distinct advertiser. Each advertiser receives influence exactly $\sigma$, hence regret is zero.

\textbf{Approximation Hardness.}
Let $\text{OPT}^N$ denote the optimal value of N3DM (i.e., number of violated triples), and let $\text{OPT}^T$ denote the minimum regret in TRMOA. We have $\text{OPT}^N = 0, \text{OPT}^T = 0$. Suppose a polynomial-time algorithm approximates TRMOA within any constant factor $\tau$. Then, $\mathcal{R}(\mathcal{S}) \le \tau \cdot \text{OPT}^T$. Hence, If $\text{OPT}^N = 0$, then $\mathcal{R}(\mathcal{S})=0$. If $\text{OPT}^N > 0$, then $\mathcal{R}(\mathcal{S})>0$. Thus, we can solve N3DM in polynomial time by checking whether the regret equals zero, which implies $\mathrm{P}=\mathrm{NP}$. Therefore, TRMOA is NP-hard and NP-hard to approximate within any constant factor.
\end{proof}

\section{Proposed Solution Methodologies}\label{Sec:methodologies}
% We propose three approaches to solve this problem. 
\subsection{Adaptive Influential Tag Selection (AITS)}
 In our problem context, selecting tags is the initial requirement for all subsequent proposed solution approaches. In the AITS approach, use the trajectory database $\mathcal{D}$, the Tag set $\mathcal{T}$, and the adaptive stopping threshold $(\omega = 0.01)$ as input. This incrementally constructs an influential tag set using a greedy strategy. At each iteration, it selects the tag with the maximum marginal influence gain $\Delta(t\mid \mathcal{T}^{'}) = \mathcal{I}(\mathcal{T}^{'} \cup \{t\}) - \mathcal{I}(\mathcal{T}^{'})$ with respect to the current selection. The process ends when the best marginal gain falls below $\omega$-fraction of the current influence, preventing negligible additions. The algorithm outputs a set of tags $\mathcal{T}^{'}$ that collectively maximize influence over trajectories.
 
\begin{algorithm}[h!]
\scriptsize
\caption{Adaptive Influential Tag Selection}
\label{Algo:AITS}
\KwIn{Tag set $\mathcal{T}$, Trajectory data $\mathcal{D}$, $\omega$}
\KwOut{Selected tag set $\mathcal{T}^{'}$}

$\mathcal{T}^{'} \leftarrow \emptyset$\;

\While{True}{
$t^* = \arg\max_{t\in\mathcal{T}\setminus\mathcal{T}^{'}} \Delta(t\mid \mathcal{T}^{'})$\;

\If{$\Delta(t^*\mid \mathcal{T}^{'}) < \omega \cdot \mathcal{I}(\mathcal{T}^{'})$}{
\textbf{break}
}

$\mathcal{T}^{'} \leftarrow \mathcal{T}^{'} \cup \{t^*\}$\;
}

\Return $\mathcal{T}^{'}$\;
\end{algorithm}

\paragraph{\textbf{Complexity Analysis.}}
We analyze the time and space complexity of Algorithm~\ref{Algo:AITS}. The initialization step (Line~1) takes $\mathcal{O}(1)$ time. In each iteration, Line~$3$ scans all remaining tags to identify tags with maximum marginal influence gain, requiring $\mathcal{O}(|\mathcal{T}|^{2} \cdot c)$ time, where $c$ is the number of tuples in the trajectory database. The stopping condition in Line~4 and the update in Line~$6$ each take $\mathcal{O}(1)$ time. If the algorithm selects $|\mathcal{T}^{'}|$ tags before termination, the total time complexity is $\mathcal{O}(|\mathcal{T}^{'}| \cdot |\mathcal{T}|^{2} \cdot c)$. The additional space complexity is $\mathcal{O}(|\mathcal{T}|)$.

\subsection{Fairness-Aware Round-Robin Greedy Algorithm for Tag-Specific Regret Minimization (BG)}
Algorithm~\ref{Algo:RR-Tag-Greedy} proposes a fairness-aware greedy framework to minimize the total regret of an influence provider. First, advertisers are sorted by budget–demand ratio $\frac{u_i}{\sigma_i}$. For each advertiser, the algorithm first refines the tag set using Algorithm \ref{Algo:AITS}, retaining only the most influential tags to eliminate low-impact tags. Allocation then proceeds in a round-robin manner across the refined tag set, ensuring inter-tag fairness and preventing domination by any single tag. At each step, the algorithm greedily selects the billboard slot that maximizes regret reduction per unit cost, explicitly optimizing a normalized regret objective. The process continues until advertiser-specific influence demands are met or resources are exhausted. The resulting allocation achieves regret minimization, budget efficiency, and a fair distribution of slots across advertisers and tags, yielding a globally balanced solution.

\begin{algorithm}[h!]
\scriptsize
\SetAlgoLined
\KwData{
Trajectory Database $\mathcal{D}$, Billboard Slot Information $\mathcal{BS}$, Advertiser information $\mathcal{A}$, Influence Function $\mathcal{I}(\cdot)$, Tag information $\mathcal{T}$
}
\KwResult{
An allocation $\mathcal{Y}$ of billboard slots to advertiser--tag pairs minimizing total regret
}
\textbf{Initialization:} \\
$\mathcal{Y} \leftarrow \{ \mathcal{S}_{i,t} = \emptyset \mid a_i \in \mathcal{A},\, t \in \mathcal{T}_i \}$\;
$\mathcal{Z} \leftarrow \{ Z_{1}, Z_{2}, \ldots, Z_{|\mathcal{A}|}   \}$

Sort advertisers $a_i \in \mathcal{A}$ in descending order of $\frac{u_i}{\sigma_i}$\;

\For{each advertiser $a_i \in \mathcal{A}$}{
    
    Compute refined tags $\mathcal{T}_i^{'} \leftarrow AITS(\mathcal{T}_i, \mathcal{D})$\;
    
    Let $\mathcal{T}_i^{'} = \{t_{i,1}, t_{i,2}, \ldots, t_{i,|\mathcal{T}_i^{'}|}\}$ be the tag set of $a_i$\;
    
    Initialize tag pointer $k \leftarrow 0 $\;
    
    \While{$\exists\, t \in \mathcal{T}_i^{'}$ such that $\mathcal{I}(\mathcal{Z}_{i}) < \sigma_{i}$ \textbf{and} $\mathcal{BS} \neq \emptyset$}{
        
        $t \leftarrow t_{i,k}$\;  
            $s^{*} \leftarrow 
            \arg\max\limits_{s \in \mathcal{BS}}
            \frac{
            \mathcal{R}(\mathcal{Z}_{i}) -
            \mathcal{R}(\mathcal{Z}_{i} \cup \{s\})
            }{
            \sigma(\{s\})
            }$\;
            
            $\mathcal{S}_{i,t} \leftarrow \mathcal{S}_{i,t} \cup \{s^{*}\}$\;
            $\mathcal{Z}_{i} \leftarrow \mathcal{Z}_{i} \cup \{s^{*}\} $
            
            $\mathcal{BS} \leftarrow \mathcal{BS} \setminus \{s^{*}\}$\;
        
        $k \leftarrow (k+1) \bmod |\mathcal{T}_i|$\; 
    }
}
\Return $\mathcal{Y}$\;
\caption{Fairness-Aware Round-Robin Greedy Algorithm for Tag-Specific Regret Minimization}
\label{Algo:RR-Tag-Greedy}
\end{algorithm}

\paragraph{\textbf{Complexity Analysis.}}
Now, we analyze the time and space requirements of Algorithm~\ref{Algo:RR-Tag-Greedy}. Sorting advertisers by $\frac{u_i}{\sigma_i}$ requires $\mathcal{O}(|\mathcal{A}|\log|\mathcal{A}|)$ time. For each advertiser $a_i$, the Adaptive Influential Tag Selection procedure runs in $\mathcal{O}(|\mathcal{T}_i^{'}| \cdot |\mathcal{T}_i|^{2} \cdot c)$ time, where $c$ denotes the number of trajectory tuples. In the allocation phase, each billboard slot is selected at most once, and each selection scans the remaining slots, yielding a total complexity of $\mathcal{O}(|\mathcal{BS}|^{2})$ in the worst case. Therefore, the overall time complexity is $\mathcal{O}(|\mathcal{A}|\log|\mathcal{A}| + \sum_{a_i \in \mathcal{A}} |\mathcal{T}_i^{'}| \cdot |\mathcal{T}_i|^{2} \cdot c + |\mathcal{BS}|^{2})$. The additional space complexity is $\mathcal{O}(|\mathcal{BS}| + |\mathcal{A}|\cdot|\mathcal{T}|)$.

\subsection{Fairness-Aware Randomized Round-Robin Greedy Algorithm for Tag-Specific Regret Minimization (RG)}
Algorithm~\ref{Algo:RR-Randomized-Greedy} allocates billboard slots to advertiser-tag pairs to minimize total regret under influence and budget constraints. Advertisers are prioritized by their normalized budget–demand ratio, and for each advertiser, a refined influential tag set is obtained using Algorithm \ref{Algo:AITS}. Allocation proceeds in a round-robin manner over tags, where at each step a random subset of available slots is sampled, and the slot maximizing regret reduction per unit cost is selected. The sampling is done using $\frac{|\mathcal{BS}|}{k} \log \frac{1}{\epsilon}$, where $k$ is the $10\%$ of available $|\mathcal{BS}|$ in each iteration. From this subset, the slot that maximizes regret reduction per unit cost is selected. This sampling strategy follows the stochastic greedy framework for submodular optimization \cite{mirzasoleiman2014lazierlazygreedy}, which significantly reduces computational complexity while preserving solution quality with high probability.  In general, the algorithm achieves a budget-efficient, fair, and scalable allocation suitable for large-scale billboard datasets.
\begin{algorithm}[h!]
\scriptsize
\SetAlgoLined
\KwData{
Trajectory Database $\mathcal{D}$, Billboard Slot Information $\mathcal{BS}$, Advertiser information $\mathcal{A}$, Influence Function $\mathcal{I}(\cdot)$, Tag information $\mathcal{T}$
}
\KwResult{
An allocation $\mathcal{Y}$ of billboard slots to advertiser--tag pairs minimizing total regret
}
\textbf{Initialization:} \\
$\mathcal{Y} \leftarrow \{ \mathcal{S}_{i,t} = \emptyset \mid a_i \in \mathcal{A},\, t \in \mathcal{T}_i \}$\;
$\mathcal{Z} \leftarrow \{ Z_{1}, Z_{2}, \ldots, Z_{|\mathcal{A}|}   \}$

Sort advertisers $a_i \in \mathcal{A}$ in descending order of $\frac{u_i}{\sigma_i}$\;

\For{each advertiser $a_i \in \mathcal{A}$}{
    
    Compute refined tags $\mathcal{T}_i^{'} \leftarrow AITS(\mathcal{T}_i, \mathcal{D})$\;
    
    Let $\mathcal{T}_i^{'} = \{t_{i,1}, t_{i,2}, \ldots, t_{i,|\mathcal{T}_i^{'}|}\}$ be the tag set of $a_i$\;
    
    Initialize tag pointer $k \leftarrow 0 $\;
    
    \While{$\exists\, t \in \mathcal{T}_i^{'}$ such that $\mathcal{I}(\mathcal{Z}_{i}) < \sigma_{i}$ \textbf{and} $\mathcal{BS} \neq \emptyset$}{
        
        $t \leftarrow t_{i,k}$\; 
        $\mathcal{BS}^{'} \leftarrow$ a random subset obtained by sampling $s$ random
elements from $\mathcal{BS} \setminus \mathcal{Z}_{i}$\;
            $s^{*} \leftarrow 
            \arg\max\limits_{s \in \mathcal{BS}^{'}}
            \frac{
            \mathcal{R}(\mathcal{Z}_{i}) -
            \mathcal{R}(\mathcal{Z}_{i} \cup \{s\})
            }{
            \sigma(\{s\})
            }$\;
            
            $\mathcal{S}_{i,t} \leftarrow \mathcal{S}_{i,t} \cup \{s^{*}\}$\;
            $\mathcal{Z}_{i} \leftarrow \mathcal{Z}_{i} \cup \{s^{*}\} $
            
            $\mathcal{BS} \leftarrow \mathcal{BS} \setminus \{s^{*}\}$\;
        
        $k \leftarrow (k+1) \bmod |\mathcal{T}_i|$\; 
    }
}
\Return $\mathcal{Y}, \mathcal{BS}$\;
\caption{Fairness-Aware Randomized Round-Robin Greedy Algorithm for Tag-Specific Regret Minimization}
\label{Algo:RR-Randomized-Greedy}
\end{algorithm}

\paragraph{\textbf{Complexity Analysis.}}
We analyze the time and space complexity of Algorithm~\ref{Algo:RR-Randomized-Greedy}. Sorting advertisers by $\frac{u_i}{\sigma_i}$ requires $\mathcal{O}(|\mathcal{A}|\log|\mathcal{A}|)$ time. For each advertiser $a_i$, the Adaptive Influential Tag Selection procedure runs in $\mathcal{O}(|\mathcal{T}_i^{'}| \cdot |\mathcal{T}_i|^{2} \cdot c)$ time, where $c$ denotes the number of trajectory tuples. During allocation, each iteration evaluates only a sampled subset of size $s$ from the remaining billboard slots, resulting in $\mathcal{O}(s)$ time per selection. Since each billboard slot is assigned at most once, the total allocation cost is $\mathcal{O}(|\mathcal{BS}| \cdot s)$. Hence, the overall time complexity is $\mathcal{O}(|\mathcal{A}|\log|\mathcal{A}| + \sum_{a_i \in \mathcal{A}} |\mathcal{T}_i^{'}| \cdot |\mathcal{T}_i|^{2} \cdot c + |\mathcal{BS}| \cdot s)$. The additional space complexity is $\mathcal{O}(|\mathcal{BS}| + |\mathcal{A}|\cdot|\mathcal{T}|)$.

\subsection{Randomized Local Search Algorithm for Tag-Specific Regret Minimization (RLS)}
Algorithm \ref{Algo:RLS} proposes a randomized local search framework to further improve tag-specific billboard allocations with respect to total regret. The algorithm starts by initializing empty allocations for all advertiser-tag pairs and sorting advertisers according to $\frac{u_i}{\sigma_i}$. An initial solution is first constructed using the Algorithm \ref{Algo:RR-Randomized-Greedy}, providing a high-quality starting point. The algorithm then performs $N$ randomized local search iterations. In each iteration, billboard slots are reallocated advertiser-by-advertiser using refined influential tags obtained via Algorithm \ref{Algo:AITS}. Allocation proceeds in a round-robin fashion over tags, where at each step a billboard slot is selected uniformly at random from the remaining pool, encouraging exploration of diverse allocation patterns. After each iteration, the total regret of the constructed solution is evaluated and compared against the best solution found so far. If an improvement is observed, the current allocation replaces the incumbent best solution. Finally, any remaining unassigned billboard slots are greedily allocated to the best solution using the Algorithm \ref{Algo:RR-Randomized-Greedy} procedure to further reduce regret. The algorithm returns the best allocation across all iterations, effectively balancing exploration and exploitation to minimize regret.

\begin{algorithm}[h!]
\scriptsize
\caption{Randomized Local Search Algorithm for Tag-Specific Regret Minimization}
\label{Algo:RLS}
\KwData{
Trajectory Database $\mathcal{D}$, Billboard Slot Information $\mathbb{BS}$, 
Advertiser information $\mathcal{A}$, Influence Function $\mathcal{I}(\cdot)$, Sample size $s$, Tag information $\mathcal{T}$
}
\KwResult{
An allocation $\mathcal{Z}$ of billboard slots to advertiser--tag pairs minimizing total regret
}
\textbf{Initialization:} \\
$\mathcal{M} \leftarrow \{ \mathcal{S}_{i,t} = \emptyset \mid a_i \in \mathcal{A},\, t \in \mathcal{T}_i \}$\;
$\mathcal{Z} \leftarrow \{ Z_{1}, Z_{2}, \ldots, Z_{|\mathcal{A}|}   \}$

Sort advertisers $a_i \in \mathcal{A}$ in descending order of $\frac{u_i}{\sigma_i}$\;
$\mathcal{S}^{best}, \mathcal{BS}^{''} \gets$ RG$(\mathcal{BS}, \mathcal{T}, \mathcal{A}, \mathcal{Z})$\;
\For{$iter \gets 1$ \KwTo $N$}{
    $\mathcal{BS}^{'} \gets \mathcal{BS}$\;
\For{each advertiser $a_i \in \mathcal{A}$}{
    
    Compute refined tags $\mathcal{T}_i^{'} \leftarrow AITS(\mathcal{T}_i, \mathcal{D})$\;
    
    Let $\mathcal{T}_i^{'} = \{t_{i,1}, t_{i,2}, \ldots, t_{i,|\mathcal{T}_i^{'}|}\}$ be the tag set of $a_i$\;
    
    Initialize tag pointer $k \leftarrow 0 $\;
    
    \While{$\exists\, t \in \mathcal{T}_i^{'}$ such that $\mathcal{I}(\mathcal{Z}_{i}) < \sigma_{i}$ \textbf{and} $\mathcal{BS} \neq \emptyset$}{
        
        $t \leftarrow t_{i,k}$\;  
            $s^{*} \leftarrow$ Select  a random billboard $o \in \mathcal{BS}^{'}$\;
            
            $\mathcal{S}_{i,t} \leftarrow \mathcal{S}_{i,t} \cup \{s^{*}\}$\;
            $\mathcal{Z}_{i} \leftarrow \mathcal{Z}_{i} \cup \{s^{*}\} $
            
            $\mathcal{BS}^{'} \leftarrow \mathcal{BS}^{'} \setminus \{s^{*}\}$\;
        
        $k \leftarrow (k+1) \bmod |\mathcal{T}_i|$\; 
    }
}
    \If{$\mathcal{R}(\mathcal{Z}) < \mathcal{R}(\mathcal{S}^{best})$}{
        $\mathcal{S}^{best} \gets \mathcal{Z}$\;
        $\mathcal{BS}^{''} \leftarrow \mathcal{BS}^{'}$\;
    }
}
\If{$\mathcal{BS}^{''} \neq \emptyset$}{
$\mathcal{S}^{best} \leftarrow RG(\mathcal{BS}^{''}, \mathcal{T}, \mathcal{A}, \mathcal{S}^{best})$ by skipping line no. $6$ in Algorithm \ref{Algo:RR-Randomized-Greedy}
}
\Return $\mathcal{S}^{best}$\;
\end{algorithm}

\paragraph{\textbf{Complexity Analysis.}}
We analyze the time and space complexity of Algorithm~\ref{Algo:RLS}. Sorting advertisers by $\frac{u_i}{\sigma_i}$ requires $\mathcal{O}(|\mathcal{A}|\log|\mathcal{A}|)$ time. The initial randomized greedy construction incurs $\mathcal{O}(|\mathcal{BS}| \cdot s)$ time. The algorithm then performs $N$ local search iterations; in each iteration, for every advertiser $a_i$, the Adaptive Influential Tag Selection procedure runs in $\mathcal{O}(|\mathcal{T}_i^{'}| \cdot |\mathcal{T}_i|^{2} \cdot c)$ time, where $c$ denotes the number of trajectory tuples, followed by randomized slot assignments that together require $\mathcal{O}(|\mathcal{BS}|)$ time. Hence, the total time complexity is $\mathcal{O}(|\mathcal{A}|\log|\mathcal{A}| + |\mathcal{BS}|\cdot s + N \cdot \sum_{a_i \in \mathcal{A}} \big(|\mathcal{T}_i^{'}| \cdot |\mathcal{T}_i|^{2} \cdot c + |\mathcal{BS}|\big))$. The additional space complexity is $\mathcal{O}(|\mathcal{BS}| + |\mathcal{A}|\cdot|\mathcal{T}|)$.

\section{Experimental Evaluation}\label{Sec:experimental-evaluation}
We experimented with different parameter settings to determine how well the algorithms minimize regret and the computational time required. Since this is the first work on tag-specific regret minimization on billboard advertisement, there is no existing method for direct comparison. Therefore, we conduct an extensive analysis of the two research questions (RQ) below to provide practical insights into minimizing regret across different real-world scenarios.

\begin{tcolorbox}[left=2pt,right=2pt,top=4pt,bottom=4pt]
\begin{itemize}
\item RQ1: What happens when the total demand of all advertisers is far below, close to, or exceeds the influence provider’s maximum supply?
\item RQ2: Which type of advertisers helps the influence provider minimize regret: a small number of large advertisers with high demand, or a large number of small advertisers with low demand?
\end{itemize}
\end{tcolorbox}

\subsection{\textbf{Challenges in TRMOA Problem}} The TRMOA problem has few challenges, as follows.
\begin{itemize}
\item The problem is NP-hard and inapproximable, making it impossible to efficiently compute optimal solutions for large real-world instances. This problem simultaneously penalizes under-allocation (unsatisfied regret) and over-allocation (excessive regret). The objective is to reduce total regret, in which reducing one often increases the other, making balanced allocation non-trivial.
\item Influence depends on both selected tags and allocated slots, creating a large combinatorial search space and complex inter-dependencies. Additionally, multiple advertisers compete for disjoint slots, and satisfying one advertiser can negatively impact others, especially when demand is high. Furthermore, large trajectory and billboard datasets significantly increase computational costs, requiring efficient, scalable heuristic solutions.
\item As this is the first work on tag-specific regret minimization, no existing methods are available for comparison. We compare our methods with `Random' as a baseline.
\end{itemize}
\subsection{\textbf{Experimental Setup}} 
\subsubsection{Datasets}
We experimented with two real-world trajectory datasets: check-in data from New York City (NYC) and Los Angeles (LA). A total of 227,428 check-ins were recorded in NYC\footnote{\url{https://www.nyc.gov/site/tlc/about/tlc-trip-record-data.page}} between April 12, 2012, and February 16, 2013. The LA\footnote{\url{https://github.com/Ibtihal-Alablani}} dataset contains 74,170 user records collected from 15 streets, including street names, GPS locations, timestamps, and related information. We crawled the billboard data from Lamar Advertising Company\footnote{\url{http://www..lamar.com/InventoryBrowser}}, one of the largest outdoor advertising providers worldwide. The NYC billboard dataset includes 716 billboards (1,031,040 slots), while the LA dataset contains 1,483 billboards (2,135,520 slots).

\subsubsection{Key Parameters}
Our experiments use key parameters, including the demand-supply ratio $\alpha$, the average individual demand ratio $\beta$, the distance threshold $\gamma$, and the unsatisfied penalty ratio $\delta$. We summarize all key parameters in Table \ref{Table:Key-Parameter}.
\vspace{-0.15 in}
\begin{table}[h!]
\caption{\label{Table:Key-Parameter} Key Parameters}
\vspace{-0.15 in}
\begin{center}
    \begin{tabular}{ | p{2cm}| p{5.5cm}|}
    \hline
    Parameter & Values  \\ \hline
    $\alpha$ & $40\%, 60\%, 80\%, \textbf{100\%}, 120\%$   \\ \hline
    $\beta$ & $1\%, 2\%, \textbf{5\%}, 10\%, 20\%$  \\ \hline
    $\epsilon$ & $\textbf{0.01}, 0.05, 0.1, 0.15, 0.2$ \\ \hline
    $\omega$ & $\textbf{0.01}, 0.05, 0.1, 0.15, 0.2$ \\ \hline
    $\delta$ & $0, 0.25, \textbf{0.5}, 0.75, 1$  \\ \hline
    $\gamma$ & $25m,50m,\textbf{100m},125m,150m$  \\ \hline
    \end{tabular}
\end{center}
\end{table}
\vspace{-0.15 in}

\paragraph{Demand-supply Ratio $(\alpha)$}
This is the ratio of total influence demand over influence supply i.e., $\alpha = \sigma^{d} / \sigma^{s}$, where $\sigma^{d}$ = $\sum_{i=1}^{\mathcal{|A|}} \sigma_{i} $ refers the total demand, and $\sigma^{s}$ refers total influence supply, i.e., $\sigma^{s} = \sum_{b \in \mathcal{BS}} \sigma(b)$. 

\paragraph{Average Individual Demand Ratio $(\beta)$} It is the percentage of average individual demand of advertisers over influence supply, i.e., $\beta = \frac{\sigma^{d} / |\mathcal{A}|}{\sigma^{s}}$. By controlling the value of $\beta$, we can control the influence demands of the advertisers.
\paragraph{Advertiser Influence Demand $(\sigma)$}
Each advertiser’s demand is generated using parameters $\alpha$ and $\beta$, and is computed as $\sigma = \lfloor \psi \cdot \sigma^{s} \cdot \beta \rfloor$, where $\psi \in [0.8, 1.2]$ is randomly selected to introduce variability in advertiser demand.
\paragraph{Advertiser Tag $(\mathcal{T})$} We consider each advertiser have at least $100$ and at most $500$ tags for all the experiments.
\paragraph{Advertiser Payment $(\lambda)$}
Following the widely adopted setting \cite{aslay2015viral,aslay2017revenue}, one can formulate the advertiser's payment as $\lambda = \lfloor \eta \cdot \sigma \rfloor$, where $\eta \in [0.9,1.1]$.
\paragraph{Billboard Slot Cost}
All OOH advertising companies, like LAMAR, do not disclose the exact rental price for a slot; however, the literature \cite{zhang2020towards,ali2024regret,zhang2021minimizing} treats the slot cost as proportional to its influence. So, we follow the same setting as $\lfloor \tau \times \mathcal{I}(s)/10\rfloor$, where $\mathcal{I}(s)$ is the influence of slot $s$ and $\tau$ is randomly chosen between $0.9$ to $1.1$ to simulate fluctuation.

\paragraph{Penalty Ratio $(\delta)$}
The penalty ratio $\delta \in [0,1]$ controls the fraction of payment when the advertisers are not satisfied.
% \paragraph{Sampling Parameter $(\epsilon)$}

\paragraph{Environment Setup}
All algorithms are implemented in C++ using Visual Studio Code. Experiments are conducted on an Ubuntu-based desktop system with 64 GB RAM and a Xeon(R) 3.50 GHz processor.
% %%%%%%%%%%%%%%%%%%%%%%%%%%%%%   Algorithm Vs. Regret NYC(1-3) %%%%%%%%%%%%%%%%%%%%%%%%%%%%%%%
\begin{figure*}[!ht]
\centering
    \begin{tabular}{lclc}
       Excessive Regret & \includegraphics[width=0.11\linewidth]{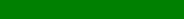} & Unsatisfied Regret & \includegraphics[width=0.11\linewidth]{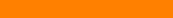} \\
    \end{tabular}
\setlength{\tabcolsep}{0.1pt} % tighter spacing between columns
\renewcommand{\arraystretch}{0.9} % tighter spacing between rows
\begin{tabular}{ccccc}
\includegraphics[scale=0.156]{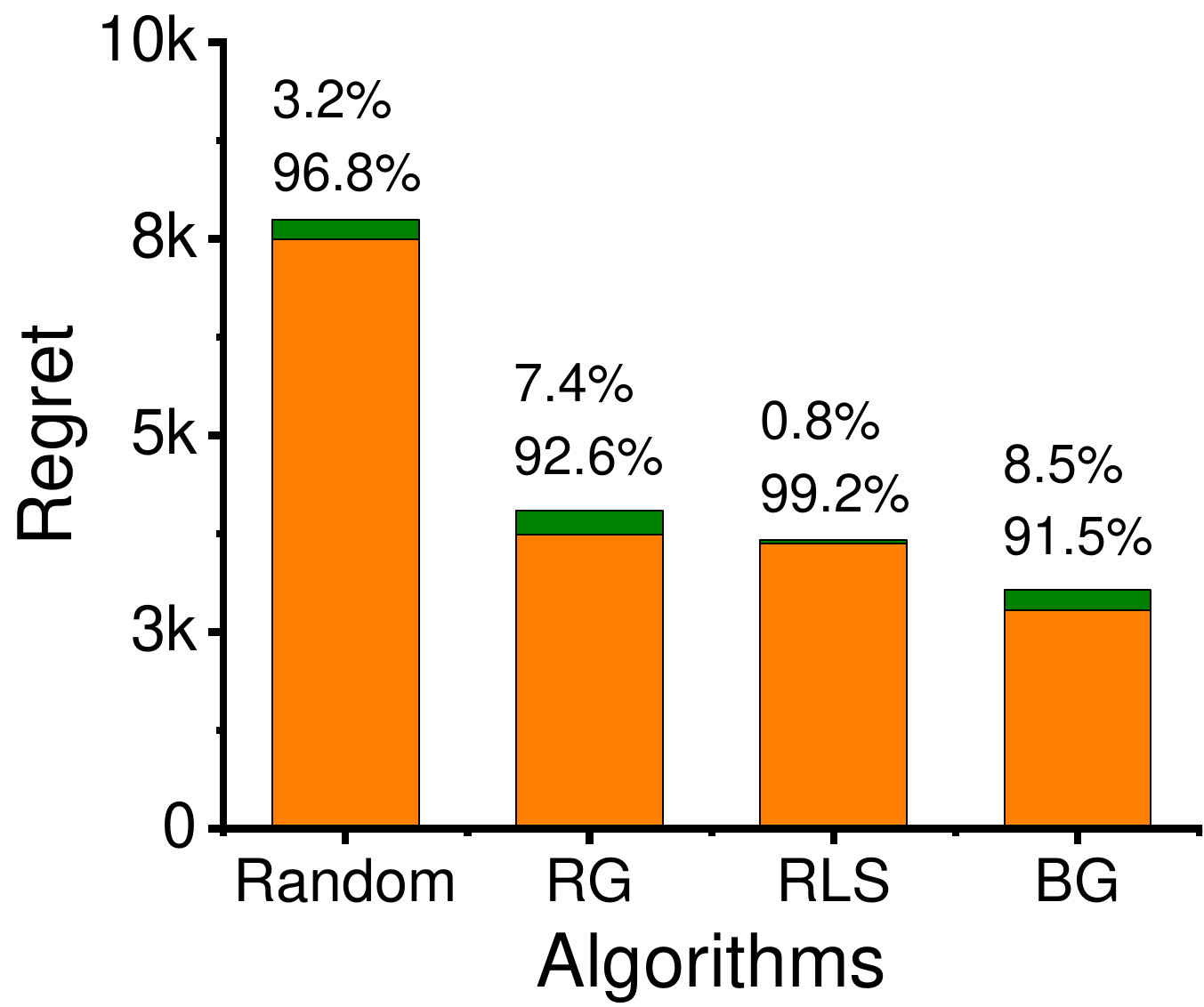} & \includegraphics[scale=0.156]{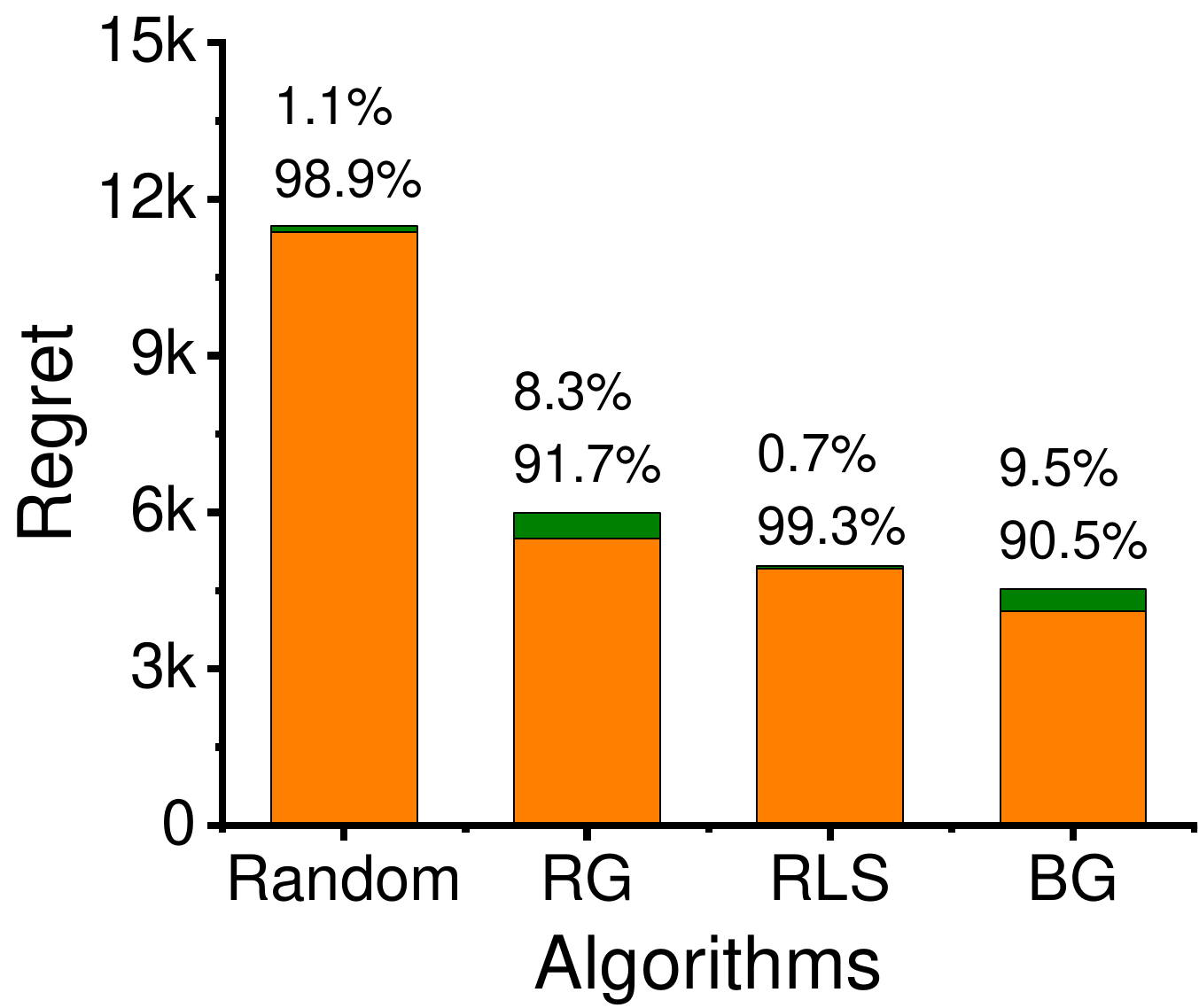} & \includegraphics[scale=0.156]{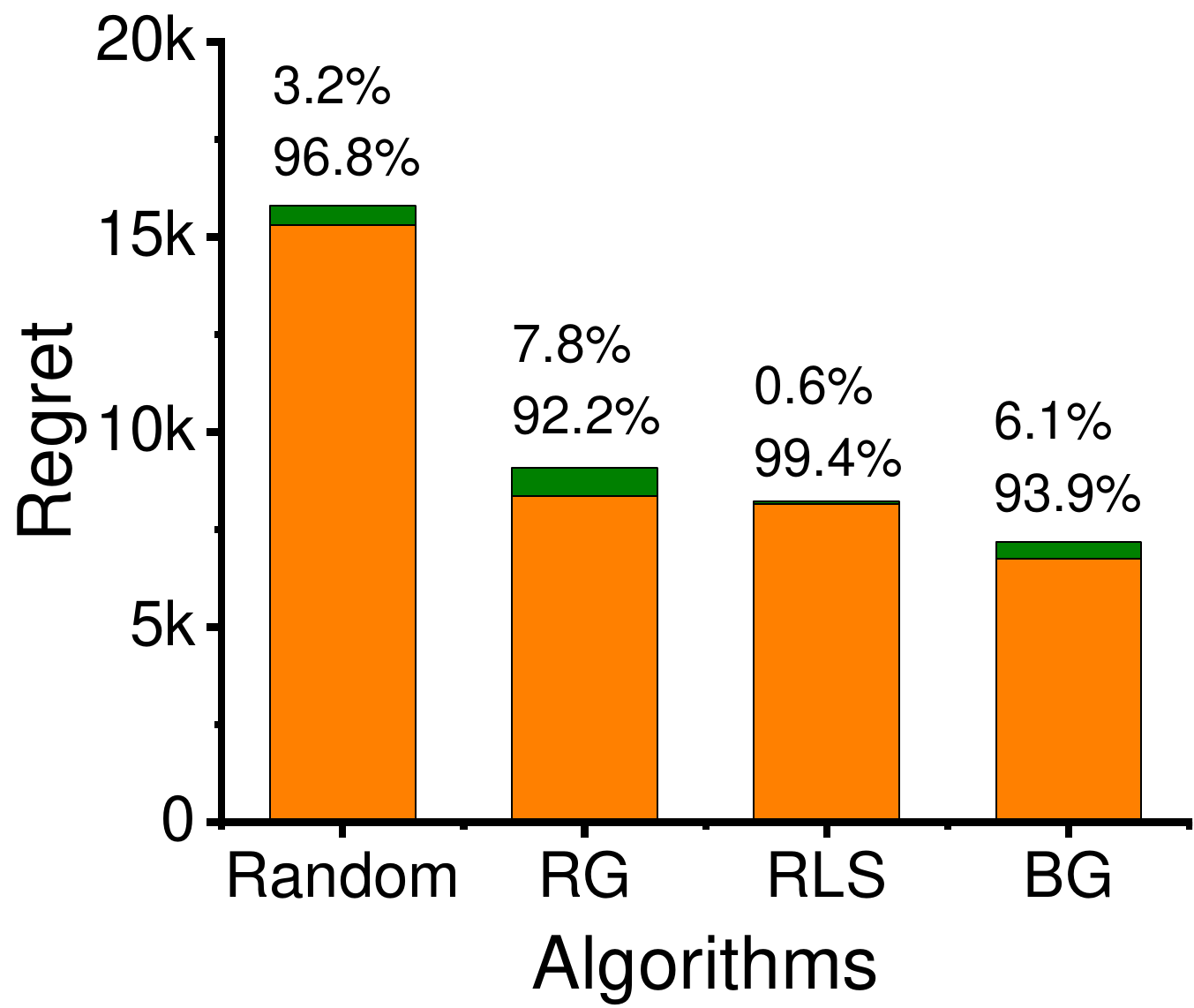} &
\includegraphics[scale=0.156]{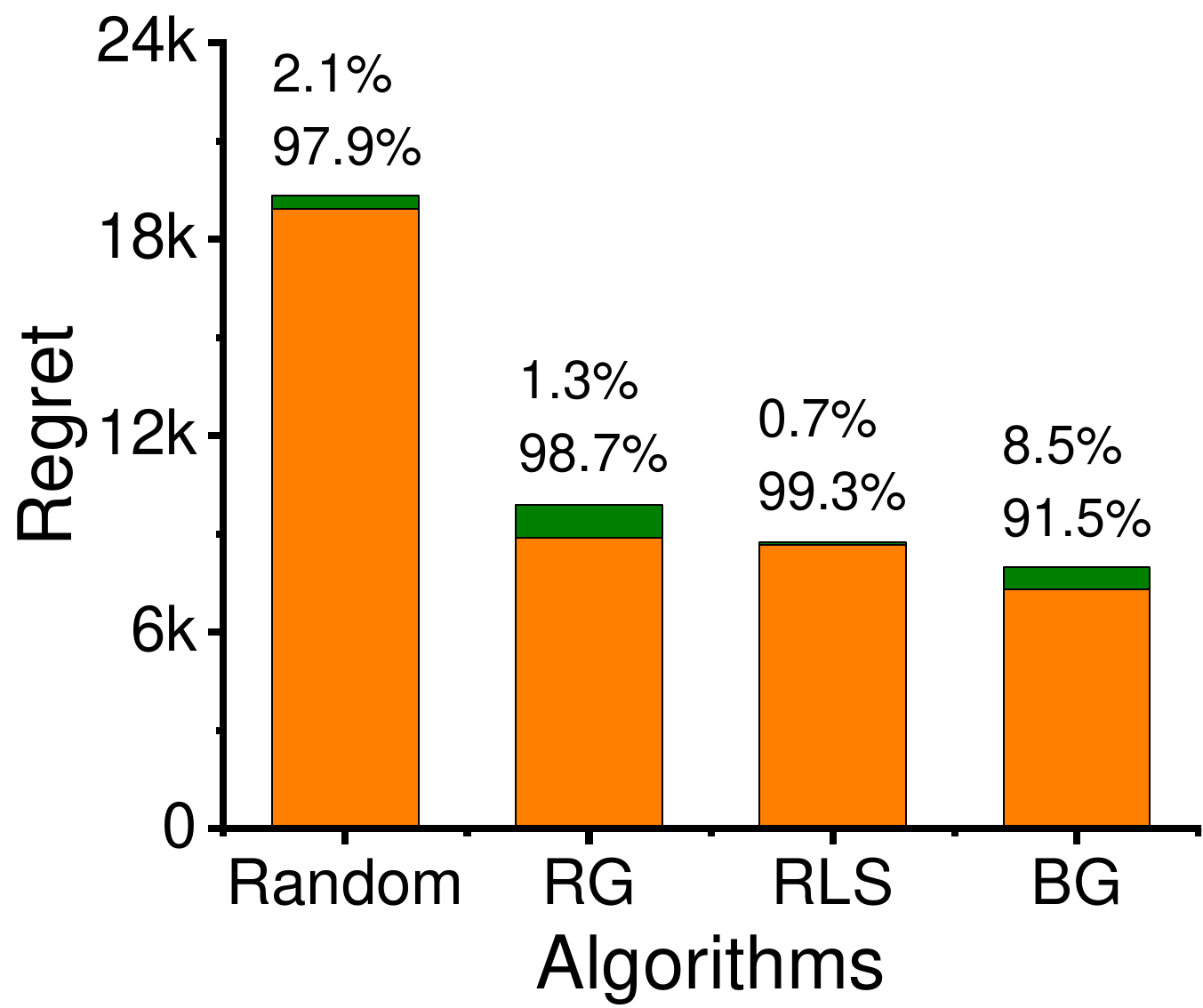} &
\includegraphics[scale=0.156]{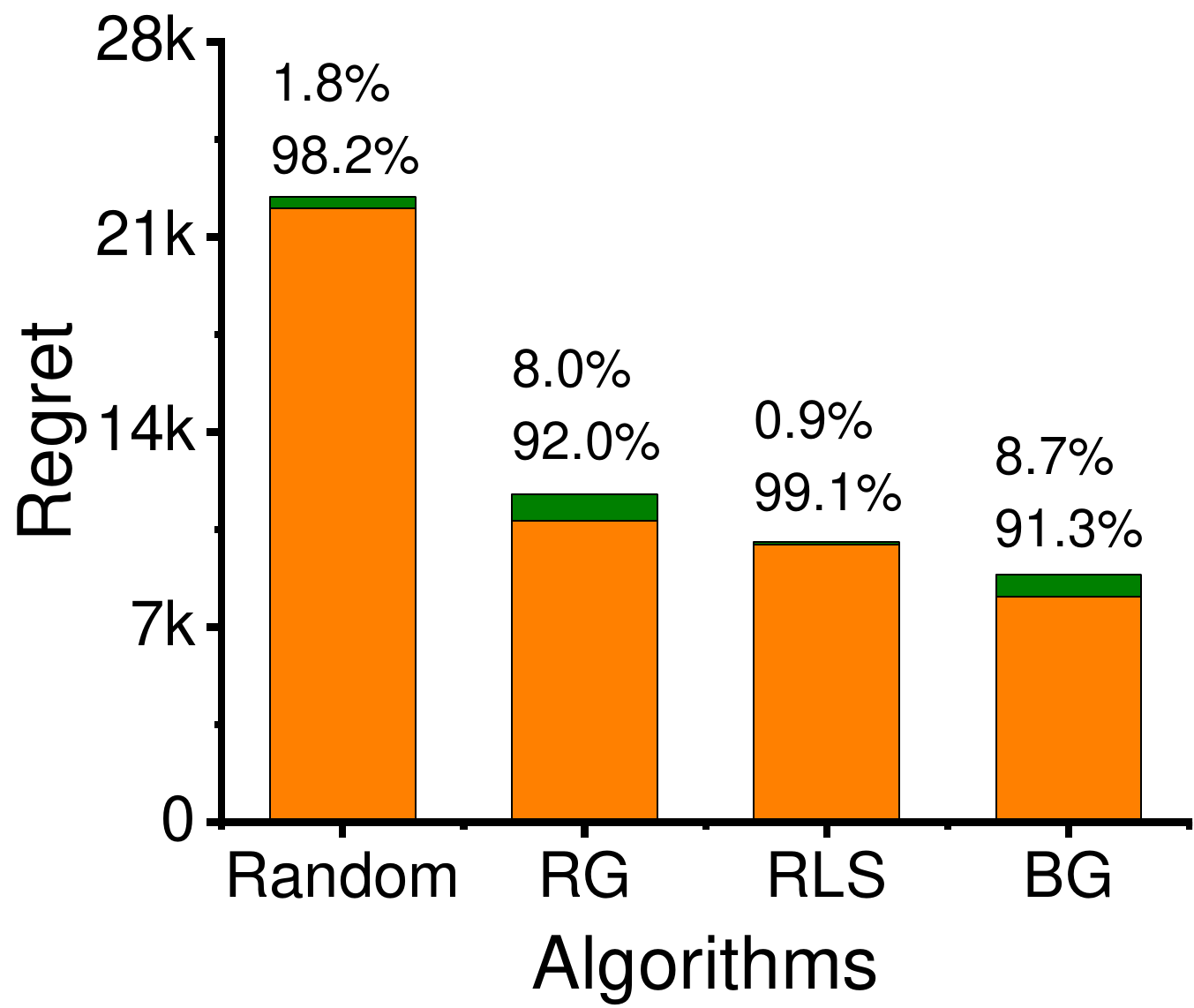}\\
\tiny{(a) $\alpha = 40\%$} &  \tiny{(b) $\alpha = 60\%$} & \tiny{(c) $\alpha = 80\%$} & \tiny{(d) $\alpha = 100\%$} & \tiny{(e) $\alpha = 120\%$}\\
\includegraphics[scale=0.156]{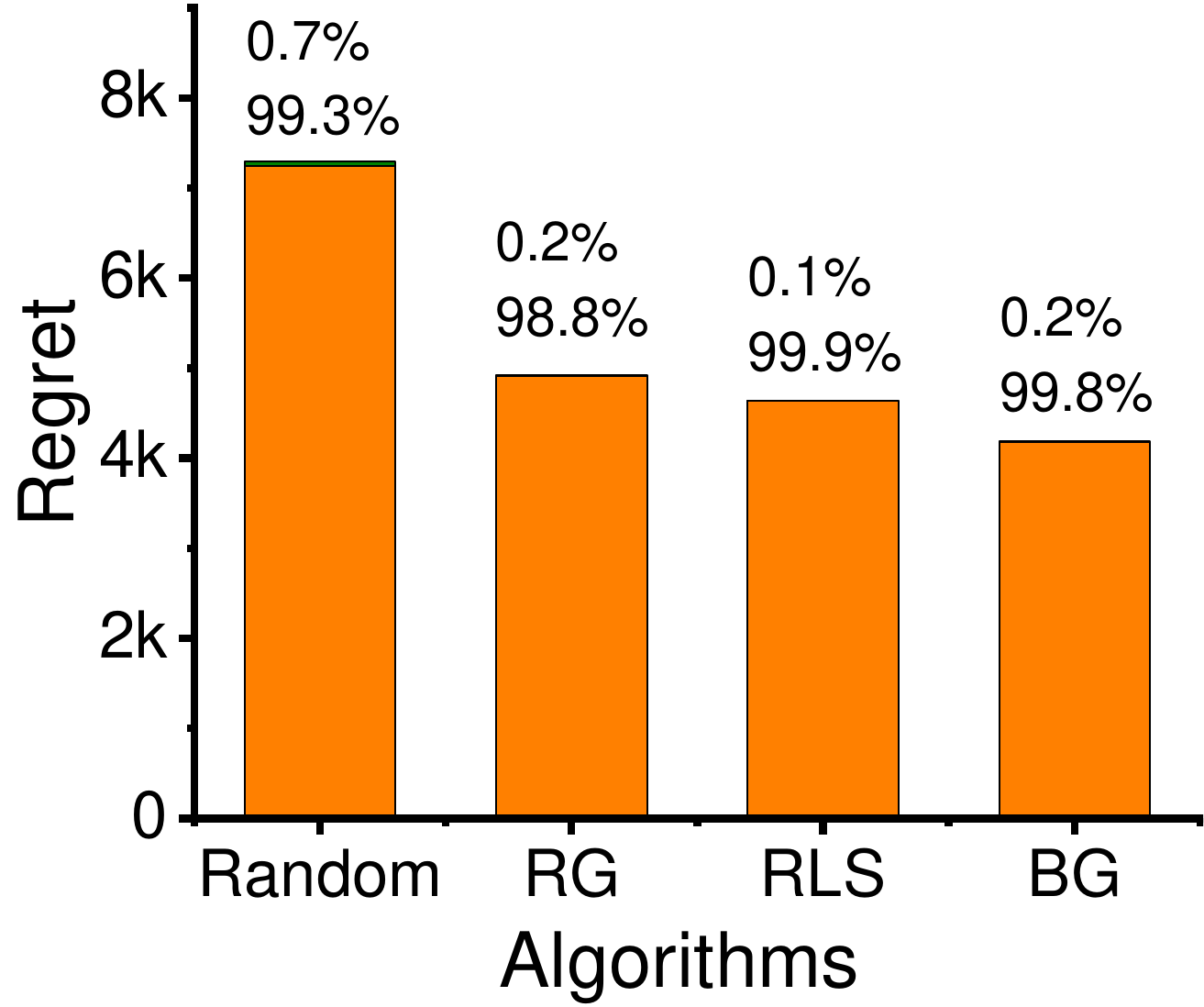} & \includegraphics[scale=0.156]{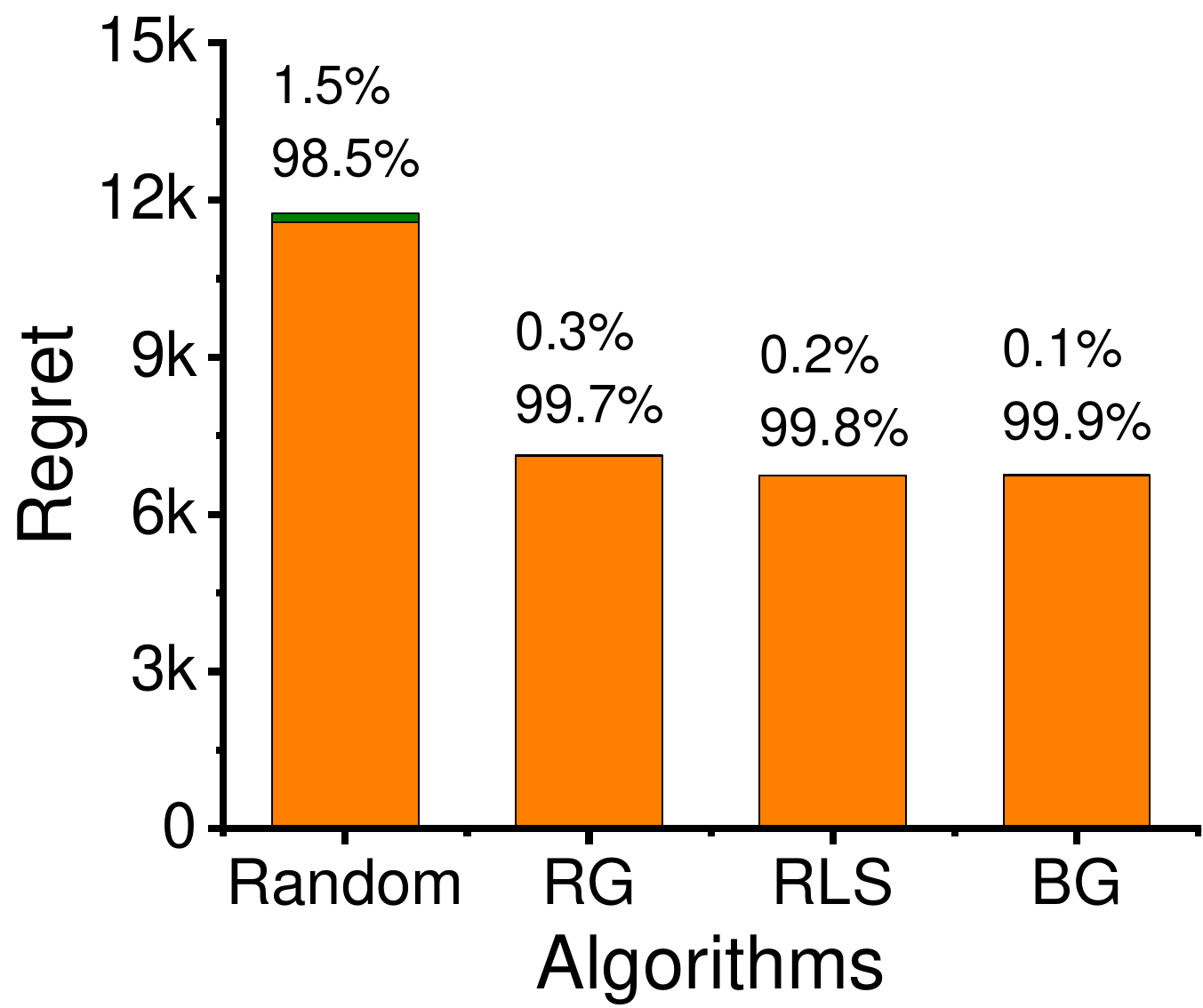} & \includegraphics[scale=0.156]{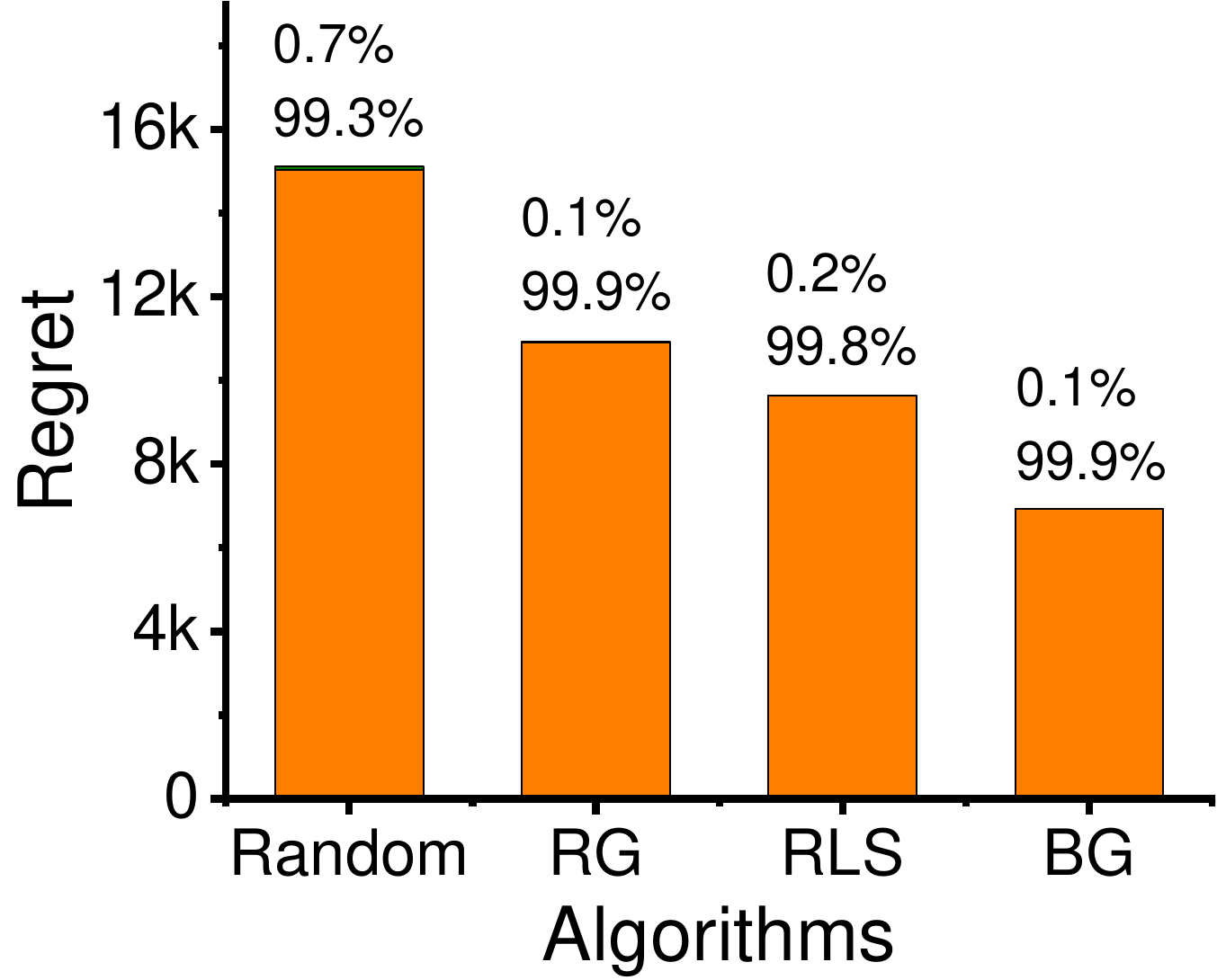} &
\includegraphics[scale=0.156]{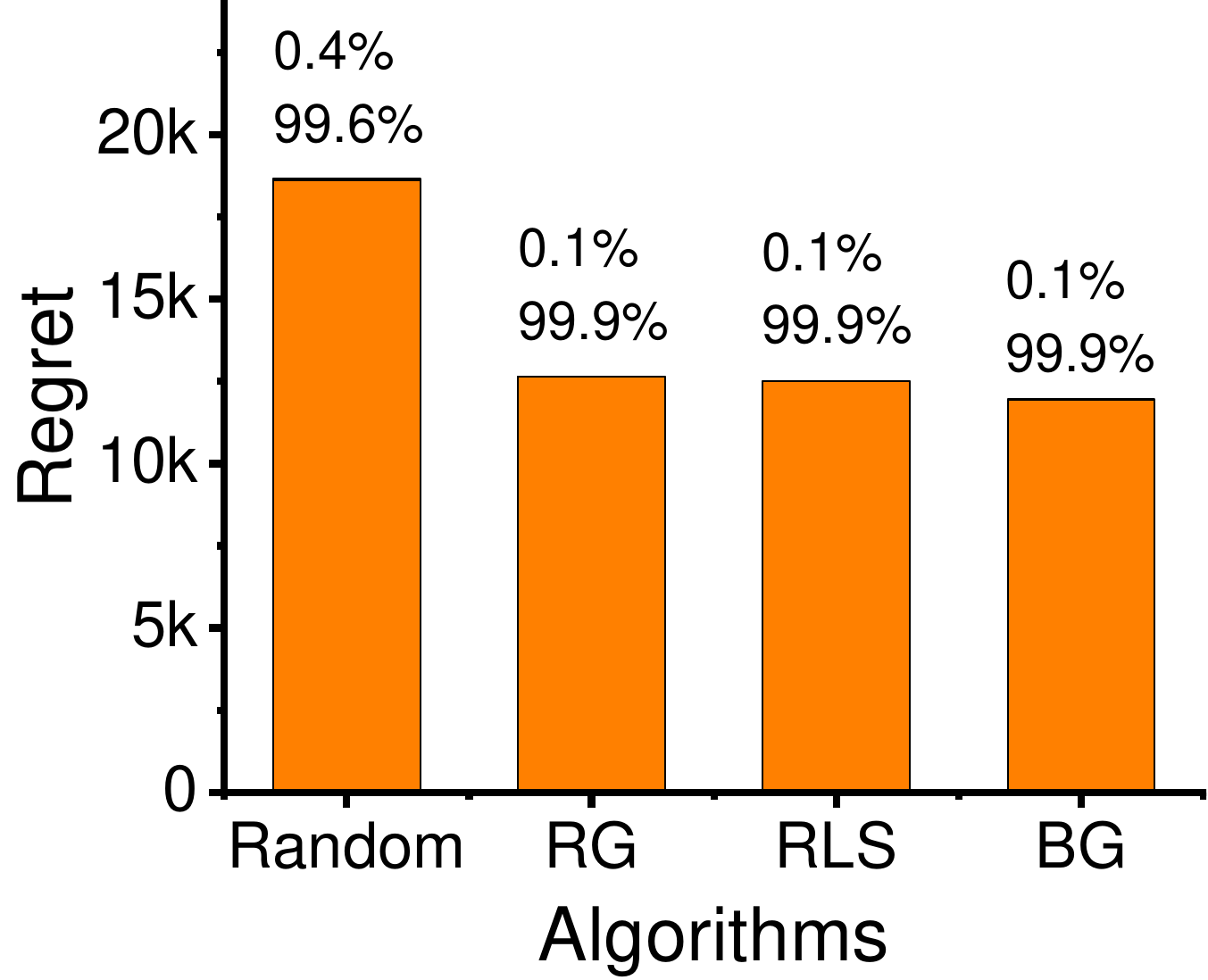} &
\includegraphics[scale=0.156]{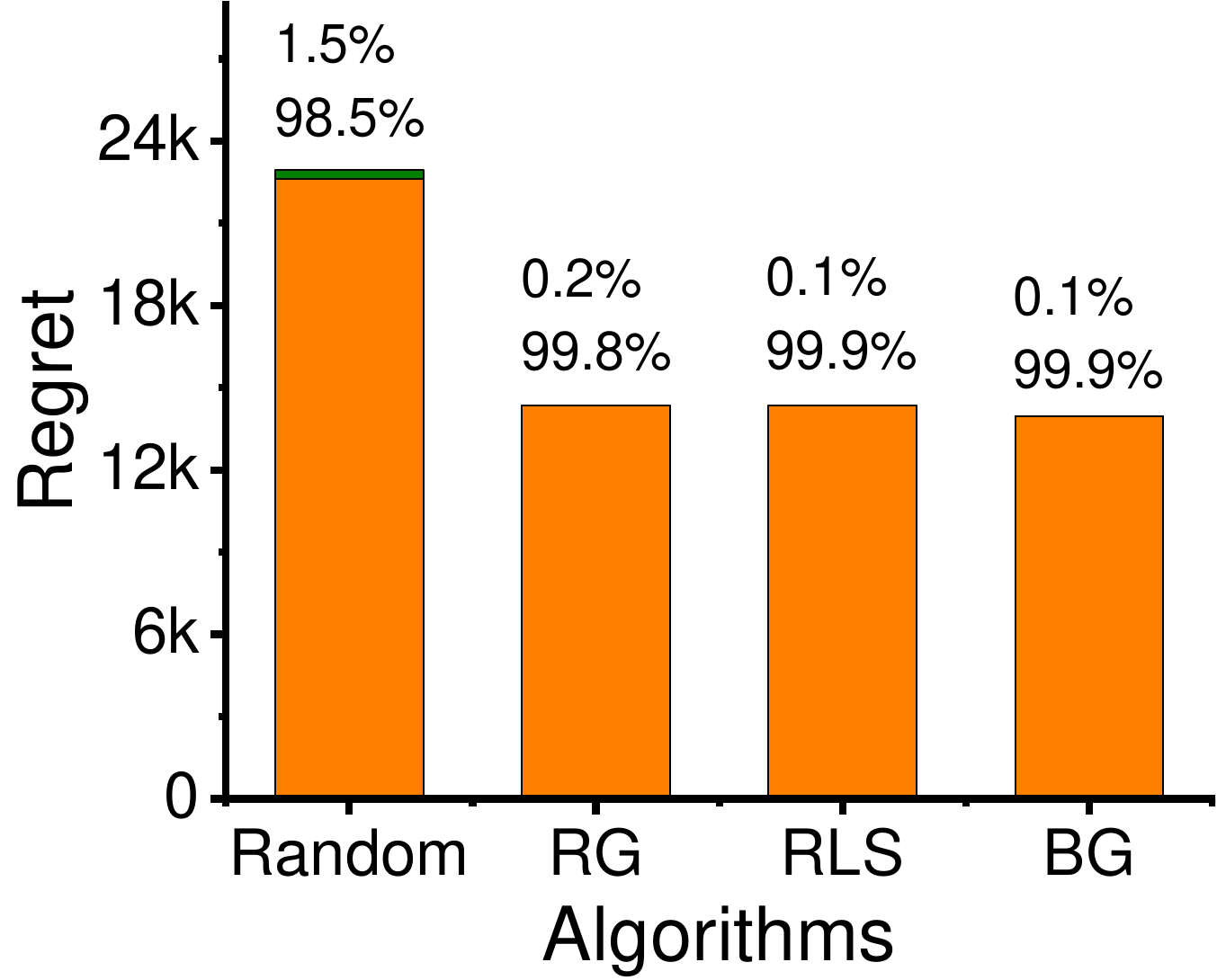}\\
\tiny{(f) $\alpha = 40\%$} &  \tiny{(g) $\alpha = 60\%$} & \tiny{(h) $\alpha = 80\%$} & \tiny{(i) $\alpha = 100\%$} & \tiny{(j) $\alpha = 120\%$}\\
\includegraphics[scale=0.156]{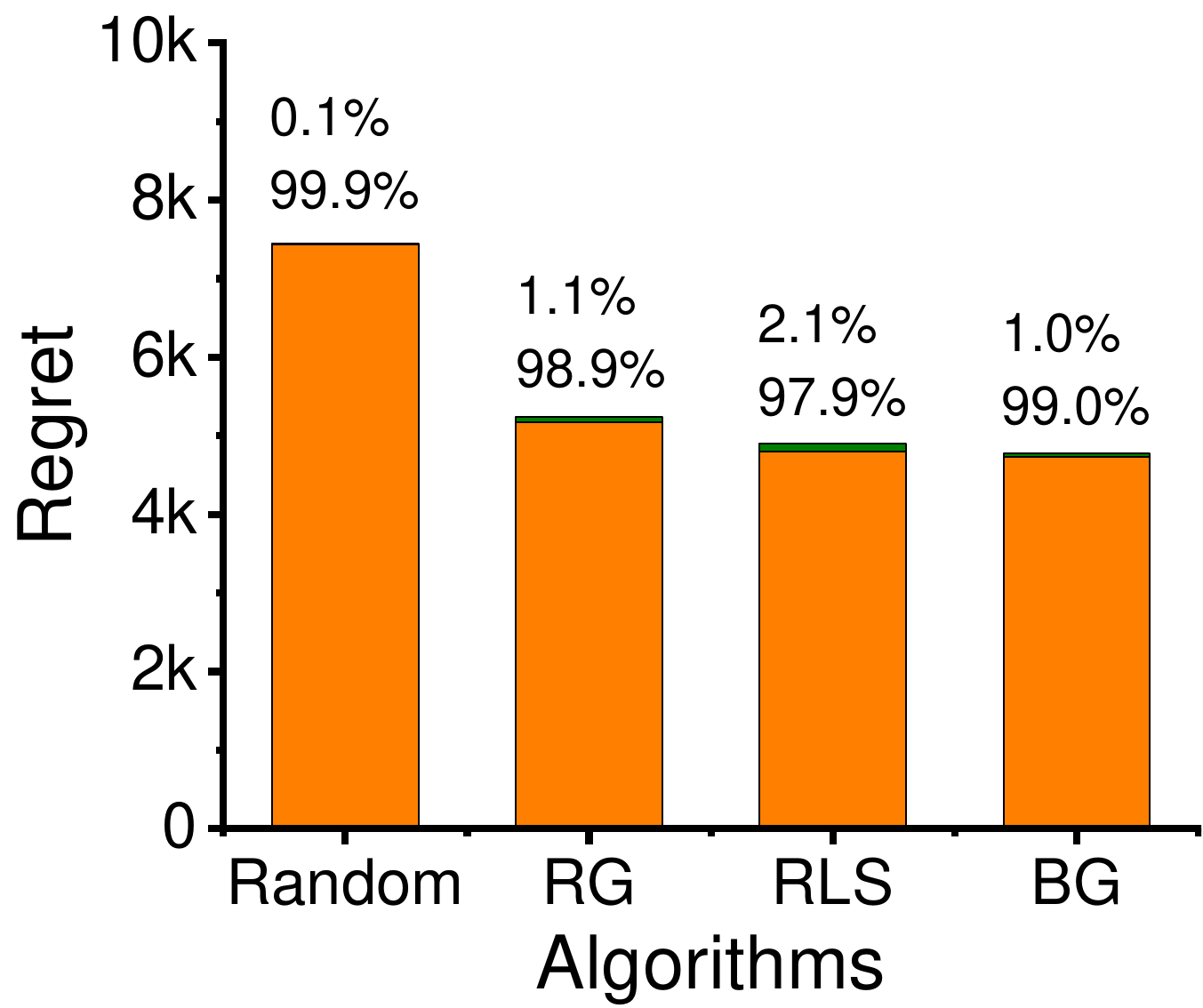} & \includegraphics[scale=0.156]{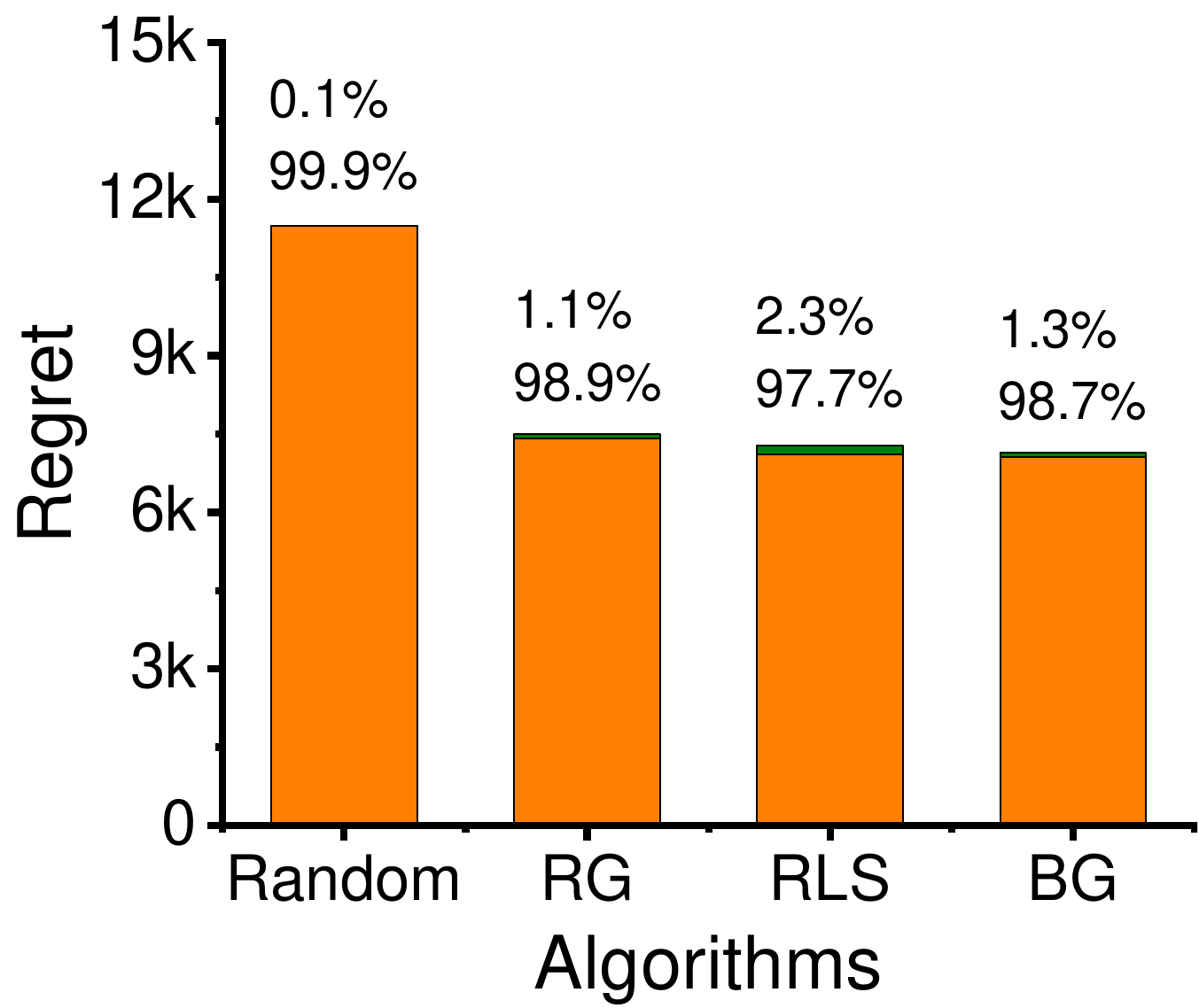} & \includegraphics[scale=0.156]{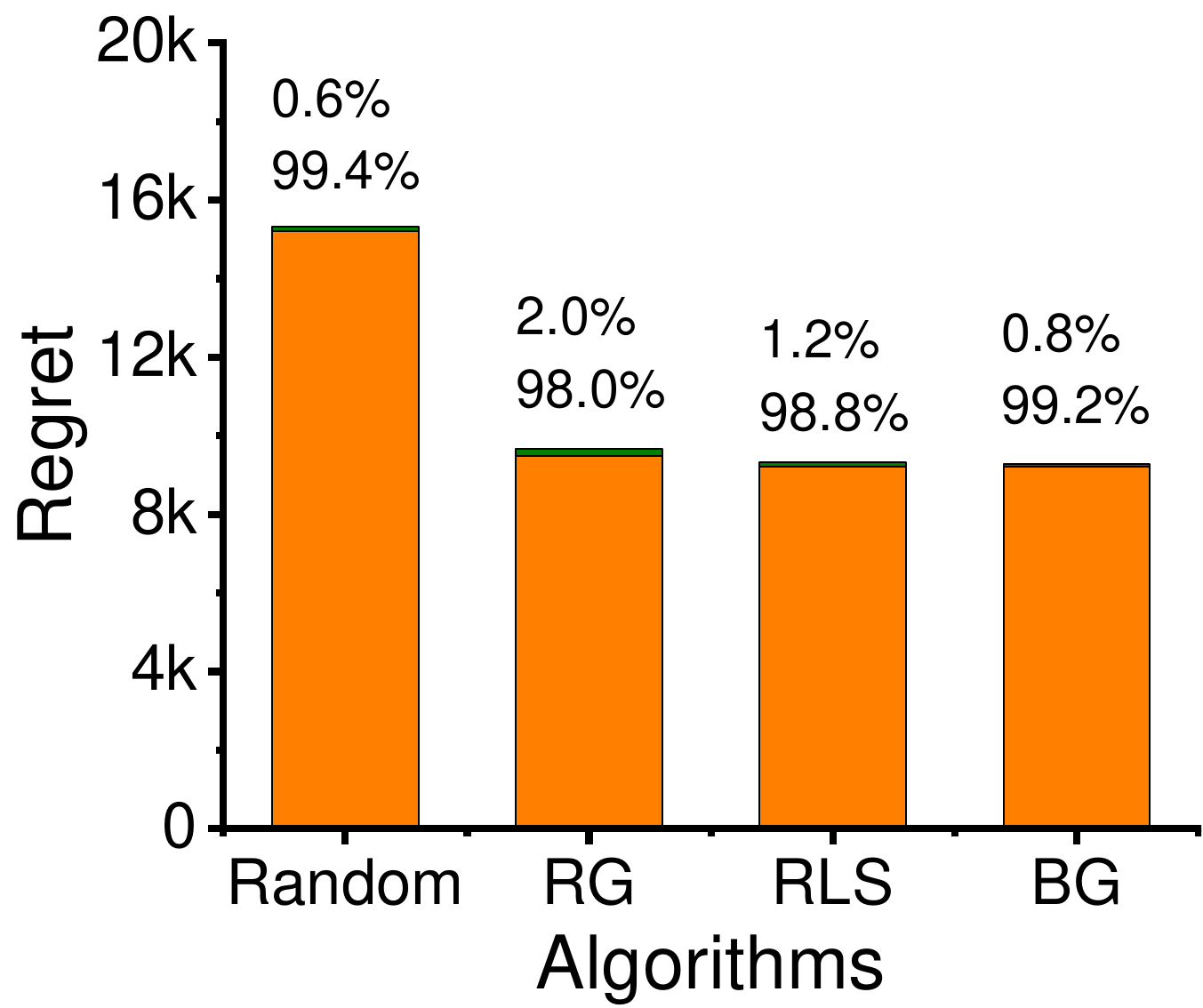} &
\includegraphics[scale=0.156]{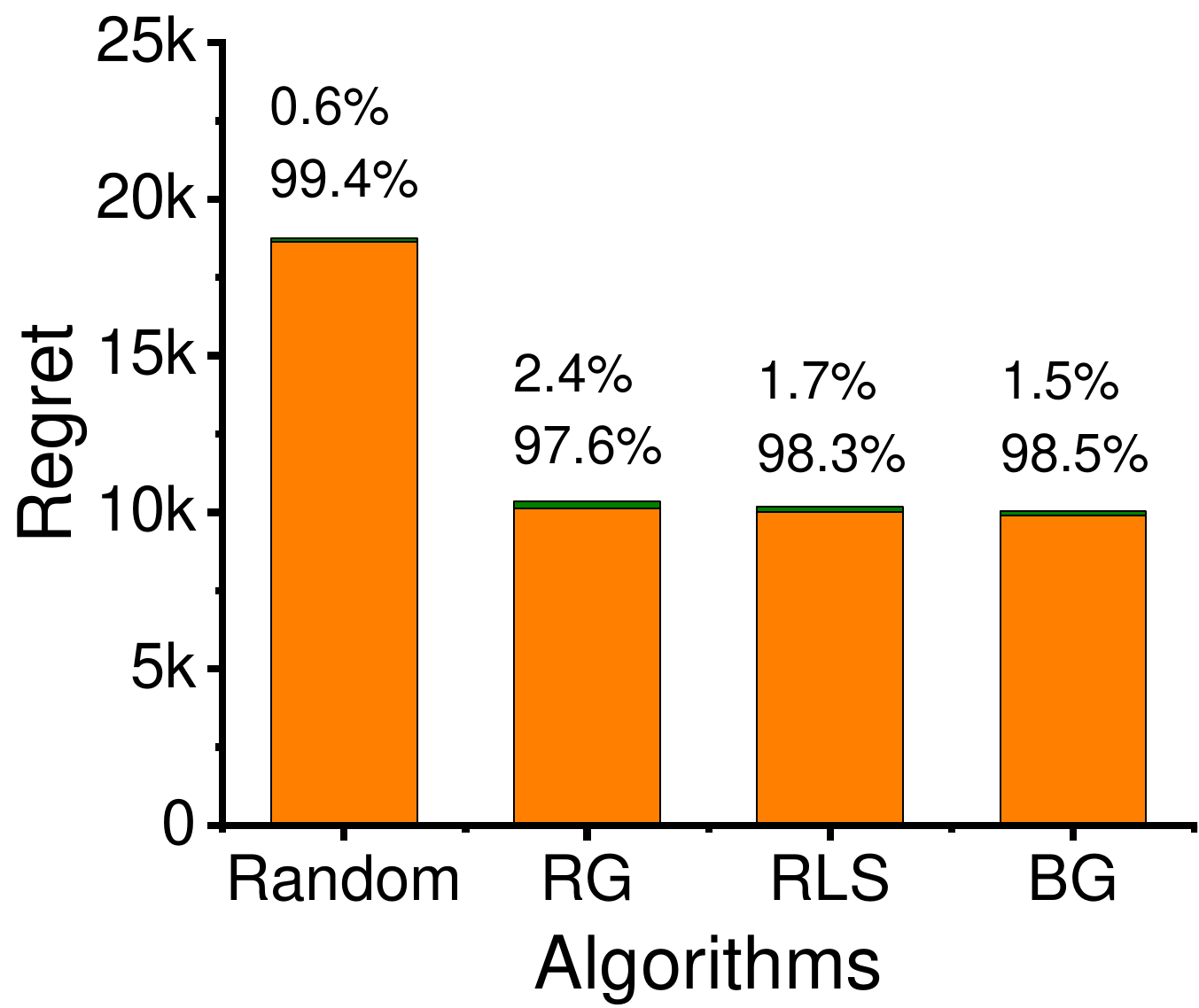} &
\includegraphics[scale=0.156]{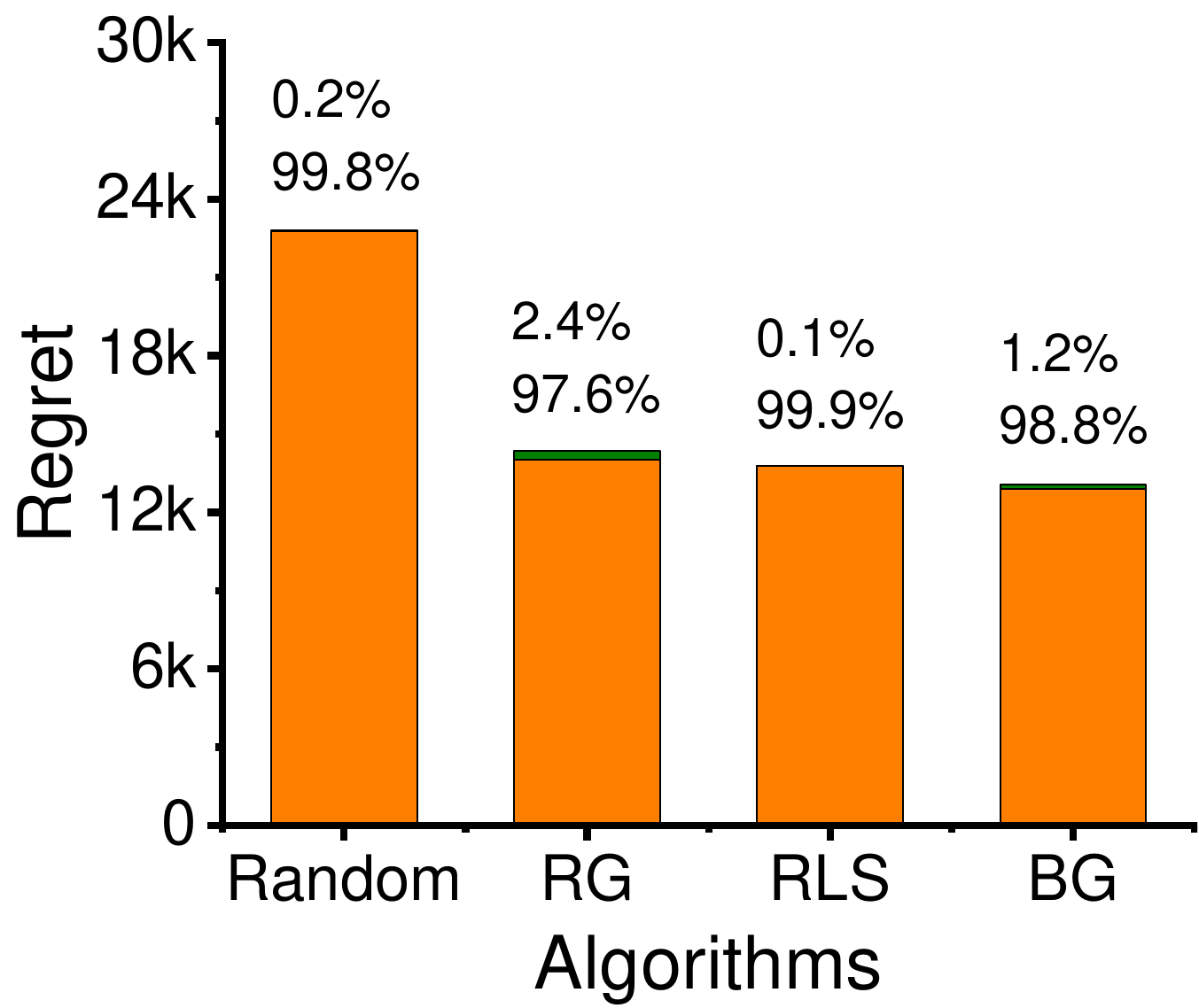}\\
\tiny{(k) $\alpha = 40\%$} &  \tiny{$(\ell)$ $\alpha = 60\%$} & \tiny{(m) $\alpha = 80\%$} & \tiny{(n) $\alpha = 100\%$} & \tiny{(o) $\alpha = 120\%$}\\
\end{tabular}
\caption{Varying Demand-supply ratio $\alpha$ when $\beta = 1\%, |\mathcal{A}| = 100$ $(a,b,c,d,e)$, when $\beta = 2\%, |\mathcal{A}| = 50$ $(f,g,h,i,j)$, when $\beta = 5\%, |\mathcal{A}| = 20$ $(k,\ell,m,n,o)$ (NYC Dataset)}
\label{Fig:Plot1-3NYC}
\end{figure*}
\subsection{\textbf{Experimental Results and Discussions.}} 
\subsubsection{\textbf{Experiments over NYC Dataset}}
The experimental discussion is categorized into four scenarios as follows. 
\paragraph{\textbf{Case 1. Low $\alpha$, Low $\beta$ of Figure \ref{Fig:Plot1-3NYC} (a,b,c,f,g,h,k,$\ell$,m)}} Corresponding to Case 1, we have a situation where $\alpha \leq 80\%$ and $\beta \leq 5\%$. This refers to a situation in which both global demand and individual advertiser demand are low. The influence provider has a small number of advertisers with low influence demands. We have three main observations. \textbf{First}, when $\alpha$ increases, the excessive influence of all the proposed approaches decreases, and unsatisfied regret increases. This happens because, as $\alpha$ increases, advertisers' demand rises, and since most advertisers are satisfied early, this leads to excessive regret. \textbf{Second}, among the proposed approaches, `BG' outperforms `RG' and `RLS' in terms of both excessive and unsatisfied regret. This happens because the `RG' and `RLS' use randomization when allocating slots to advertisers. \textbf{Third}, the `Random' approach incurs the highest total regret compared to the proposed `BG',`RG', and `RLS'. However, the demand-supply ratio $(\alpha)$ is very low, and the influence provider cannot satisfy all advertisers' demand because advertisers have tag-specific influence demand.
%%%%%%%%%%%%%%%%%%%%%%%%%%%%%   Algorithm Vs. Regret NYC %%%%%%%%%%%%%%%%%%%%%%%%%%%%%%%
\begin{figure*}[h!]
\centering
    \begin{tabular}{lclc}
       Excessive Regret & \includegraphics[width=0.11\linewidth]{Unsatisfied.png} & Unsatisfied Regret & \includegraphics[width=0.11\linewidth]{Excessive.png} \\
    \end{tabular}
\setlength{\tabcolsep}{0.1pt} % tighter spacing between columns
\renewcommand{\arraystretch}{0.9} % tighter spacing between rows
\begin{tabular}{ccccc}
\includegraphics[scale=0.156]{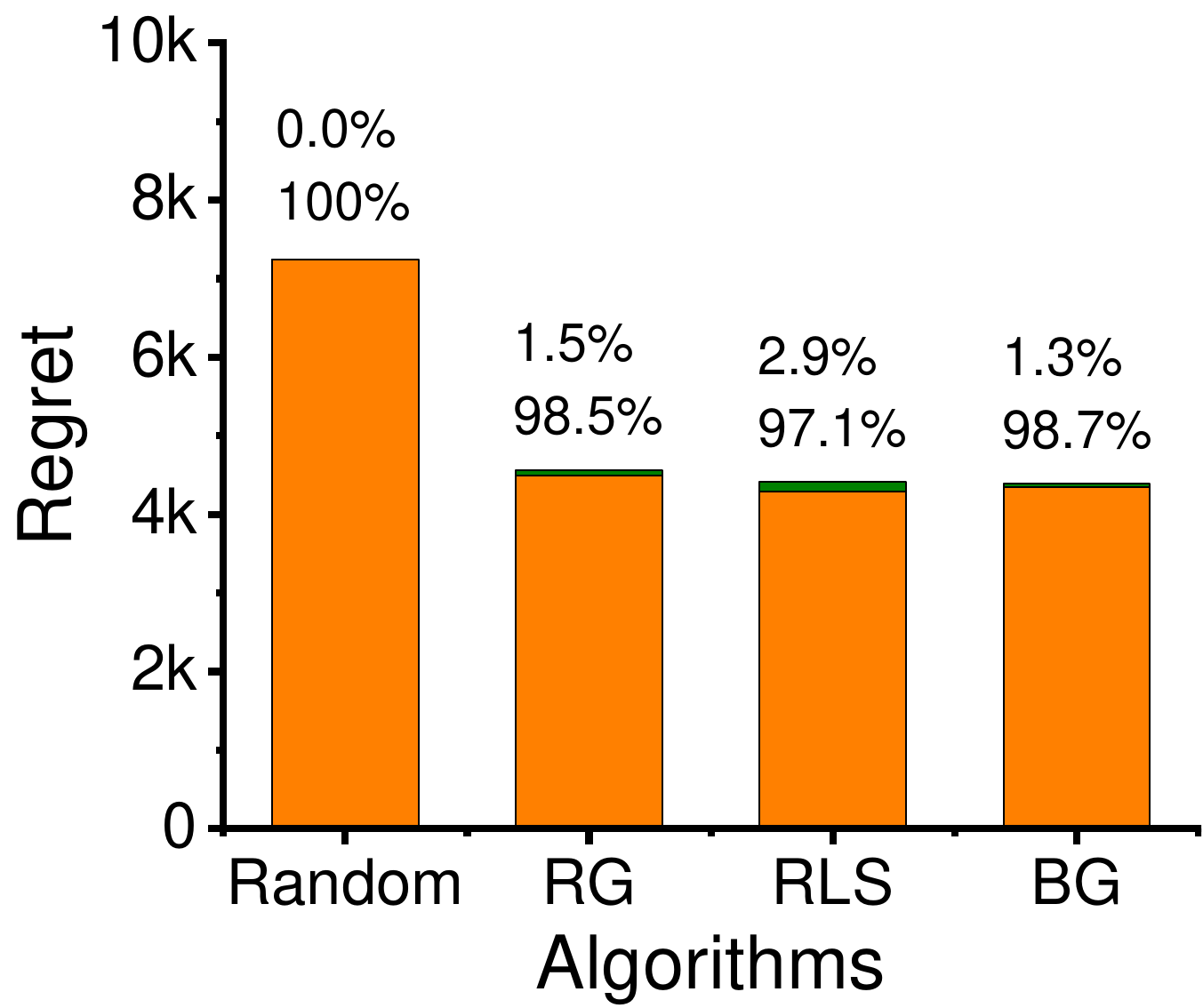} & \includegraphics[scale=0.156]{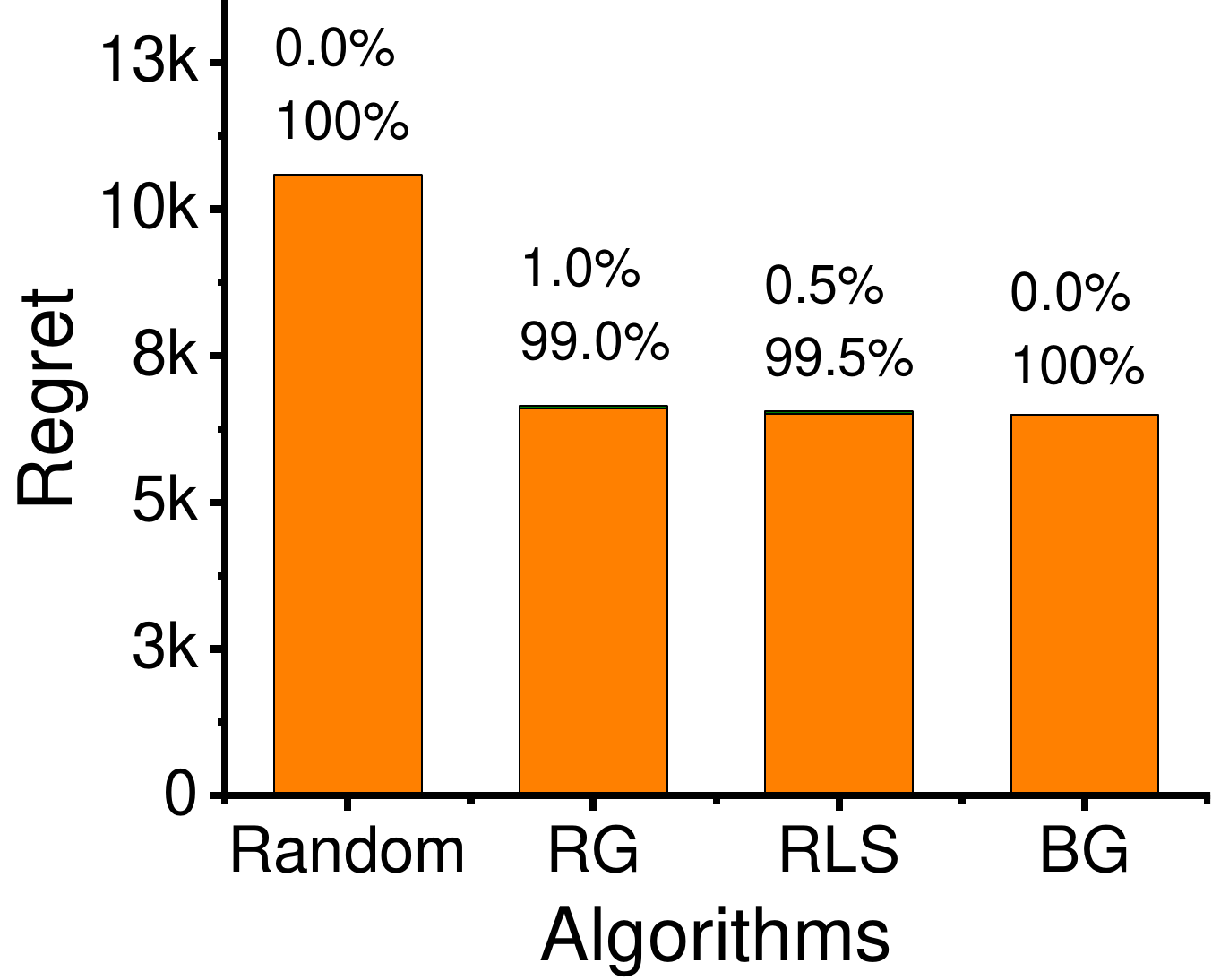} & \includegraphics[scale=0.156]{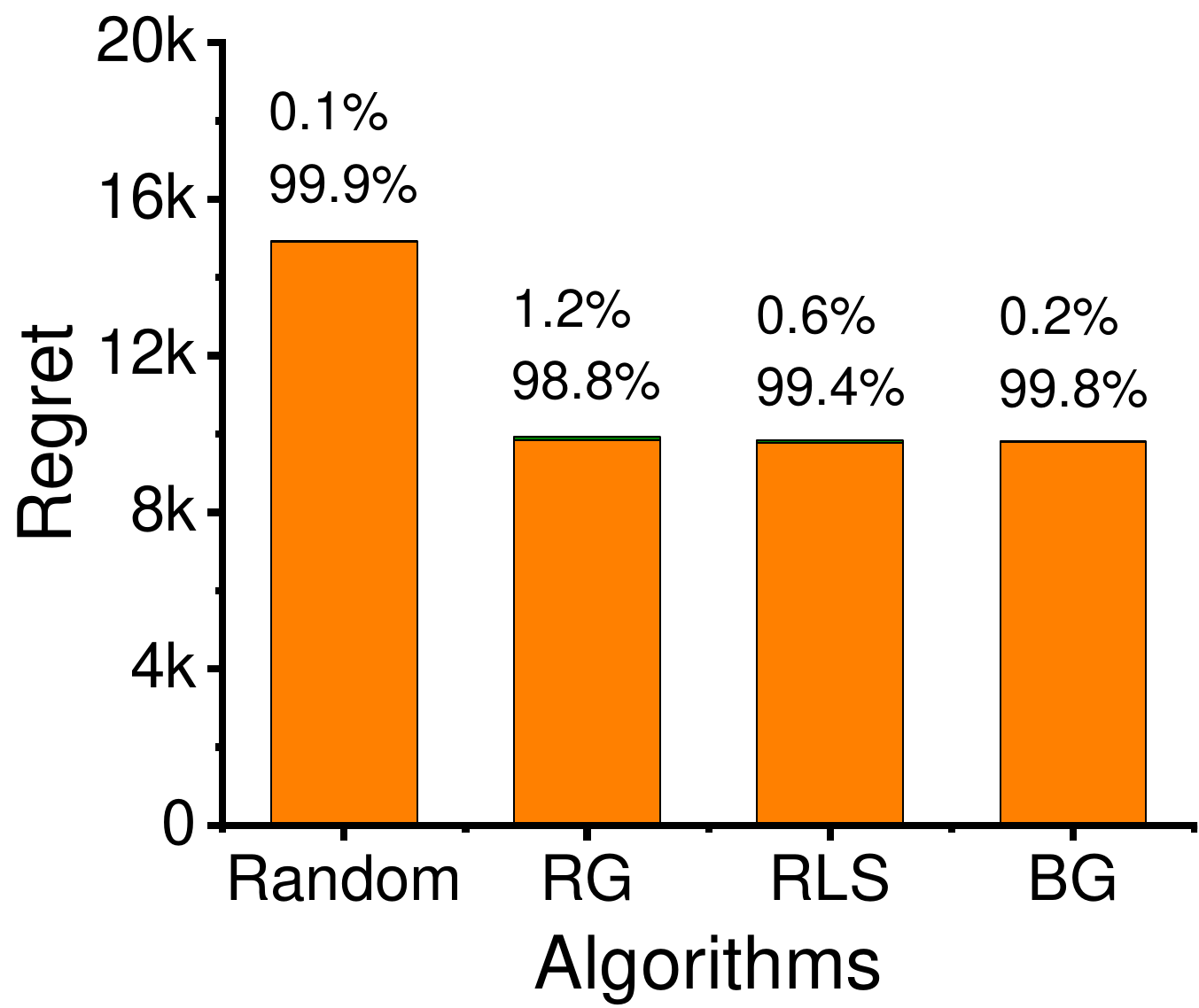} &
\includegraphics[scale=0.156]{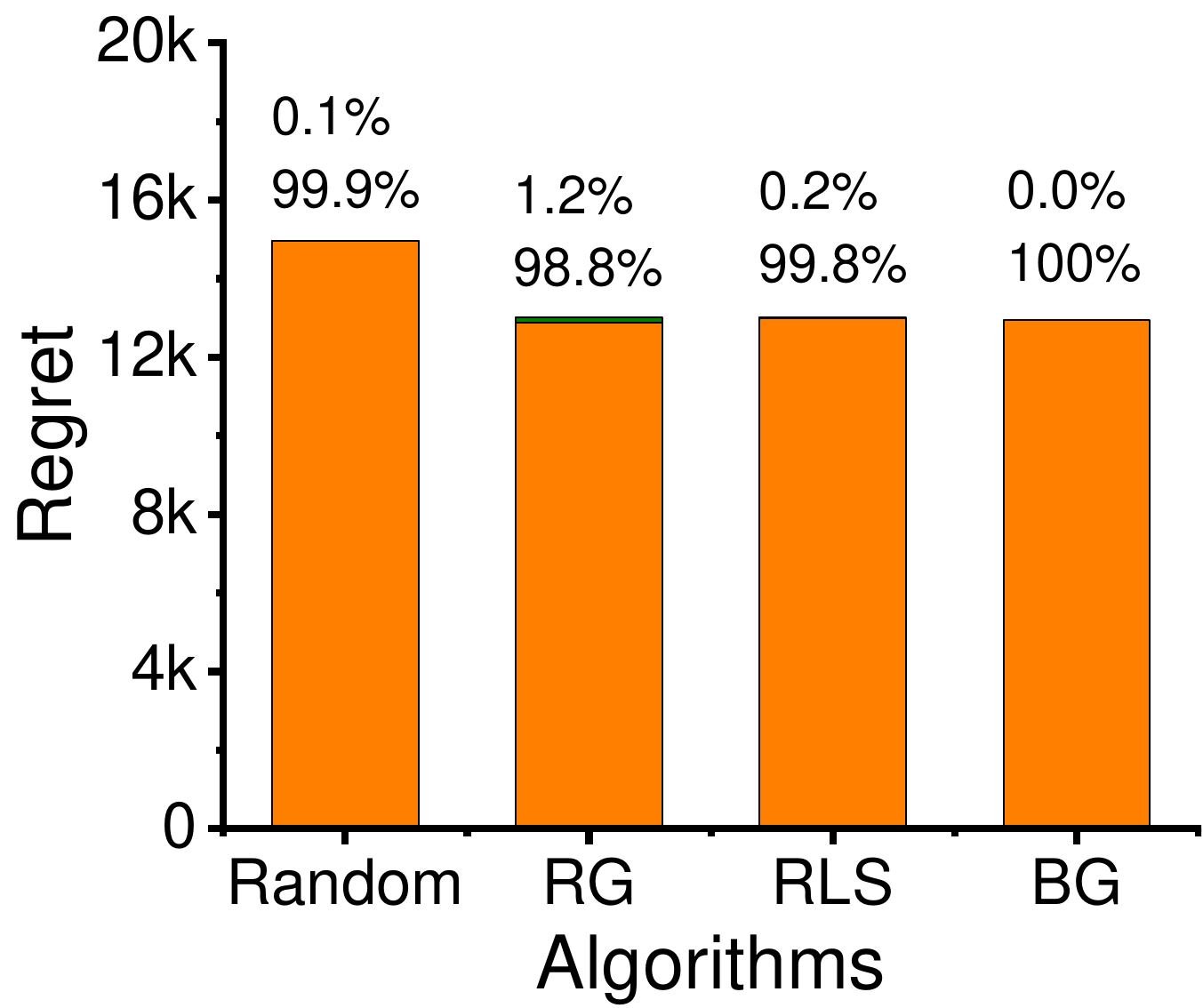} &
\includegraphics[scale=0.156]{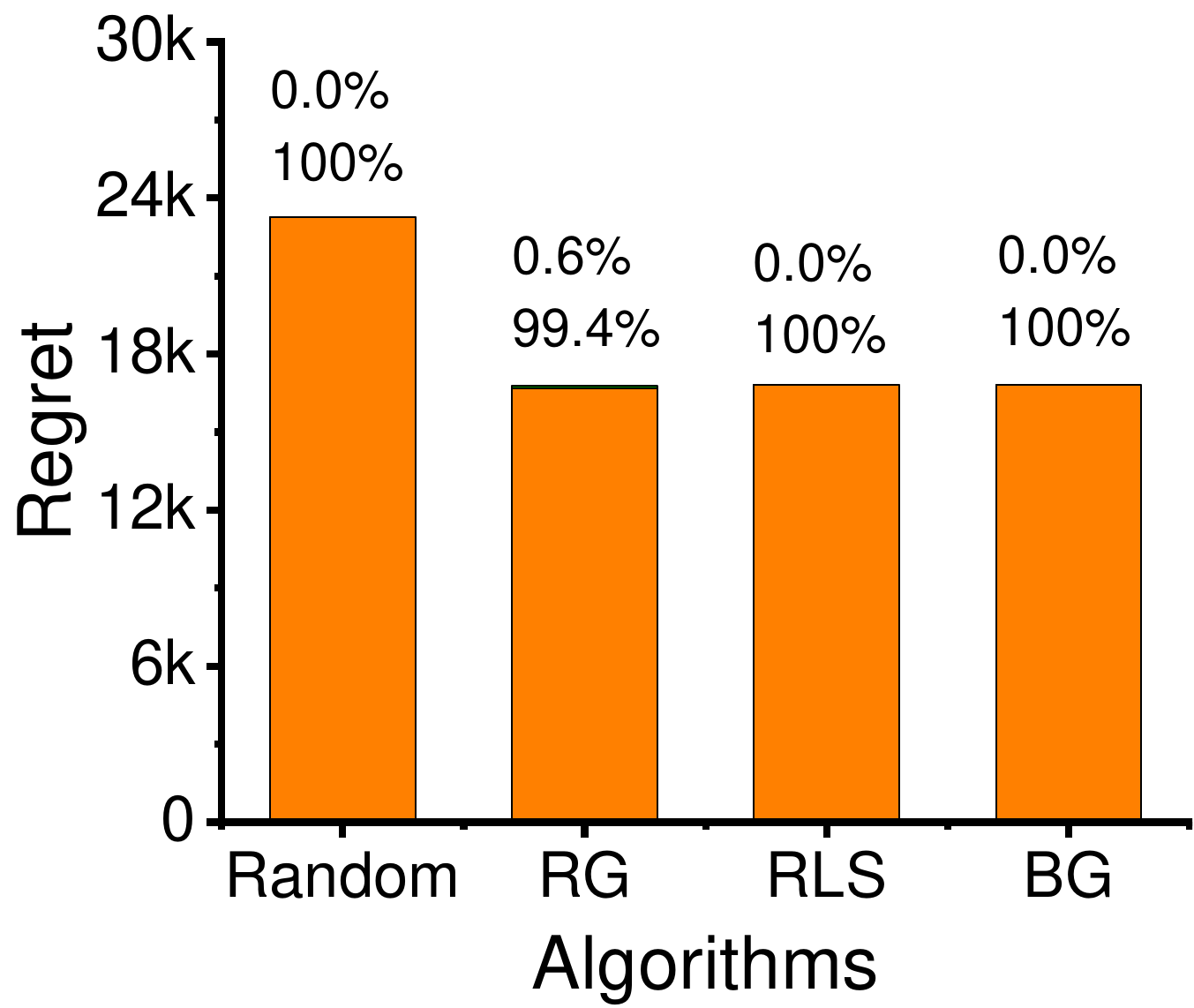}\\
\tiny{(a) $\alpha = 40\%$} &  \tiny{(b) $\alpha = 60\%$} & \tiny{(c) $\alpha = 80\%$} & \tiny{(d) $\alpha = 100\%$} & \tiny{(e) $\alpha = 120\%$}\\
\includegraphics[scale=0.156]{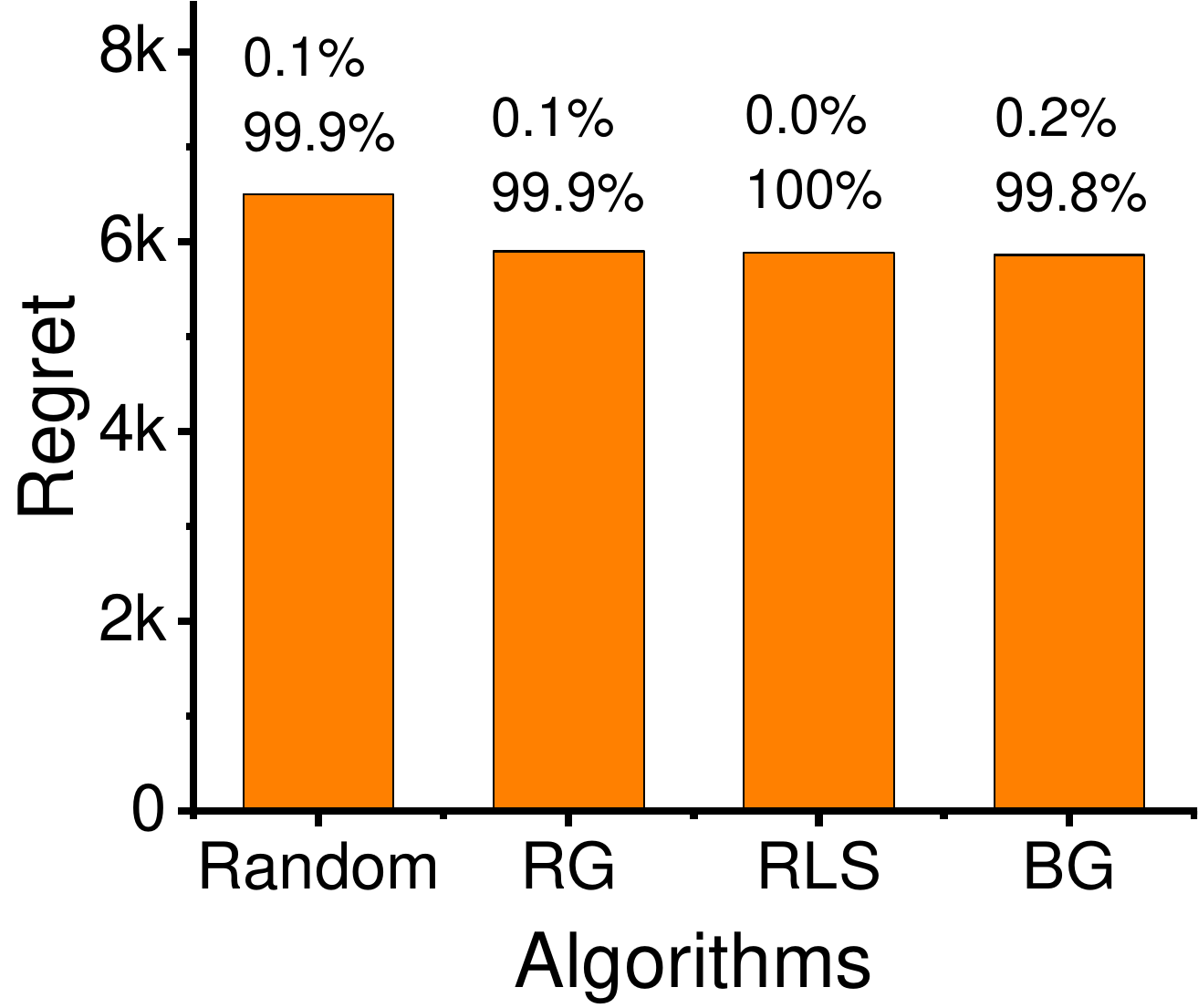} & \includegraphics[scale=0.156]{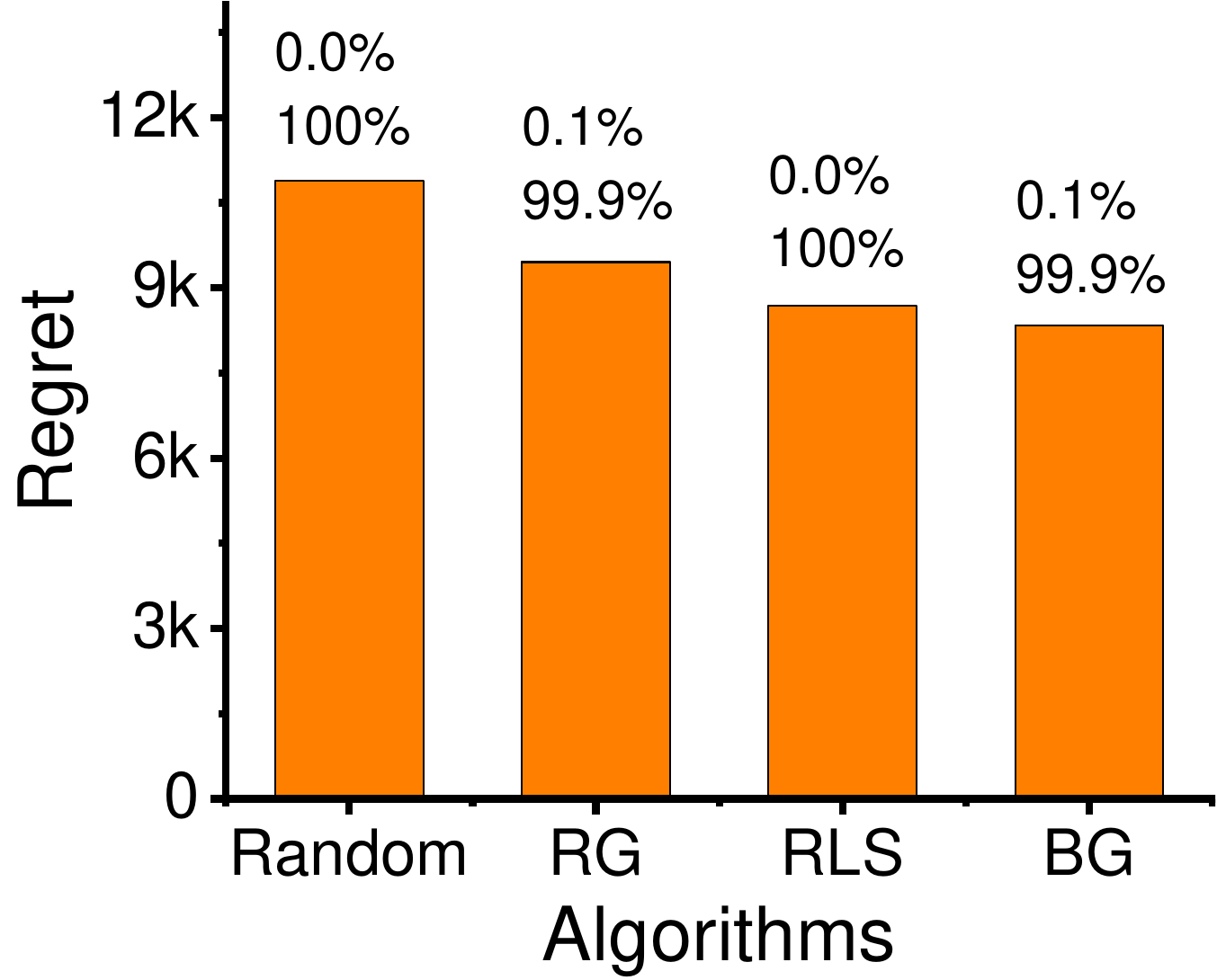} & \includegraphics[scale=0.156]{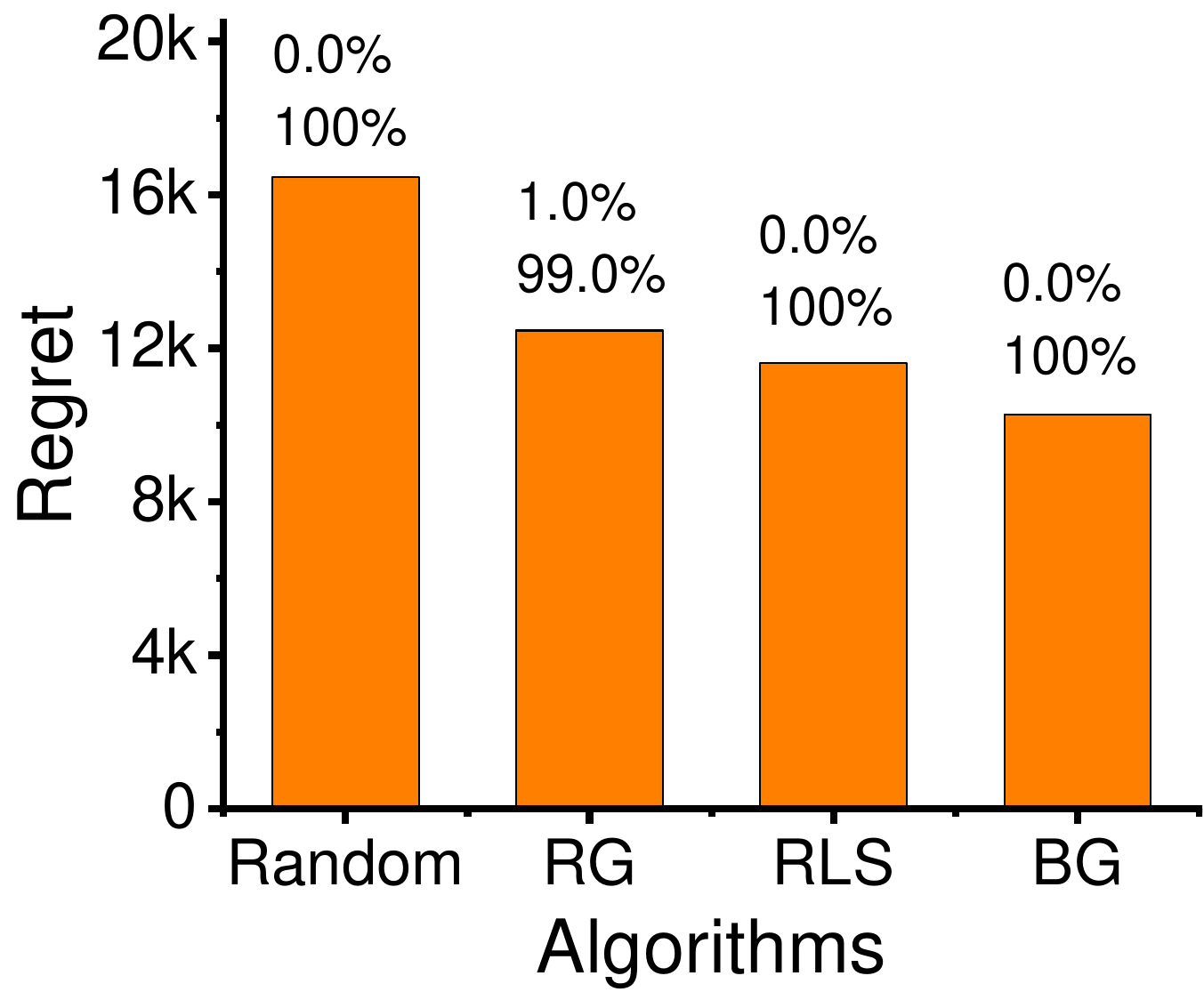} &
\includegraphics[scale=0.156]{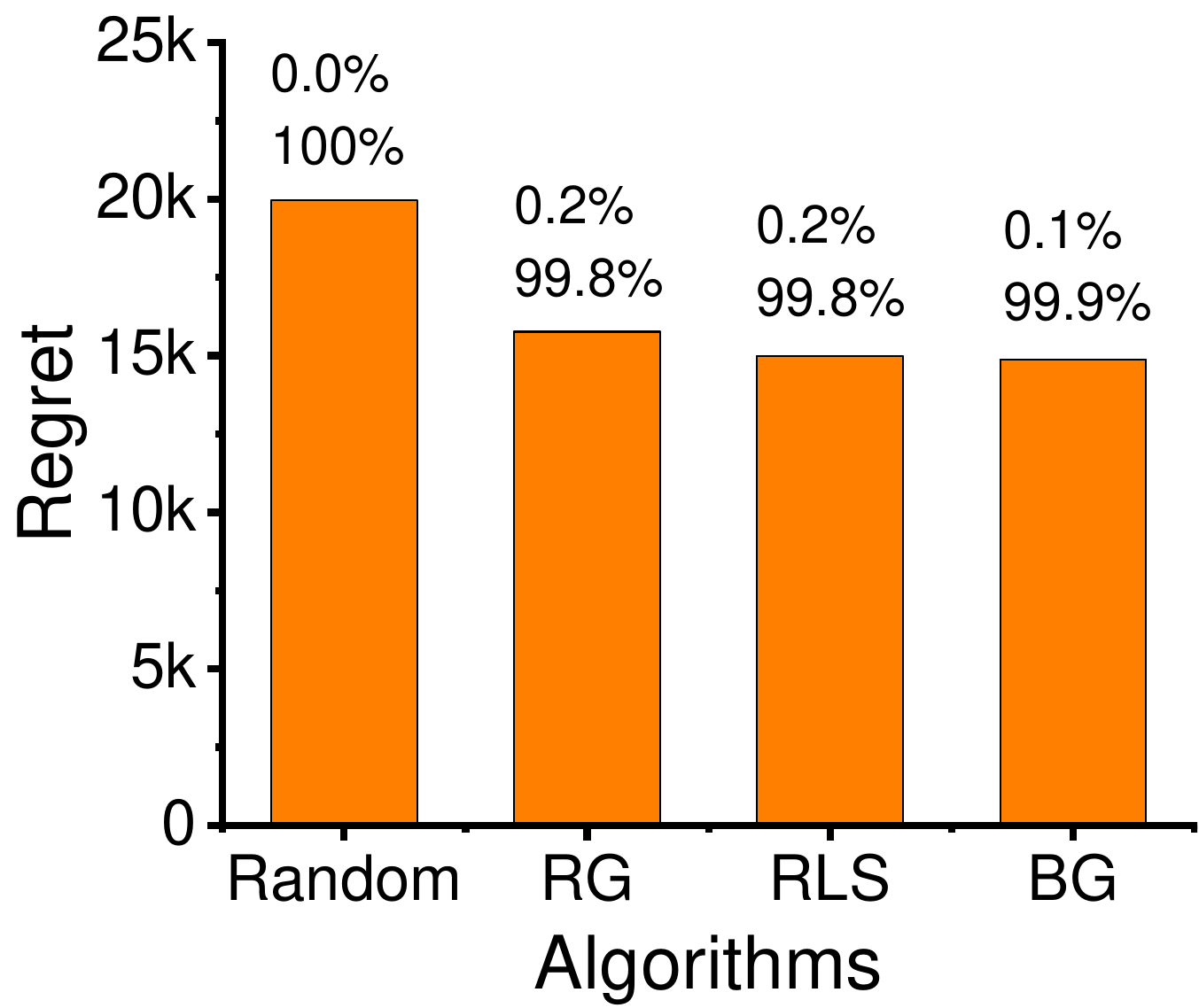} &
\includegraphics[scale=0.156]{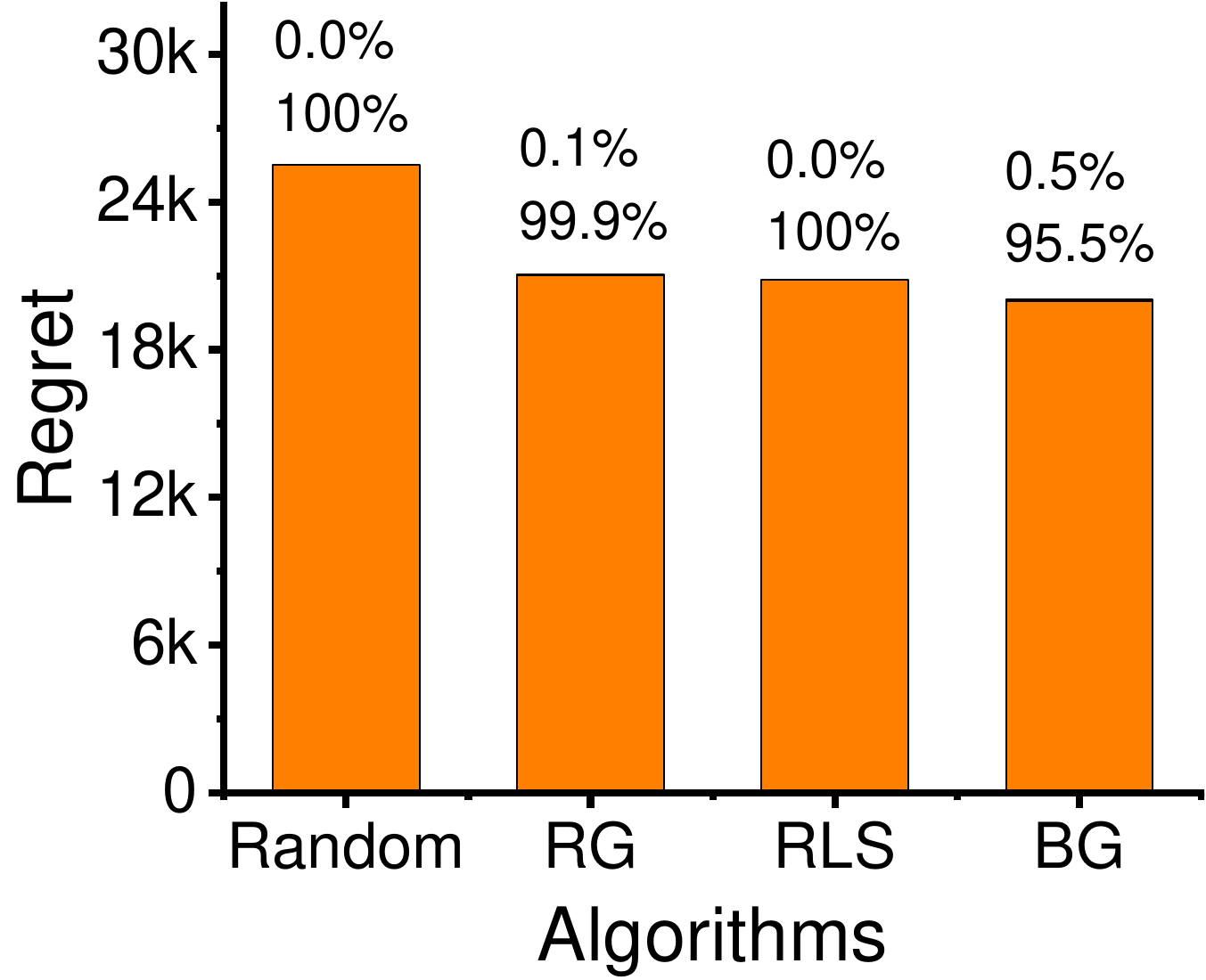}\\
\tiny{(f) $\alpha = 40\%$} &  \tiny{(g) $\alpha = 60\%$} & \tiny{(h) $\alpha = 80\%$} & \tiny{(i) $\alpha = 100\%$} & \tiny{(j) $\alpha = 120\%$}\\
\end{tabular}
\caption{Varying Demand-supply ratio, $\alpha$ when $\beta = 10\%, |\mathcal{A}| = 10$, when $\beta = 20\%, |\mathcal{A}| = 5$ (NYC Dataset)}
\label{Fig:Plot4-5NYC}
\end{figure*}
%%%%%%%%%%%%%%%%%%%%%%%%%%%%%   Algorithm Vs. Regret LA %%%%%%%%%%%%%%%%%%%%%%%%%%%%%%%
\begin{figure*}[h!]
\centering
    \begin{tabular}{lclc}
       Excessive Regret & \includegraphics[width=0.11\linewidth]{Unsatisfied.png} & Unsatisfied Regret & \includegraphics[width=0.11\linewidth]{Excessive.png} \\
    \end{tabular}
\setlength{\tabcolsep}{0.1pt} % tighter spacing between columns
\renewcommand{\arraystretch}{0.9} % tighter spacing between rows
\begin{tabular}{ccccc}
\includegraphics[scale=0.156]{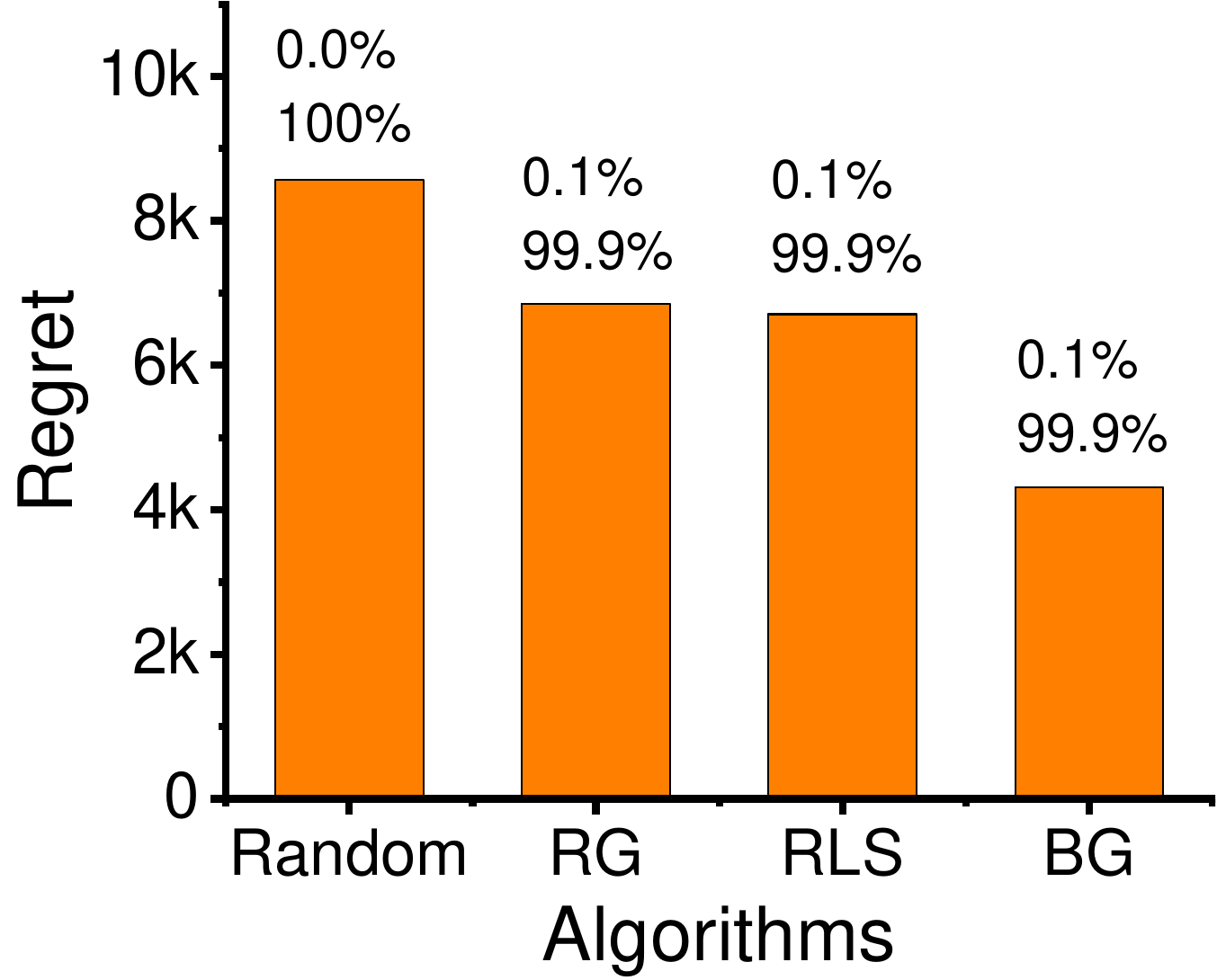} & \includegraphics[scale=0.156]{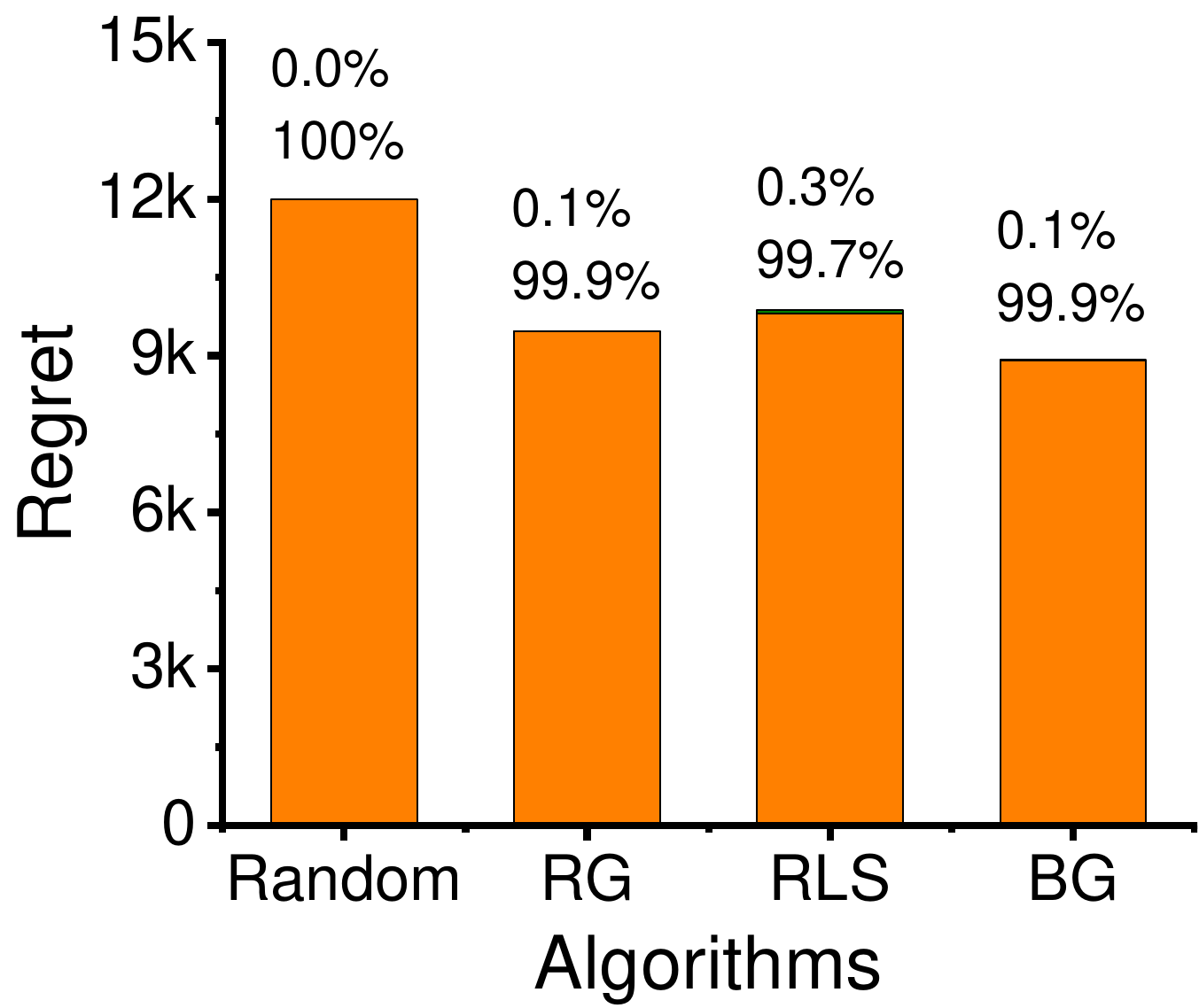} & \includegraphics[scale=0.156]{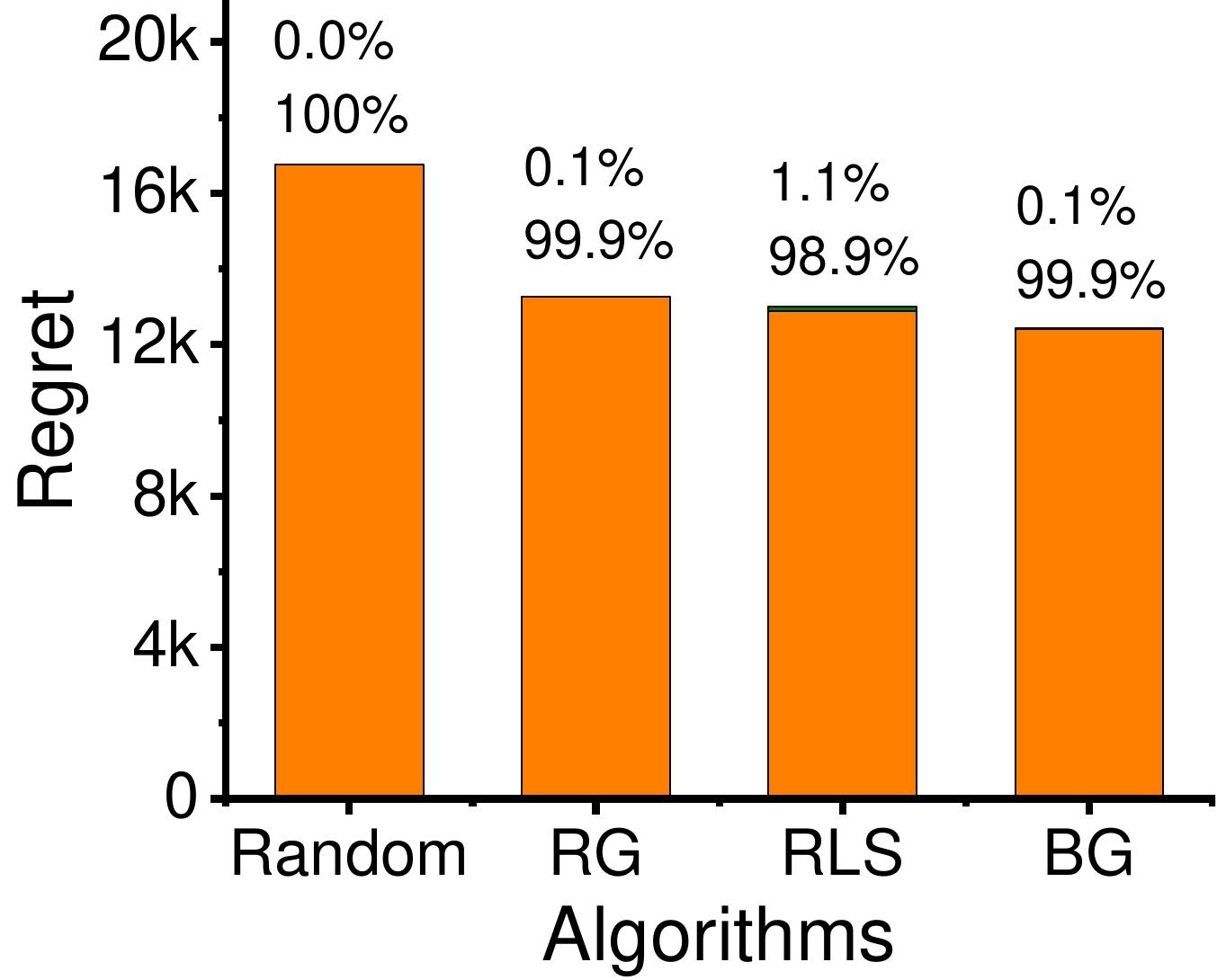} &
\includegraphics[scale=0.156]{Algo-Vs-Regret/LA/EXP14.pdf} &
\includegraphics[scale=0.156]{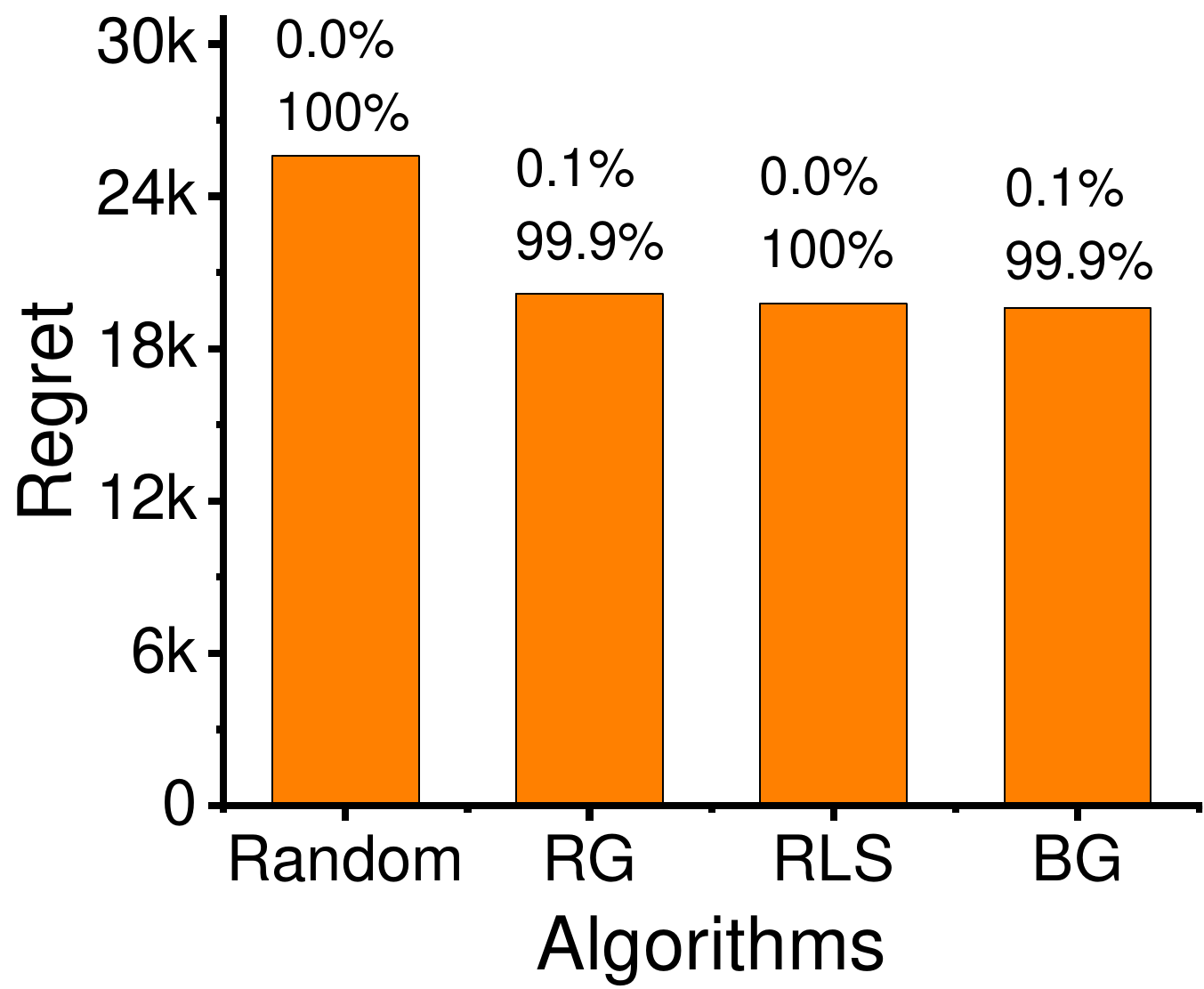}\\
\tiny{(a) $\alpha = 40\%$} &  \tiny{(b) $\alpha = 60\%$} & \tiny{(c) $\alpha = 80\%$} & \tiny{(d) $\alpha = 100\%$} & \tiny{(e) $\alpha = 120\%$}\\
\end{tabular}
\caption{Varying Demand-supply ratio, $\alpha$ when $\beta = 5\%, |\mathcal{A}| = 20$ (LA Dataset)}
\label{Fig:Plot1LA}
\end{figure*}

\paragraph{\textbf{Case 2: Low $\alpha$, High $\beta$ of Figure \ref{Fig:Plot4-5NYC} (a,b,c,f,g,h)}} Corresponding to Case 2, we have $\alpha \leq 80\%$ and $\beta \geq 10\%$. This denotes a situation in which global influence demand remains low, while individual demand is higher. The influence provider has a small number of advertisers, each with higher demand. We have two main observations. \textbf{First}, as global demand remains low and individual demand increases, the overall regret across all algorithms decreases. With the increase of $\beta$, the number of advertisers decreases, but the individual demand of the advertisers increases. That is, the supply of influence is close to the demand of the advertisers, and consequently, the excessive regret drops. \textbf{Second}, with higher individual demand, the advertisers could deploy more slots, and the `RLS' can explore a larger neighborhood search space by allocating slots to the advertisers.

\paragraph{\textbf{Case 3: High $\alpha$, Low $\beta$ of Figure \ref{Fig:Plot1-3NYC} (d,e,i,j,n,o)}} Corresponding to Case 3, we have $\alpha \geq 100\%$ and $\beta \leq 5\%$. This denotes a situation where the global demand is high, but individual influence demand is low. The influence provider has a large number of advertisers with small demand. We have two main observations. \textbf{First}, with very high global demand ($ \alpha \geq 100\%$), none of the algorithms satisfy all the advertisers. Consider the situation where $\alpha \geq 100\%$ and $\beta = 1\%$. The overall regret consists of excessive regret, even though not all advertisers are fully satisfied. Now, with increasing $\beta$, the excessive regret drops and only unsatisfied regret remains.\textbf{ Second}, when $\alpha = 100\%$, the excessive regret is less, and the unsatisfied regret is more. The influence provider cannot satisfy all advertisers due to the tag-specific influence demands of those advertisers. Although the influence provider has many slots remaining, he cannot accommodate them for the advertisers. One point needs to be noted that the `RLS' methods satisfy almost an equal number of advertisers as the `BG'. Whereas the `RG' and `Random' satisfy a smaller number of advertisers than the `RLS'.

%%%%%%%%%%%%%%%%%%%%%%%%%%%%%   alpha Vs. Time NYC %%%%%%%%%%%%%%%%%%%%%%%%%%%%
\begin{figure*}[h!]
\centering
\setlength{\tabcolsep}{0.1pt} % tighter spacing between columns
\renewcommand{\arraystretch}{0.9} % tighter spacing between rows
\begin{tabular}{ccccc}
\includegraphics[scale=0.146]{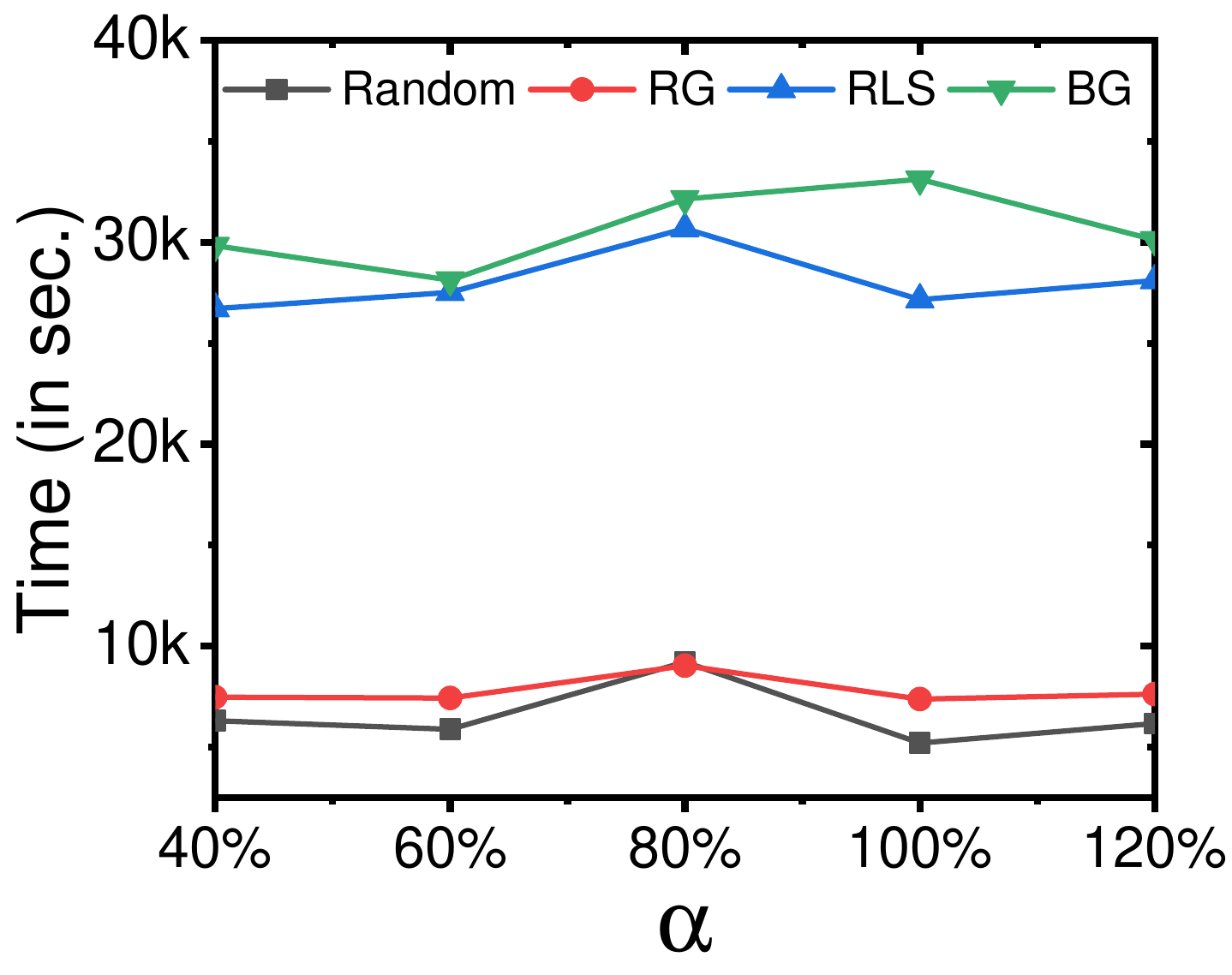} & 
\includegraphics[scale=0.146]{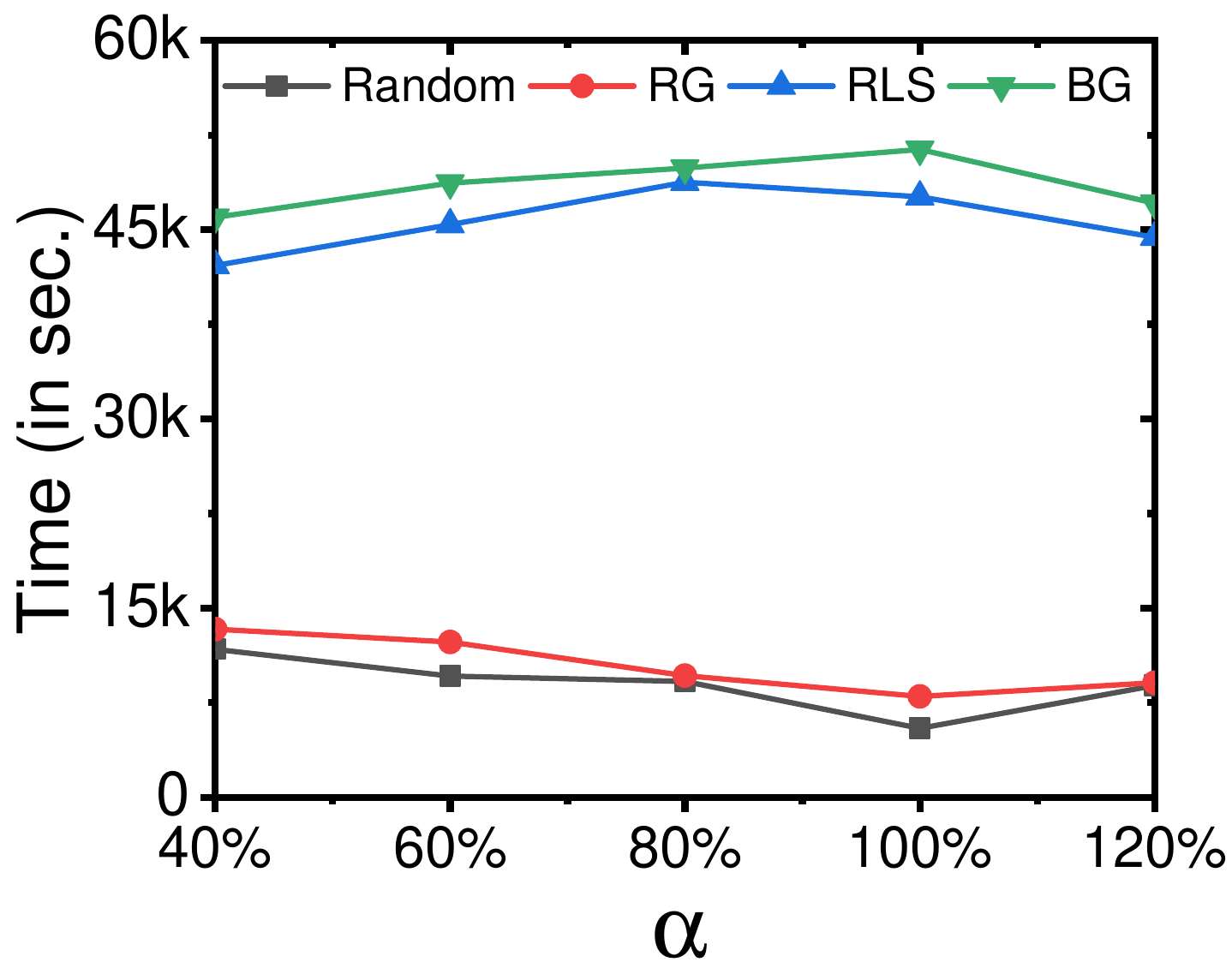} & 
\includegraphics[scale=0.146]{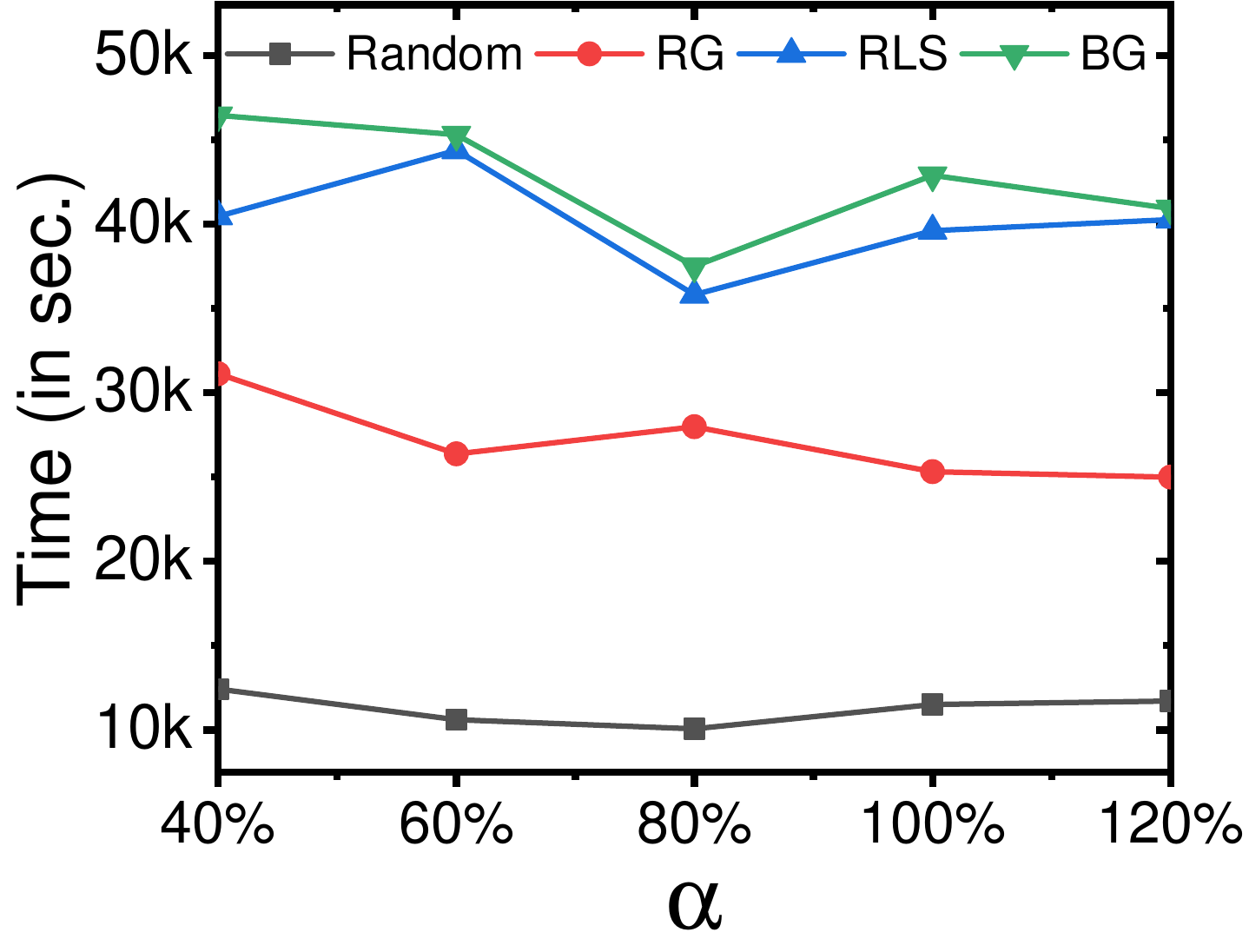} &
\includegraphics[scale=0.146]{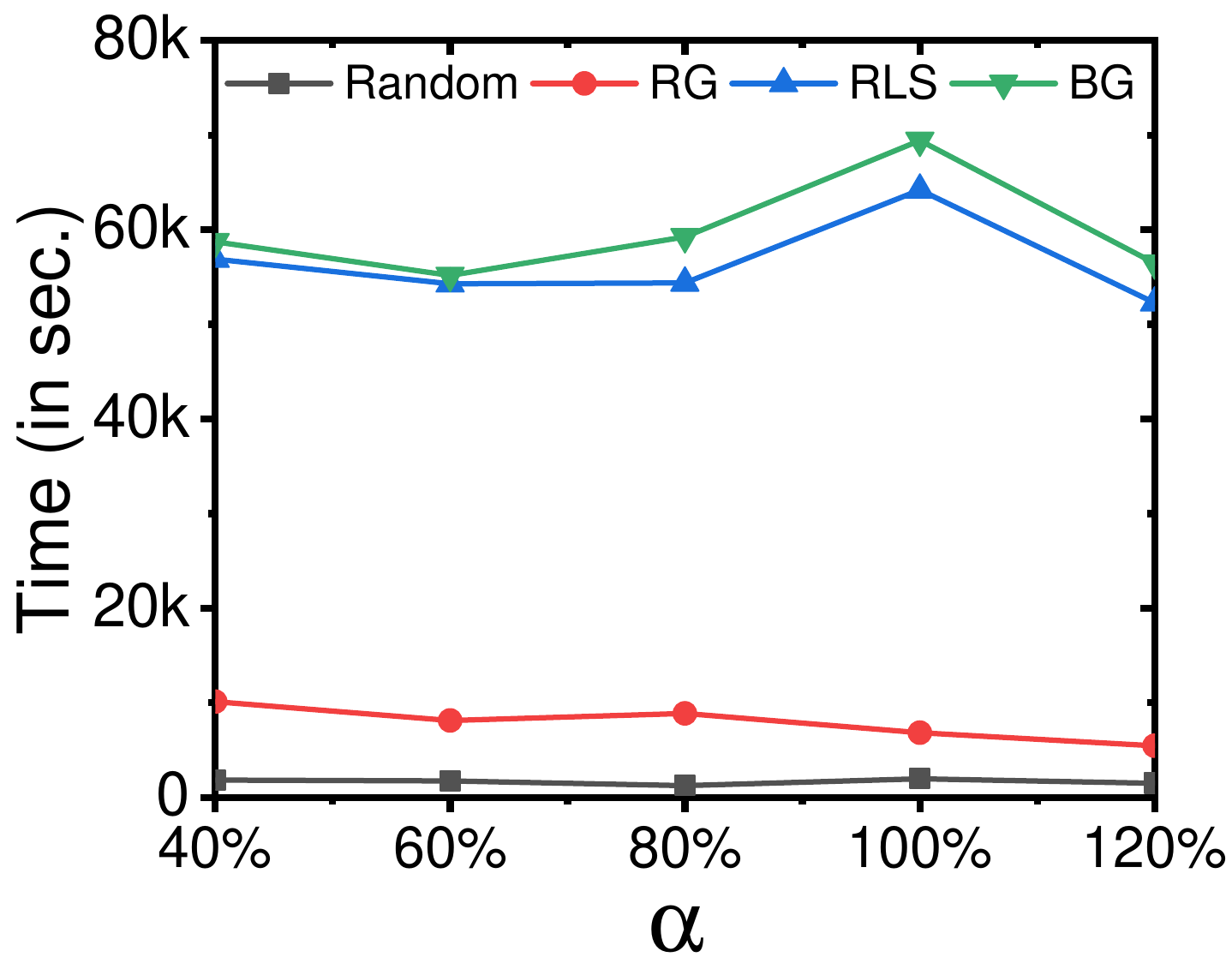} &
\includegraphics[scale=0.146]{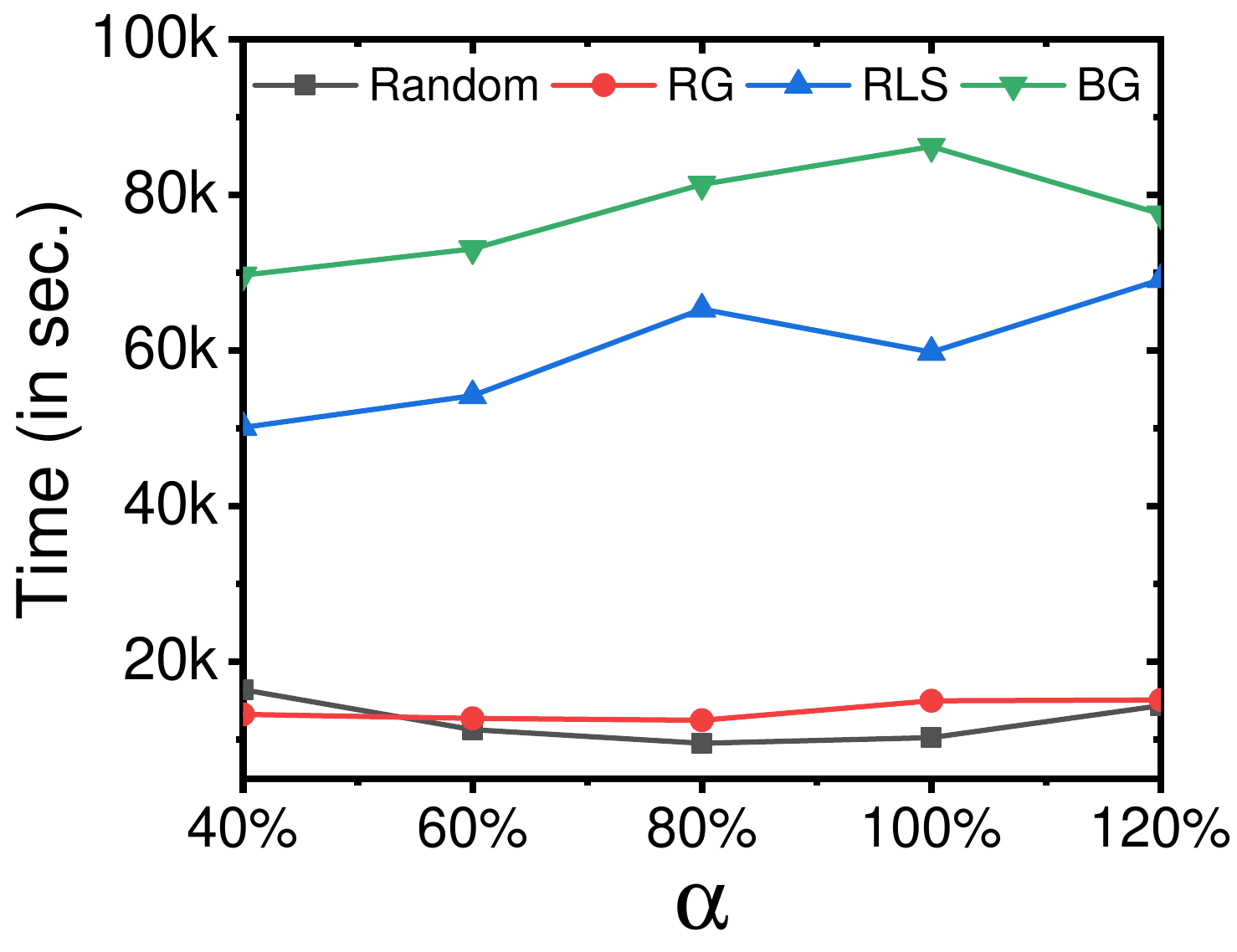}\\
\tiny{(a) $\beta = 1\%, |\mathcal{A}|= 100$} &  \tiny{(b) $\beta = 2\%, |\mathcal{A}|= 50$} & \tiny{(c) $\beta = 5\%, |\mathcal{A}|= 20$} & \tiny{(d) $\beta = 10\%, |\mathcal{A}|= 10$} & \tiny{(e) $\beta = 20\%, |\mathcal{A}|= 5$}\\
\includegraphics[scale=0.146]{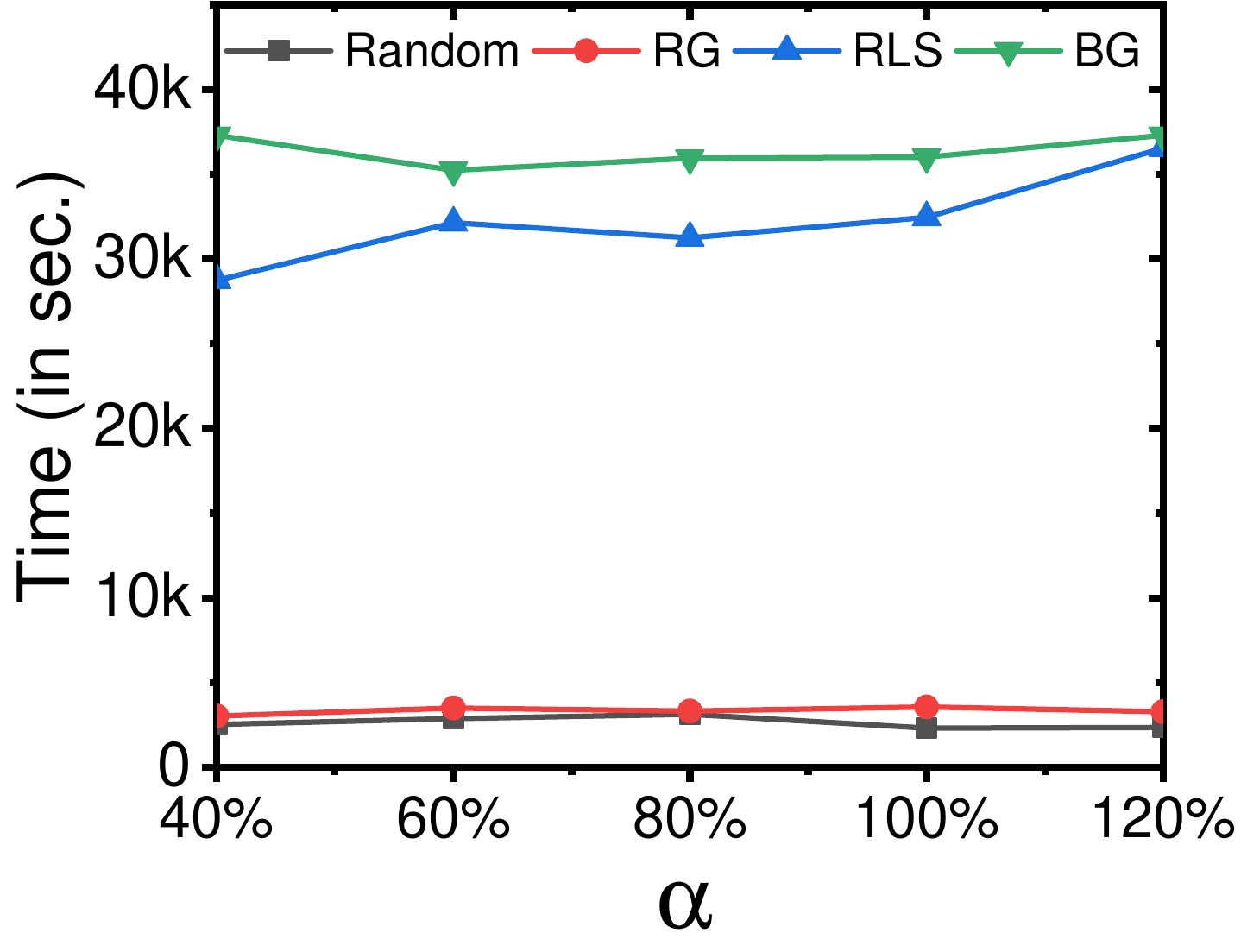} & 
\includegraphics[scale=0.146]{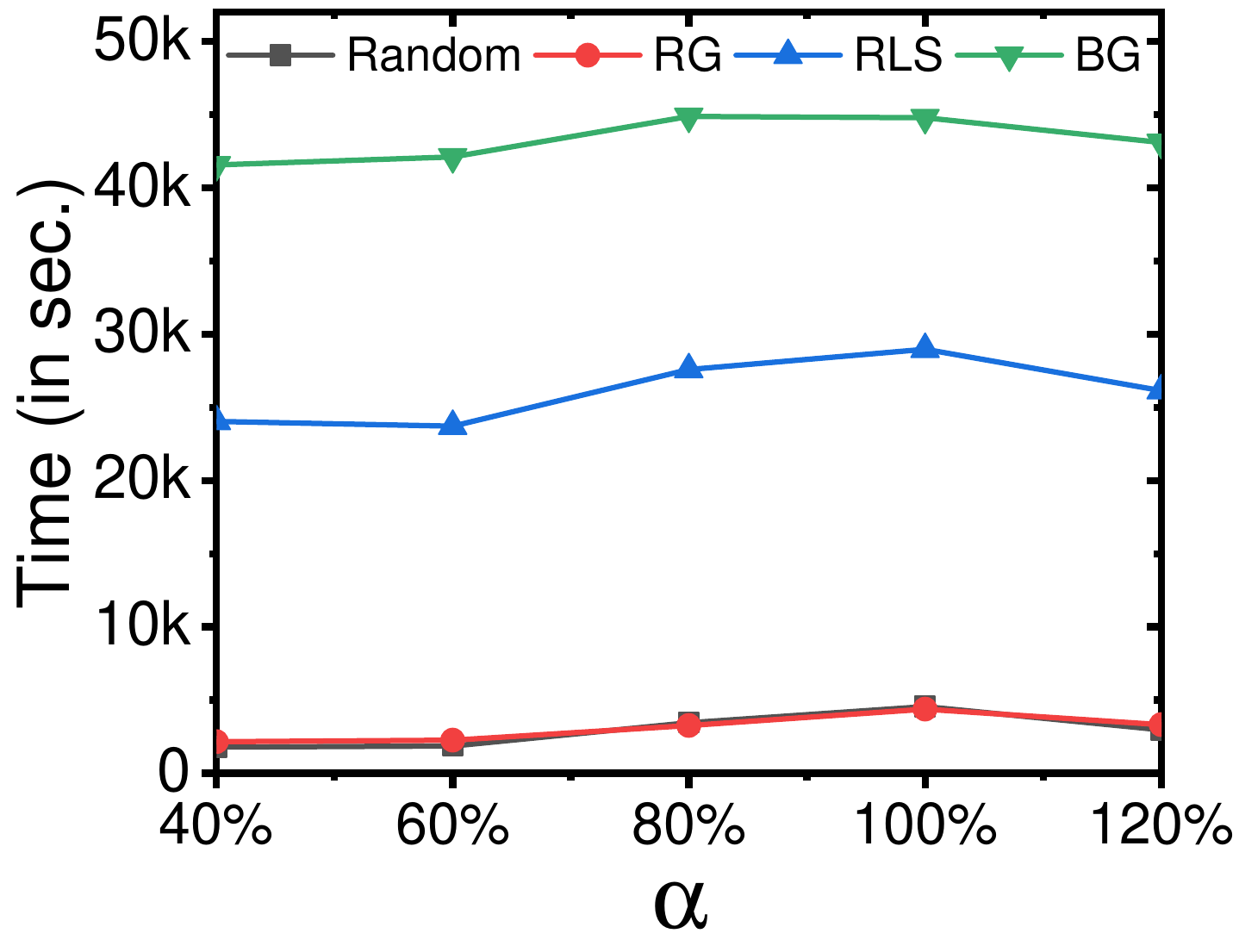} & 
\includegraphics[scale=0.146]{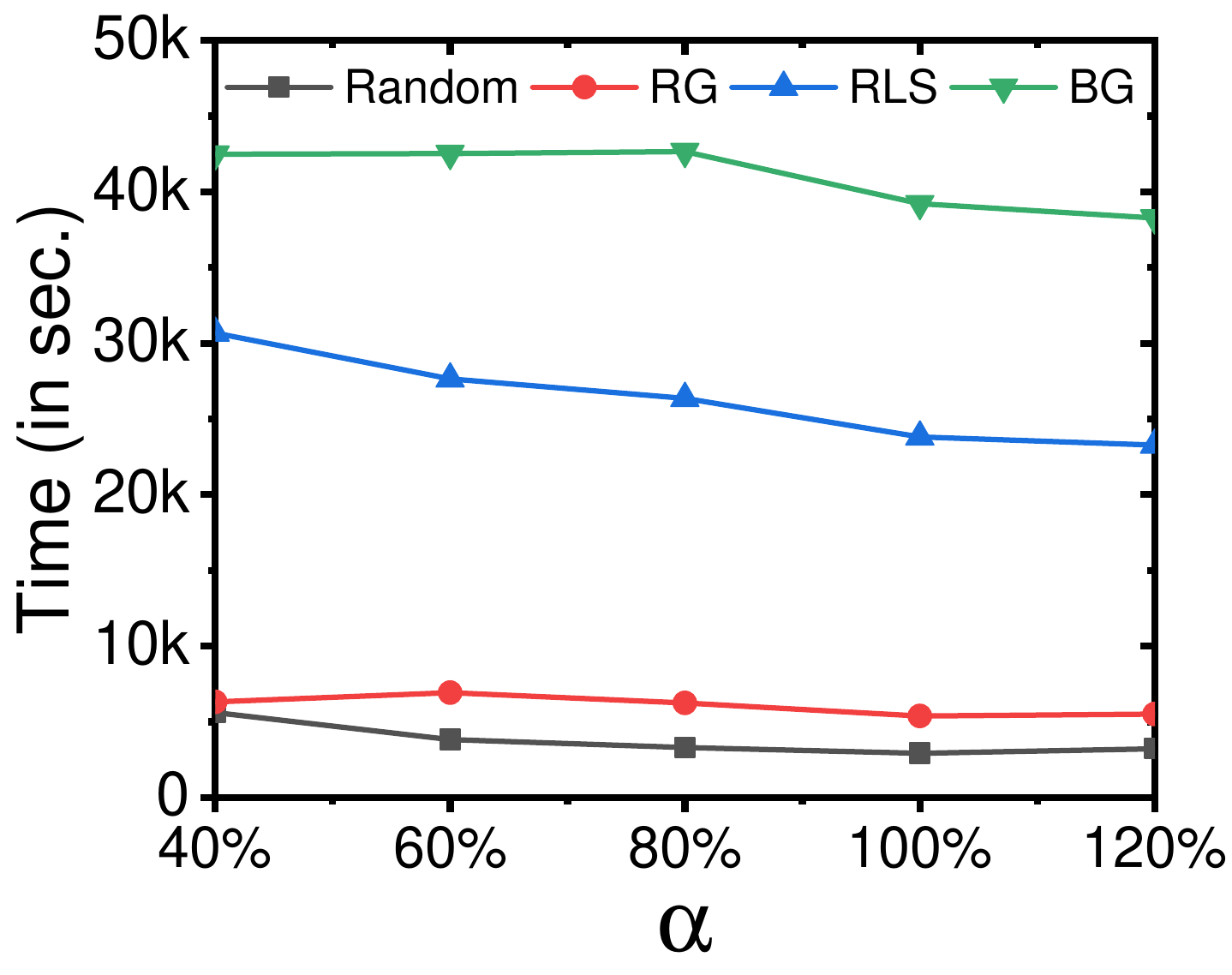} &
\includegraphics[scale=0.146]{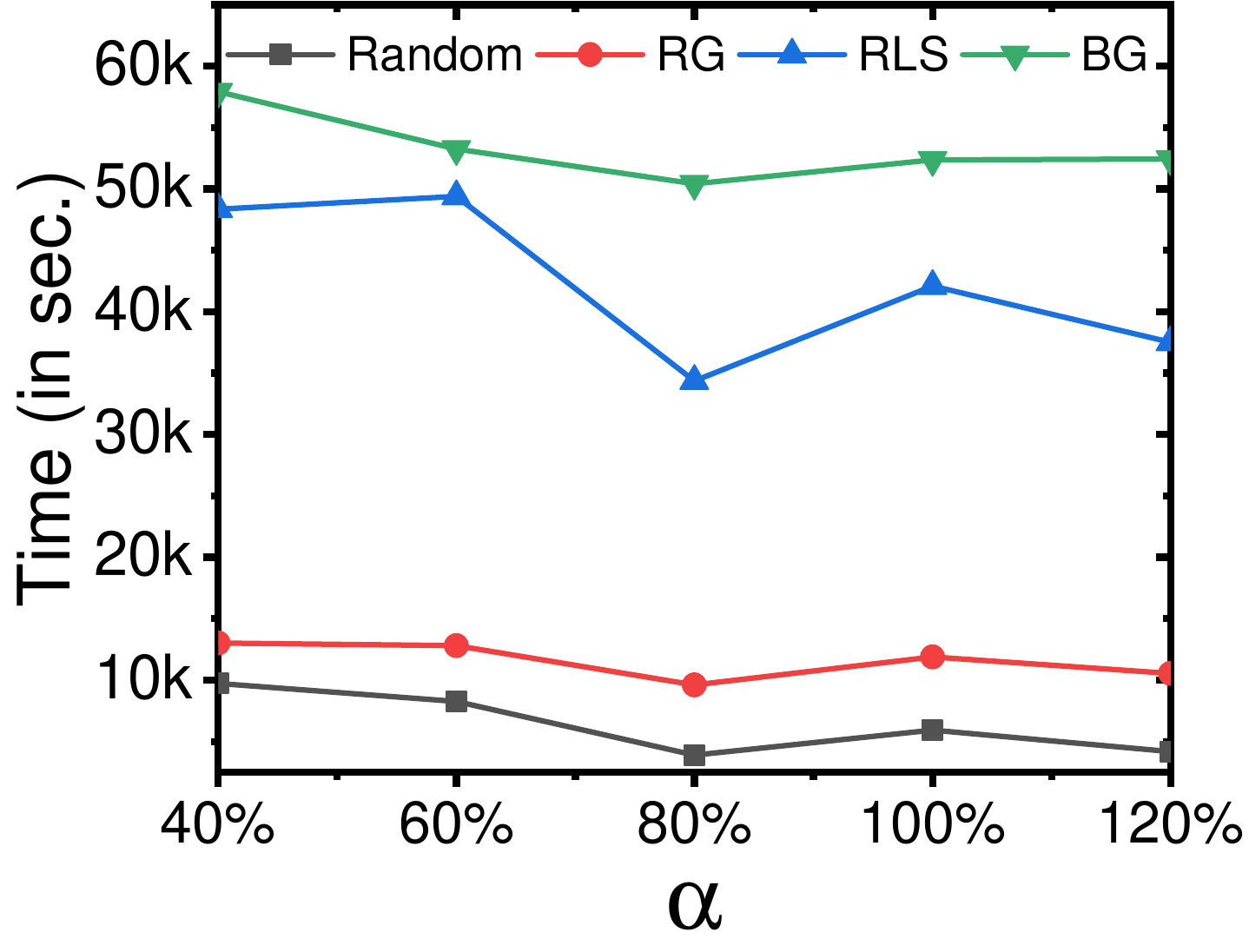} &
\includegraphics[scale=0.146]{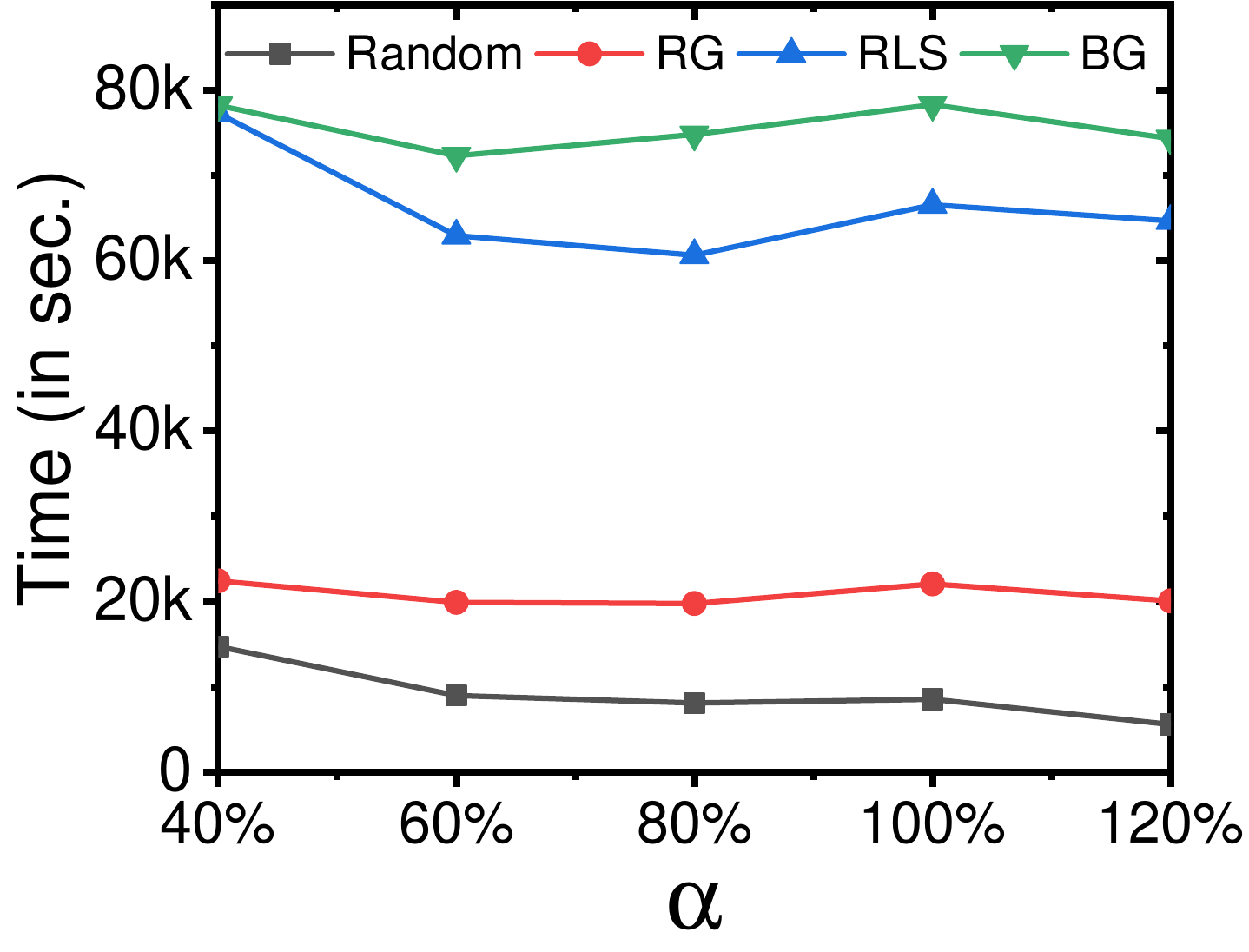}\\
\tiny{(f) $\beta = 1\%, |\mathcal{A}|= 100$} &  \tiny{(g) $\beta = 2\%, |\mathcal{A}|= 50$} & \tiny{(h) $\beta = 5\%, |\mathcal{A}|= 20$} & \tiny{(i) $\beta = 10\%, |\mathcal{A}|= 10$} & \tiny{(j) $\beta = 20\%, |\mathcal{A}|= 5$}\\
\end{tabular}
\caption{Efficiency Test $a,b,c,d,e$ (NYC Dataset), $f,g,h,i,j$ (LA Dataset)}
\label{Fig:PlotTime}
\end{figure*}

%%%%%%%%%%%%%%%%%%%%%%%%%  Varying Delta LA %%%%%%%%%%%%%%%%%%%%%%
\begin{figure*}[h!]
\centering
    \begin{tabular}{lclc}
       Excessive Regret & \includegraphics[width=0.11\linewidth]{Unsatisfied.png} & Unsatisfied Regret & \includegraphics[width=0.11\linewidth]{Excessive.png} \\
    \end{tabular}
\setlength{\tabcolsep}{0.1pt} % tighter spacing between columns
\renewcommand{\arraystretch}{0.9} % tighter spacing between rows
\begin{tabular}{ccccc}
\includegraphics[scale=0.156]{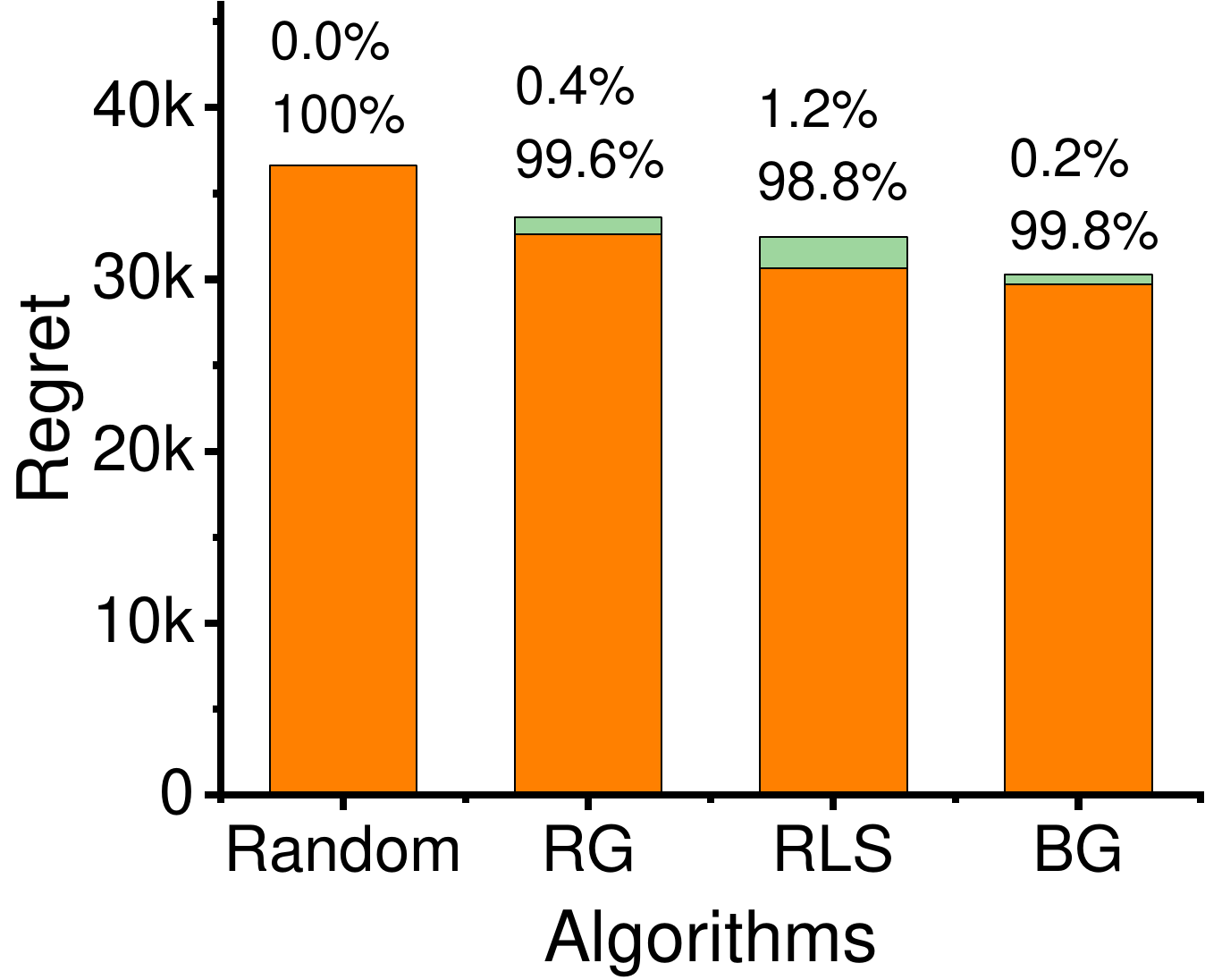} & \includegraphics[scale=0.156]{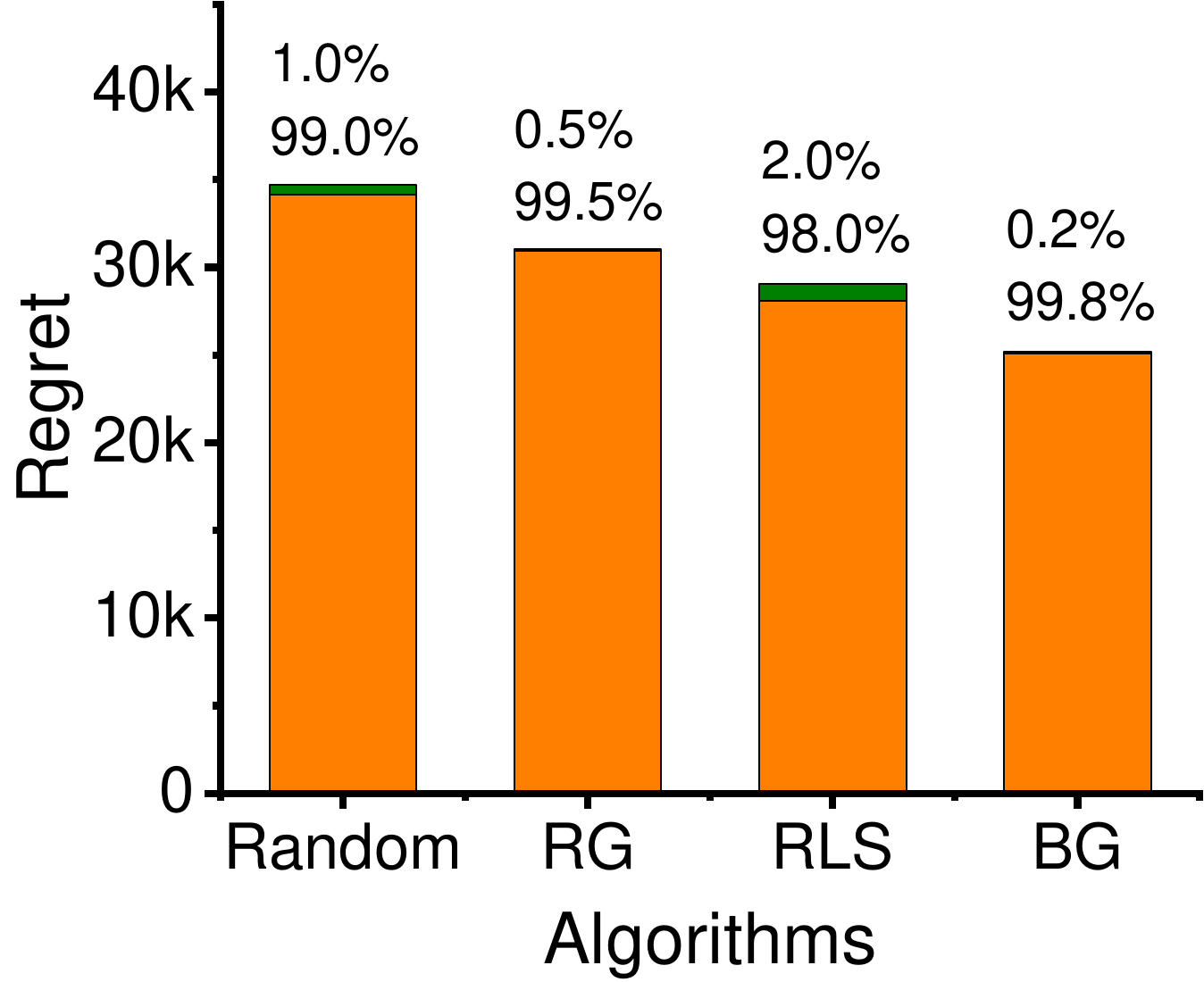} & \includegraphics[scale=0.156]{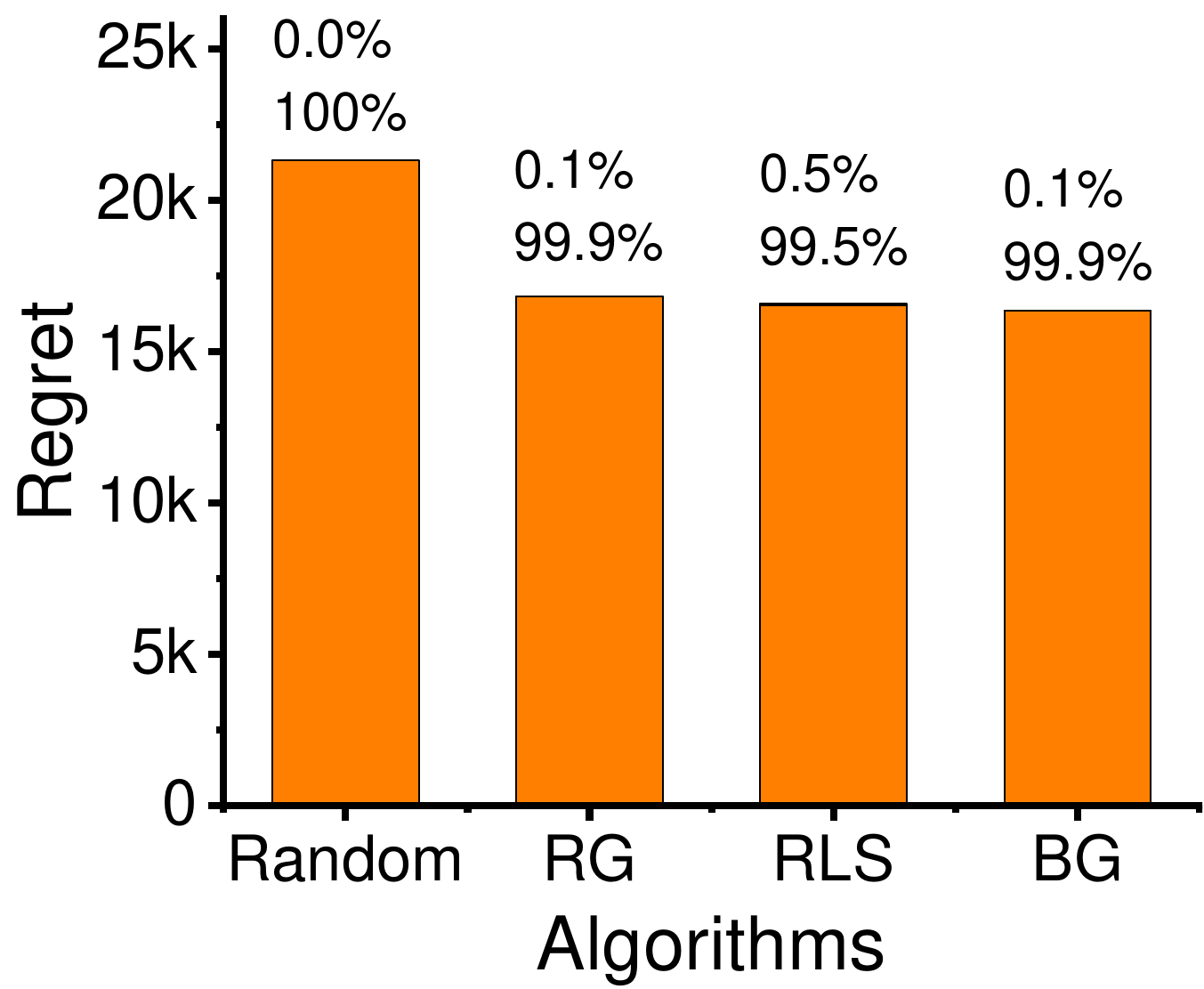} &
\includegraphics[scale=0.156]{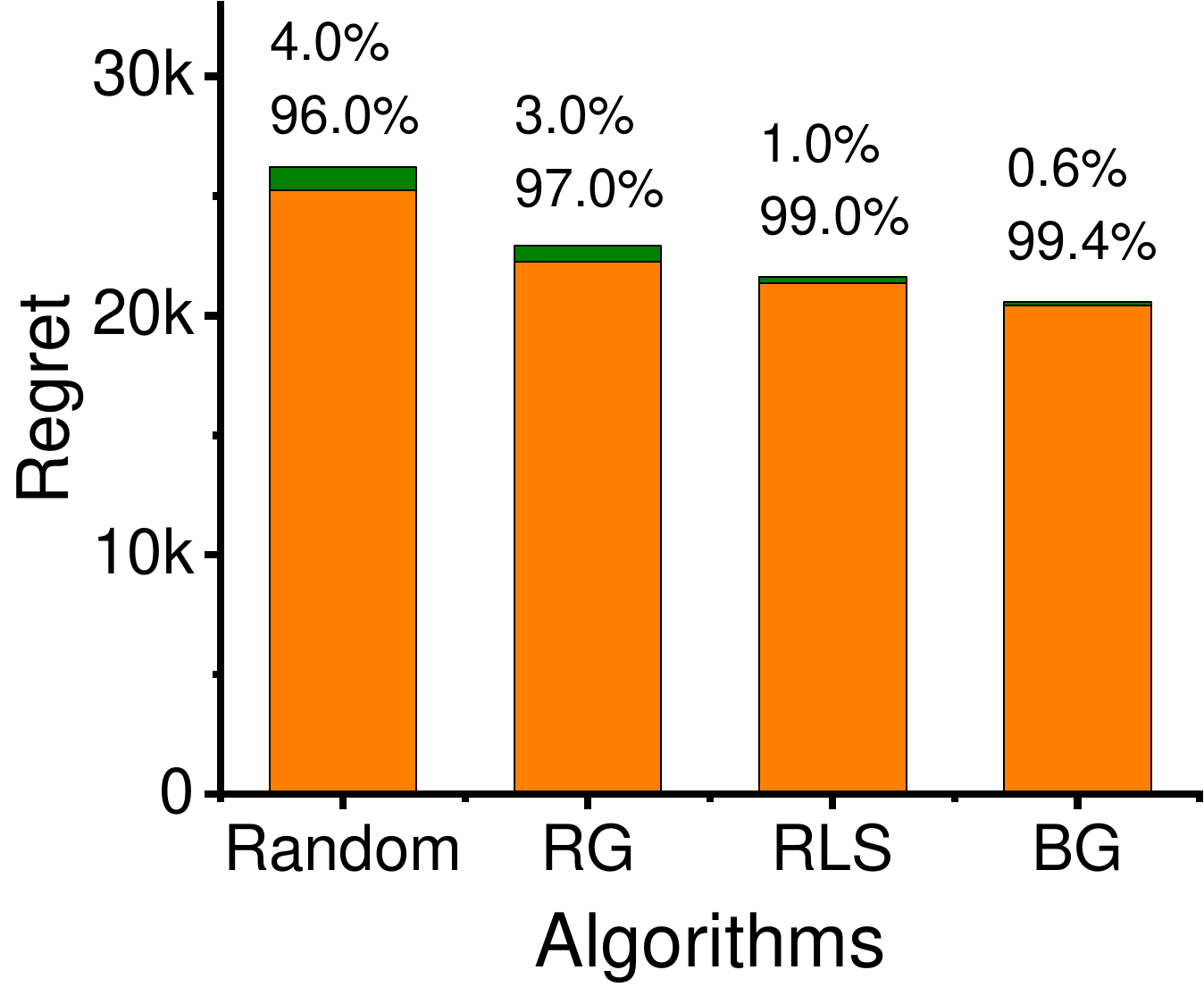} &
\includegraphics[scale=0.156]{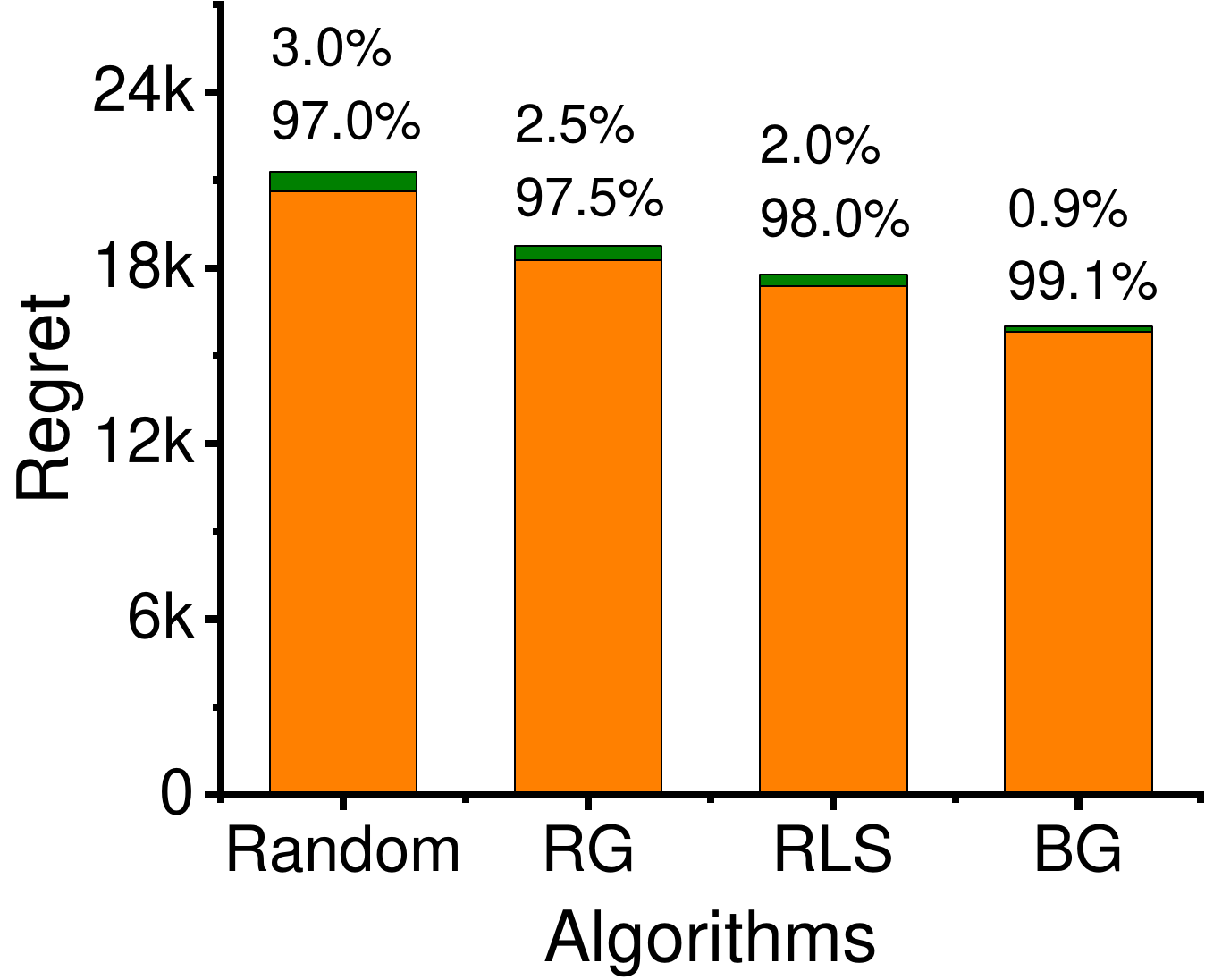}\\
\tiny{(a) $\delta = 0$} &  \tiny{(b) $\delta = 0.25$} & \tiny{(c) $\delta = 0.5$} & \tiny{(d) $\delta = 0.75\%$} & \tiny{(e) $\delta = 1$}\\
\end{tabular}
\caption{ Regret on varying  penalty ratio $(\delta)$ when $\alpha = 100\%, \beta = 5\%, |\mathcal{A}| = 20$ (LA Dataset)}
\label{Fig:Plotdelta}
\end{figure*}

\paragraph{\textbf{Case 4: High $\alpha$, High $\beta$ of Figure \ref{Fig:Plot4-5NYC} (d,e,i,j)}} Corresponding to Case 4, we have $\alpha \geq 100\%$ and $\beta \geq 10\%$. This refers to a situation where both global and individual demands are very high. The influence provider has a small number of advertisers, each with high individual demand. We have two main observations. \textbf{First}, with large $\alpha$ and $\beta$ value all the algorithms suffers from higher unsatisfied regret. The `RG', `RLS', and `BG' suffer from almost identical overall regret, whereas the `Random' suffers from higher overall regret, in which most of the portion comes from unsatisfied regret. \textbf{Second}, the total demand is very high compared to supply when $\alpha = 120\%$, i.e., the influence provider does not have sufficient slots to satisfy the demand of the advertisers.

\subsubsection{\textbf{Experiments over LA Dataset}}
Figure \ref{Fig:Plot1LA} shows the experimental results over the LA dataset. We have observed that the LA dataset has more billboard slots than the NYC dataset. The slots in the LA are low-influence, and the average influence is lower. Compared with NYC, slot influences in LA are more uniformly distributed. Moreover, the overlap of influence among different slots is smaller. We have two main observations.\textbf{ First}, the results for the LA dataset look quite similar to those for NYC, but the proportion of excessive regret is smaller across all algorithms. This is mainly because slot influence in LA is lower, and there is less overlap among slots, which makes it easier for the influence provider to place them more efficiently and accurately. \textbf{Second}, since LA has more slots and a lower average influence per billboard, `RLS' can explore more potential allocation with them. As a result, similar to the trend observed in NYC, `RLS' effectively reduces excessive influence in LA.

\par \textbf{Revisit RQ1 and RQ2.} Based on our observations from these cases, we answer the two research questions that arose earlier.
(1) When global demand is low, total regret comes from excessive and unsatisfied influence. The excessive influence is also an essential part of the overall regret. In this case, advertiser selection becomes crucial. When the total demand is close to the supply, the excessive influence decreases, and the unsatisfied influence increases. When total demand exceeds the influence provider's supply, total regret is high, and a major portion of it stems from unsatisfied influence. (2) When global demand is very high or exceeds supply, regret is dominated by unsatisfied penalties. In this setting, having many medium-demand advertisers works best, as it provides greater flexibility in allocating billboards, while the penalty for missing a single large advertiser remains relatively small. Hence, the large number of advertisers with low influence demand will be beneficial for the influence provider.

\subsubsection{\textbf{Efficiency Study}}
In outdoor advertising, efficiency is important for an influence provider with more than a thousand billboard slots in a city and several advertisers approaching daily for influence. Hence, we conduct an efficiency study under different values of $\alpha$ and $\beta$. The results are reported in Figure \ref{Fig:PlotTime} for the NYC and LA datasets. We have three main observations. \textbf{First}, among the proposed approaches, `BG' has a larger runtime than `RG' and `RLS'. This happens because, at every iteration to allocate a slot, `BG' computes excessive regret computation. On the other hand, the `RG' and `RLS' use randomization during regret calculation and take less time. The `Random' approach has the minimum runtime due to uniformly random allocation. \textbf{Second}, with increasing $\alpha$, the computational time for all the algorithms increases due to the demand of the advertisers increasing, and the influence provider needs to allocate more slots.\textbf{ Third}, as $\beta$ increases, the individual demand of advertisers increases. Hence, runtime also increases for all the algorithms, and `RLS' has to explore more search during allocation.

\subsubsection{\textbf{Other Parameter Study}}\label{subsec:otherparameter}
We discuss the additional parameters used in our experiments.\textbf{ First}, the distance parameter $\gamma$ determines the range in which a billboard slot can influence the trajectories. With increasing $\gamma$, the influence of a slot increases as slots can influence a larger number of trajectories. In our experiment, we set the default value of $\gamma$ to $100$ meters. \textbf{Second}, the sampling parameter $\epsilon$ decides the size of the random sample set. When $\epsilon$ increases, the sample set size decreases, and the run time of `RG' and `RLS' also decreases. At the same time, the solution's quality also degrades. If the $\epsilon$ value decreases, the quality of the solution becomes better because the sample set size becomes larger. We set $\epsilon = 0.01$ as the default for all experiments. \textbf{Third}, the penalty ratio $\delta$ controls the fraction of the penalty on the payment when the advertiser is not satisfied. When $\delta$ is very small, the influence provider suffers greater regret. With increasing $\delta$ the regret for all the algorithms decreases as shown in Figure \ref{Fig:Plotdelta}. We set $\delta = 0.5$ as the default setting for all experiments. 

\section{Concluding Remarks} \label{Sec:Conclusion}
In this work, we studied the Tag-Specific Regret Minimization in Outdoor Advertising (TRMOA) problem from the perspective of the influence provider. Unlike traditional influence maximization approaches, our model captures both unsatisfied and excessive regret under tag-specific influence demands. We proved that the problem is NP-hard and proposed three scalable heuristic algorithms, each supported by adaptive tag selection. Experiments on real-world NYC and LA datasets show that our methods significantly reduce total regret while remaining computationally efficient. The results also provide practical information on how demand–supply dynamics and advertiser demand distribution affect regret behavior. This work lays the foundation for regret-aware allocation strategies in real-world outdoor advertising systems.

\bibliographystyle{IEEEtran}
\bibliography{Paper.bib}
\end{document}